\def\maketitle{
\@author@finish
\title@column\titleblock@produce
\suppressfloats[t]}
\newtheorem{theorem}{Theorem}
\newtheorem*{conjecture}{Conjecture}
\DeclareMathOperator{\sgn}{sgn}
\DeclareMathOperator{\si}{si}
\DeclareMathOperator{\atantwo}{atan2}
\begin{document}

\phantomsection\addcontentsline{toc}{part}{All three-angle variants of Tsirelson's precession protocol, and improved bounds for wedge integrals of Wigner functions}

\title{All three-angle variants of Tsirelson's precession protocol,\texorpdfstring{\\}{ }and improved bounds for wedge integrals of Wigner functions}

\author{Lin Htoo Zaw}
\email[Electronic mail: ]{htoo@zaw.li}
\affiliation{Centre for Quantum Technologies, National University of Singapore, 3 Science Drive 2, Singapore 117543}

\author{Valerio Scarani}
\affiliation{Centre for Quantum Technologies, National University of Singapore, 3 Science Drive 2, Singapore 117543}
\affiliation{Department of Physics, National University of Singapore, 2 Science Drive 3, Singapore 117542}

\begin{abstract}
Tsirelson's precession protocol is a nonclassicality witness that can be defined for both discrete and continuous variable systems. Its original version involves measuring a precessing observable, like the quadrature of a harmonic oscillator or a component of angular momentum, along three equally-spaced angles. In this work, we characterise all three-angle variants of this protocol. For continuous variables, we show that the maximum score $\mathbf{P}_3^\infty$ achievable by the quantum harmonic oscillator is the same for all such generalised protocols. We also derive markedly tighter bounds for $\mathbf{P}_3^\infty$, both rigorous and conjectured, which translate into improved bounds on the amount of negativity a Wigner function can have in certain wedge-shaped regions of phase space. For discrete variables, we show that changing the angles significantly improves the score for most spin systems. Like the original protocol, these generalised variants can detect non-Gaussian and genuine multipartite entanglement when applied on composite systems. Overall, this work broadens the scope of Tsirelson's original protocol, making it capable to detect the nonclassicality and entanglement of many more states.
\end{abstract}

\maketitle

\section{Introduction}
For single-particle systems governed by Hamiltonians up to quadratic order, like a particle in free space or a harmonic oscillator, Ehrenfest's theorem tells us that the dynamical equations of the mean value of observables are exactly the same for both quantum and classical theories \cite{Ehrenfest}. Counter-intuitively, Tsirelson introduced a protocol for systems with uniformly-precessing coordinates---i.e. quadratures of a harmonic oscillator or coordinates undergoing spatial rotations---that can positively distinguish quantum systems from classical ones \cite{tsirelson-og}. He showed that if a uniformly-precessing coordinate of a classical system is measured at a randomly-chosen $\theta \in \{0,2\pi/3,4\pi/3\}$, the probability $P_3$ that the coordinate is positive is constrained by the inequality $P_3 \leq \mathbf{P}_3^c = 2/3$. The nonclassicality of a quantum system can therefore be certified by preparing a suitable state of the system that violates this inequality. This protocol depends on the assumption that both the dynamics and coordinate measurements of the system are well-characterised, but does not require the sequential measurements needed for Leggett--Garg inequalities \cite{LG-inequalities}, nor the simultaneous measurements needed for tests of noncontextuality \cite{noncontextuality}.

For the quantum harmonic oscillator, violation of the classical bound is related to the Wigner function, a quasiprobability distribution that has all the same properties as a classical joint distribution defined over phase space, except that it is allowed to be negative \cite{WignerNegativityVolume}. More precisely, the amount of violation quantifies how negative the Wigner function is over integrals of wedge-shaped phase space regions \cite{tsirelson-spin}. Using this relation, the maximum score $\mathbf{P}_3^\infty$ of the quantum harmonic oscillator was upper-bounded by Tsirelson to be $\mathbf{P}_3^\infty < 1$ \cite{tsirelson-og}, and upper-bounded using Werner's bounds of single-wedge Wigner integrals \cite{wedgeIntegralWerner} to be $\mathbf{P}_3^\infty \leq 0.822\,607$ \cite{tsirelson-spin}.

Tsirelson's precession protocol has also been extended to other oscillating systems, like spin angular momentum \cite{tsirelson-spin} and general theories \cite{tsirelson-general}, and has recently been experimentally verified in the nuclear spin of antimony \cite{tsirelson-experiment}. For continuous variables, the protocol has also been shown to be useful for detecting non-Gaussian entanglement---entanglement present in states whose Wigner functions are not Gaussian functions---of coupled harmonic oscillators using only quadrature measurements \cite{tsirelson-harmonic-entanglement}. For discrete variables, it detects genuine multipartite entanglement---entanglement that cannot be simulated by probabilistic mixtures of unentangled states---of spin ensembles using only measurements of total angular momentum \cite{tsirelson-spin-entanglement}.

Some generalisations of the original inequality have also been introduced. There, instead of equally-spaced angles, the uniformly-precessing coordinates are measured at three unequally-spaced angles, where the spacing between the angles are chosen using some other criteria \cite{tsirelson-anharmonic,tsirelson-inequalities}. Although Tsirelson did not study these generalisations, he alluded to them in a concluding remark that his upper bound $\mathbf{P}_3^\infty < 1$ holds even when the probing angles were not multiples of $2\pi/3$ \cite{tsirelson-og}.

In this work, we characterise all three-angle variants of Tsirelson's precession protocol. In ``Three-angle Precession Protocols and the Classical Inequality'', we define the protocols, discuss their symmetries, and show that the variants introduced in Refs.~\cite{tsirelson-anharmonic,tsirelson-inequalities} belong to this family. We first study the protocols on the quantum harmonic oscillator in ``Quantum Harmonic Oscillator'', where we show that all the variants achieve the same maximum score $\mathbf{P}_3^\infty$, albeit for different states. We also prove the rigorous bounds $0.709\,364 \leq \mathbf{P}_3^\infty \leq 0.730\,822$, much tighter than the previously known ones. These in turn lead to improved bounds for triple-wedge integrals over Wigner functions, tightening the fundamental limits on the amount of Wigner negativity that can be present in certain sectors of phase space. In ``Spin Angular Momentum'', we study the protocols on spin angular momenta. The maximal violation for a given spin exhibit a triangular symmetry in the parameter space, leading to an improved score for many values of the spin and removing some awkward features observed when sticking to the original protocol \cite{tsirelson-spin}. Extrapolating the limiting behaviour of the spin score, we also give the conjectured bounds $0.709\,364 \leq \mathbf{P}_3^\infty \leq 0.709\,511$ on the score of the harmonic oscillator.

When applied to composite systems, the original precession protocol was proven to be a witness of non-Gaussian entanglement for continuous variables \cite{tsirelson-harmonic-entanglement}, and of genuine multipartite entanglement for spin systems \cite{tsirelson-spin-entanglement}. In ``Implications on Detecting Entanglement'', we prove that this holds also for the generalised variants. Furthermore, we show that the separable bounds for the coupled harmonic oscillators are the same as those that were previously found with the original protocol, and that there are entangled states for both the oscillators and spin ensembles that can only be detected by a generalised protocol that is not the original one. Although we focus on the theoretical study of the protocol in this work, we consider the impact of limited measurement precision in ``Impact of Limited Measurement Precision'', which might be encountered in experimental implementations of the protocol. Finally, we also briefly touch upon the precession protocol with more than three angles in ``Implications and Outlook on Protocols with More Angles''. There, we identify the aspects of our results that are also applicable in the larger family of protocols, and provide heuristic strategies that simplify the task of finding maximally-violating states.

\section{Results}
\subsection{\label{sec2}Three-angle Precession Protocols and the Classical Inequality}
Tsirelson's precession protocol and its generalisations involve uniformly-precessing observables. A pair of observables $A_x(\theta)$ and $A_y(\theta)$ with a parametric dependence on $\theta$ are uniformly-precessing if they satisfy
\begin{equation}\label{eq:precession-condition}
    \pmqty{
        A_x(\theta) \\
        A_y(\theta)
    } = \pmqty{
        \cos\theta & \sin\theta \\
        -\sin\theta & \cos\theta
    }\pmqty{
        A_x(0)\\
        A_y(0)
    }.
\end{equation}
Examples include the position $A_x(\theta) \propto X(t=\theta T/2\pi)$ and momentum $A_y(\theta) \propto  P(t=\theta T/2\pi)$ of a harmonic oscillator with period $T$, and the angular momentum $A_x(\theta) = (\cos\theta,\sin\theta,0)\cdot\vec{J}$ of a particle along the spatial direction specified by the unit vector $(\cos\theta,\sin\theta,0)$, where $\vec{J}=(J_x,J_y,J_z)$ is the angular momentum vector, with $A_y(\theta)$ similarly defined.

Each member of the family of three-angle variants of Tsirelson's precession protocol is defined for a choice of three probing angles $\{\theta_k\}_{k=0}^2$. Up to an arbitrary offset $\theta_k \to \theta_k - \theta_0$, the probing angles can be labelled $0 =: \theta_0 \leq \theta_1 \leq \theta_2 < 2\pi$. Without any loss of generality, each protocol is therefore fully specified by the vector $\vec{\theta} := (\theta_1,\theta_2)$.

The protocol for a given $\vec{\theta}$ is applied to a pair of uniformly-precessing observables $A_x(\theta)$ and $A_y(\theta)$, and involves many independent rounds. In each round,
\begin{enumerate}
    \item The system is prepared in some state. The stability of the preparation is required to achieve a high score, but the validity of the protocol does not require the assumption that the state is the same in each round.
    \item $k \in \{0,1,2\}$ is chosen with a procedure that is uncorrelated from the preparation. The sampling distribution does not matter in the limit of large samples, as long as the average score of the protocol can be estimated to the desired precision.
    \item The coordinate $A_x(\theta_k)$ is measured: the score is $1$ if $A_x(\theta_k) > 0$, $1/2$ if $A_x(\theta_k) = 0$, or $0$ if $A_x(\theta_k) < 0$. For now, take the measurement outcomes of $A_x(\theta_k)$ to be precise enough for this score assignment to be umambigious. The impact of limited measurement precision is discussed in ``Impact of Limited Measurement Precision''.
\end{enumerate}
The round ends when the system is measured, so it does not matter if the measurement is performed destructively or not. After many rounds, we calculate the average score $P(\vec{\theta})$ of the protocol
\begin{equation}
\begin{aligned}
    P(\vec{\theta}) &:= \frac{1}{3}\sum_{k=0}^{2}\Bqty{
        \Pr[A_x(\theta_k) > 0] + \frac{1}{2}\Pr[A_x(\theta_k)=0]
    } \\
    &= \frac{1}{3}\sum_{k=0}^{2}\ev{\Theta[A_x(\theta_k)]},
\end{aligned}
\end{equation}
where $\Theta(x)$ is the Heaviside function
\begin{equation}
    \Theta(x) = \begin{cases}
        1 & \text{if $x > 0$,} \\
        1/2 & \text{if $x = 0$,}\\
        0 & \text{otherwise.}
    \end{cases}
\end{equation}
The original precession protocol introduced by Tsirelson considered $\{\theta_k\}_{k=0}^2 = \{2\pi k/3\}_{k=0}^2$. Later generalisations of the protocol include the case $\{\theta_k\}_{k=0}^2 = \{\theta_1 k\}_{k=0}^2$ with $\pi/2 \leq \theta_1 \leq \pi$ \cite{tsirelson-anharmonic} and the Type I inequality with $\{\theta_k\}_{k=0}^2 = \{4\pi k/5\}_{k=0}^2$ up to an offset of the probing angle \cite{tsirelson-inequalities}. Meanwhile, Ref.~\cite{tsirelson-inequalities} also introduced the Type II inequality, where they considered the quantity $P_{\operatorname{II}} = \sum_{k=0}^2 (-1)^k \langle \Theta[A_x(2\pi k/5)] \rangle$, again defined up to an offset of the probing angles. With the relation $\Theta(x) + \Theta(-x) = 1$ and $-\exp(i\theta) = \exp[i(\theta+\pi)]$, we can rewrite $-\langle\Theta[A_x(2\pi/5)]\rangle
= \langle\Theta[A_x(7\pi/5)]\rangle - 1$, and so $P_{\operatorname{II}} = 3 P((4\pi/5,7\pi/5))-1$. Hence, previously studied variants of Tsirelson's precession protocol are all members of this family of three-angle protocols.

In classical theory, the possible measurement outcomes of the observables $A_x$ and $A_y$ are specified by a joint probability density $f(A_x,A_y) \geq 0$ normalised as $\iint_{\mathbb{R}^2}\dd{A_x}\dd{A_y} f(A_x,A_y) = 1$, such that the expectation value of any function $g(A_x,A_y)$ of $A_x$ and $A_y$ is given by $\ev{g(A_x,A_y)} = \iint_{\mathbb{R}^2}\dd{A_x}\dd{A_y} f(A_x,A_y) g(A_x,A_y)$. Hence, the classical score is constrained by the inequality $P^c(\vec{\theta}) \leq \mathbf{P}^c(\vec{\theta})$, where
\begin{equation}
\begin{aligned}
    \mathbf{P}^c(\vec{\theta}) &:=
    \max_{\substack{
        f(A_x,A_y):f(A_x,A_y)\geq 0 \\
        \iint_{\mathbb{R}^2}\dd{A_x}\dd{A_y} f(A_x,A_y) = 1
    }}\iint_{\mathbb{R}^2}\dd{A_x}\dd{A_y}f(A_x,A_y) \\
    &\qquad{}\times{}\frac{1}{3}\sum_{k=0}^2\Theta\pqty{
        A_x \cos\theta_k  +
        A_y \sin\theta_k
    }
    \\
    &=
    \max_{(A_x',A_y')\in\mathbb{R}^2}\iint_{\mathbb{R}^2}\dd{A_x}\dd{A_y}\delta(A_x-A_x')\delta(A_y-A_y') \\
    &\qquad{}\times{}\frac{1}{3}\sum_{k=0}^2\Theta\pqty{
        A_x \cos\theta_k  +
        A_y \sin\theta_k
    }
    \\
    &= \max_{(A_x,A_y)\in\mathbb{R}^2}\frac{1}{3}\sum_{k=0}^2\Theta\pqty{
        A_x \cos\theta_k  +
        A_y \sin\theta_k
    }.
\end{aligned}
\end{equation}
Here, we have used convexity to simplify the maximisation over all probability distributions to a maximisation over just the extremal distributions $\delta(A_x-A_x')\delta(A_y-A_y')$, where $\delta(x)$ is the Dirac delta distribution. As defined, $\mathbf{P}^c(\vec{\theta})$ is the maximum score achievable in classical theory. For the original choice of angles $\vec{\theta}_3 := (2\pi/3,4\pi/3)$, Tsirelson showed that $\mathbf{P}_3^c := \mathbf{P}^c(\vec{\theta}_3) = 2/3$ \cite{tsirelson-og}. As there are quantum states that achieve the score $P_3 := P(\vec{\theta}_3) > \mathbf{P}_3^c$, nonclassicality can be certified by observing the violation of the classical inequality \cite{tsirelson-og,tsirelson-spin}.

Similarly, performing the protocol with an arbitrary choice of angles $\vec{\theta}$ and observing $P(\vec{\theta}) > \mathbf{P}^c(\vec{\theta})$ certifies the nonclassicality of the system. This requires working out the dependence of the maximum classical score $\mathbf{P}^c(\vec{\theta})$ on the probing angles $\vec{\theta}$. Intuitively, if the probing angles are spread apart far enough, $A_x(\theta)$ has to cross the line $A_x = 0$ at least once, thus $\mathbf{P}^c(\vec{\theta}) = 2/3$; but if they are too close together, $A_x(\theta)$ can remain on $A_x > 0$ plane, thus $\mathbf{P}^c(\vec{\theta}) = 1$. Indeed, we formally prove in the Supplementary Information that
\begin{equation}\label{eq:condition}
    \mathbf{P}^c(\vec{\theta}) = \begin{cases}
        2/3 = \mathbf{P}_3^c & \text{if $\forall k : \theta_{k \oplus_{3} 1} \ominus_{_{2\pi}} \theta_k \leq \pi$,} \\
        1 & \text{otherwise;}
    \end{cases}
\end{equation}
where $x\,\ooalign{\hidewidth$\pm$\hidewidth\crcr$\bigcirc$}_{\!m}\, y = (x\pm y)\bmod m$ are addition and subtraction modulo $m$. The maximum classical score $\mathbf{P}^c(\vec{\theta})$ is plotted for the full parameter space against $\vec{\theta}$ in Fig.~\ref{fig:changeOfVariable}(a).

\begin{figure}
    \centering
    \includegraphics{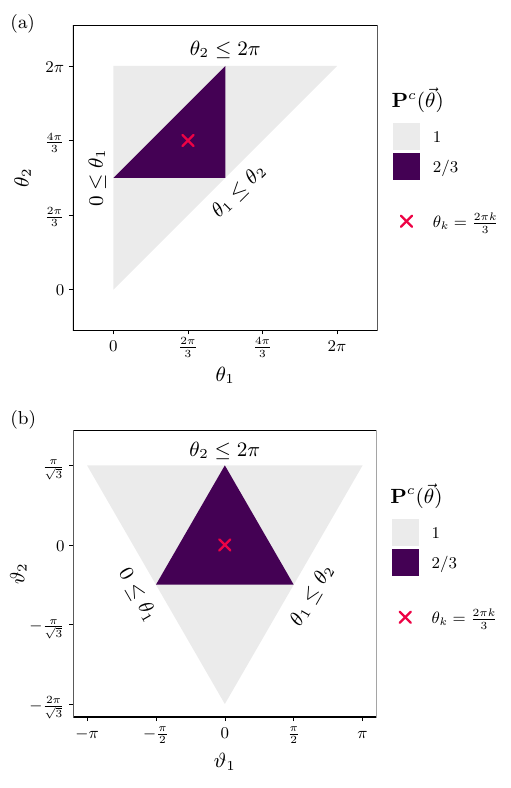}
    \caption{\label{fig:changeOfVariable}\textbf{Maximum classical score $\mathbf{P}^c(\vec{\theta})$ against probing angles.} The scores are plotted with respect to both (a) $\vec{\theta}$ and (b) $\vec{\vartheta}$. The score invariance under the equivalent probing angles given in Eq.~\eqref{eq:angles-equivalence} appear in the $\vec{\vartheta}$ coordinate as $2\pi/3$ rotations around the origin in (b).}
\end{figure}

Note also that the probing angles
\begin{equation}\label{eq:angles-equivalence}
\begin{aligned}
    \{0,\theta_1,\theta_2\} &\leftrightarrow \{- \theta_1,0,\theta_2-\theta_1\} \bmod 2\pi \\
    &\leftrightarrow\{ - \theta_2,\theta_1-\theta_2,0\} \bmod 2\pi
\end{aligned}
\end{equation}
are equivalent up to an offset $\theta_{k_0} \in \{\theta_k\}_{k=0}^2$. As such, it is more illustrative to plot the parameter space with alternate coordinates $\vec{\vartheta}$, defined as
\begin{equation}
    \vec{\vartheta} := \pmqty{
        1 & -1/2\\
        0 & \sqrt{3}/2
    }(\vec{\theta}-\vec{\theta}_3).
\end{equation}
In this coordinate, the equivalences in Eq.~\eqref{eq:angles-equivalence} become symmetry transformations
\begin{equation}\label{eq:angles-symmetry}
\vec{\vartheta} \leftrightarrow \mathcal{R}_{\frac{2\pi}{3}}^{-1}\vec{\vartheta}  \leftrightarrow \mathcal{R}_{\frac{2\pi}{3}} \vec{\vartheta},
\end{equation}
where $\mathcal{R}_{\theta} = \spmqty{\cos\theta&-\sin\theta\\\sin\theta&\cos\theta}$ is the rotation matrix in two dimensions. These alternate coordinates are again used to plot $\mathbf{P}^c(\vec{\theta})$ in Fig.~\ref{fig:changeOfVariable}(b).

For the rest of the paper, we shall use both $\vec{\theta}$ and $\vec{\vartheta}$ interchangeably. $\vec{\theta}$ is favoured when referring to the actual probing angles to be used when performing the protocol, while $\vec{\vartheta}$ is favoured when visualising quantities in the full parameter space, as it better reflects the symmetries and geometric features of the protocol.

\subsection{\label{secharm}Quantum Harmonic Oscillator}

In natural units, the quantum harmonic oscillator is governed by the Hamiltonian $H = (X^2 + P^2)/2$, where the position $X$ and momentum $P$ satisfy the canonical commutation relation $\comm{X}{P} = i$. The evolution of $X$ in time $\theta$---equivalently the $\theta$-quadrature of a bosonic field---is given by $X(\theta) = e^{i H \theta} X e^{-i H \theta} = X \cos\theta + P \sin\theta$.

As such, we can perform the precession protocol with the position of the quantum harmonic oscillator, where the maximum achievable score $\mathbf{P}^\infty(\vec{\theta})$ is
\begin{equation}\label{eq:score-operator}
    \mathbf{P}^\infty(\vec{\theta}) := \max_\rho \frac{1}{3}\sum_{k=0}^2 \tr\Bqty{\rho\Theta[X(\theta_k)]} =:
    \big\|{Q^{\infty}(\vec{\theta})}\big\|_{\infty},
\end{equation}
where $3Q^{\infty}(\vec{\theta}) := \sum_{k=0}^2 \Theta[X(\theta_k)]$ and $\|A\|_{\infty}$ is the spectral norm of $A$, which is equal to its largest eigenvalue for positive semidefinite $A \succeq 0$, as is the case for ${Q^{\infty}(\vec{\theta})}$. Meanwhile, the Heaviside function of a Hermitian operator is specified by its spectral decomposition: for any scalar function $g(a)$ and observable $A = \sum_k a_k \ketbra{a_k} + \int\dd{a}a \ketbra{a}$,
\begin{equation}\label{eq:scalar-function-operator}
    g(A) = \sum_k g(a_k) \ketbra{a_k} + \int\dd{a} g(a)\ketbra{a}.
\end{equation}

While it was known that there are states of the quantum harmonic oscillator that violate some of the generalised precession protocols, an open question remained on whether a larger violation is possible by changing the probing times \cite{tsirelson-inequalities}.

Since no violation of the classical bound is possible when $\mathbf{P}^c(\vec{\theta}) = 1$, let us focus on the subset $\triangle := \{\vec{\theta}:\mathbf{P}^c(\vec{\theta}) = 2/3\}$ of the parameter space, which corresponds to the inner purple triangle in Fig.~\ref{fig:changeOfVariable}. In the Supplementary Information, we show that the generalised protocols are related to the original one for $\vec{\theta} \in \triangle$ as
\begin{equation}\label{eq:symplectic-relation}
    Q^\infty(\vec{\theta}) = \begin{cases}
        \frac{2}{3}U(\vec{\theta})\Theta(X) U^\dag(\vec{\theta}) & \text{if $\exists k :  \theta_{k \oplus_{3} 1} \ominus_{2\pi} \theta_k = \pi$} \\
        U(\vec{\theta})Q_3^{\infty} U^\dag(\vec{\theta}) & \text{otherwise,}
    \end{cases}
\end{equation}
where $Q_3^{\infty} := Q^\infty((2\pi/3,4\pi/3))$ and $U(\vec{\theta})$ is a symplectic unitary composed of squeeze and phase shift operations. Since unitary transformations leave eigenvalues invariant, Eq.~\eqref{eq:symplectic-relation} implies that $\mathbf{P}^\infty(\vec{\theta}) = \mathbf{P}^{\infty}(\vec{\theta}_3) =: \mathbf{P}_3^\infty$ if $\vec{\theta}$ is in the interior of $\triangle$, while $\mathbf{P}^\infty(\vec{\theta}) = \mathbf{P}^c_3$ if $\vec{\theta}$ is on the boundary of $\triangle$. This dependence of $\mathbf{P}^\infty(\vec{\theta})$ on $\vec{\theta}$ is plotted in Fig.~\ref{fig:QHM}.

This means that $\mathbf{P}_3^\infty$ is the maximum achievable by the quantum harmonic oscillator for any choice of three angles, including those that define previously proposed generalisations. This maximum will however be achieved by different states: as announced, modifying the probing angles broadens the usefulness of the protocol. We also note that bounds of quantum advantages in other mechanical tasks, like the backflow constant
\begin{equation}\label{eq:backflow}
\begin{aligned}
    c_{\operatorname{B}} &:=
     \hspace{-1em}\sup_{\rho:\tr[\rho\Theta(X)]=0}\hspace{-1em}
     \tr[\rho\pqty{\Theta\bqty{X(\pi/4)}-\Theta\bqty{X(\pi/2)}}] \\
    &=
    \hspace{-1em}\sup_{\rho:\tr[\rho\Theta(X)]=0}\hspace{-1em}
    \tr[\rho\pqty{
        \Theta[X(\pi/4)] -
        \Theta[X(\pi/2)] - \Theta[X(0)]
    }] \\
    &\leq \sup_{\rho}\tr[
        \rho\pqty{
        \Theta[X(\pi/4)] +
        \Theta[X(3\pi/2)] +
        \Theta[X(\pi)]
    }] - 2  \\
    &= 3\mathbf{P}^\infty((3\pi/4,5\pi/4)) - 2,
\end{aligned}
\end{equation}
which is associated with probability backflow \cite{quantum-backflow} and quantum projectiles \cite{quantum-rockets}, is a special case of $\mathbf{P}^{\infty}(\vec{\theta})$ as labelled in Fig.~\ref{fig:QHM}.

\begin{figure}
    \centering
    \includegraphics{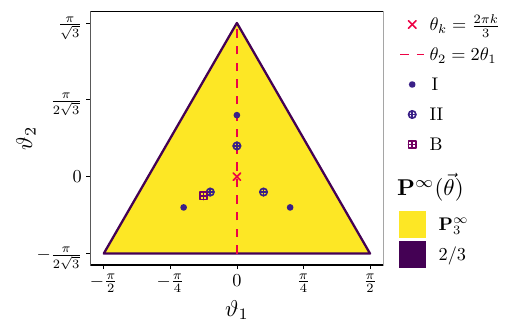}
    \caption{\label{fig:QHM} \textbf{Maximum scores $\mathbf{P}^\infty(\vec{\theta})$ for the quantum harmonic oscillator within the region where $\mathbf{P}^c < 1$ (compare to Fig.~\ref{fig:changeOfVariable}).} The score is $\mathbf{P}_3^\infty$ in the interior and $\mathbf{P}_3^c$ along the boundary. The red diagonal cross marks the original protocol with probing angles $\theta_k=2\pi k/3$ \cite{tsirelson-og}, while the red dashed lines mark $\theta_k = \theta_1 k$ \cite{tsirelson-anharmonic}. Up to a constant offset and a rescaling of $\mathbf{P}^\infty(\vec{\theta})$, the family of three-angle precession protocols also contains the Type I (solid circle) and II (circle with vertical cross) inequalities introduced in Ref.~\cite{tsirelson-inequalities}, and also provide bounds to the degree of quantum advantage in probability backflow and quantum projectiles \cite{quantum-backflow,quantum-rockets} labelled as ``B'' (square with vertical cross) in the figure.}
\end{figure}

\subsection{Relation to Negativity Volume and Wedge Integrals of Wigner Functions}
An alternate characterisation of continuous variable states is provided by the Wigner function, defined as \cite{wigner-review}
\begin{equation}
    W_\rho(x,p) := \frac{1}{2\pi i}\tr(\rho e^{i\frac{\pi}{2}[(X-x)^2+(P-p)^2)]} ),
\end{equation}
which has the property that for any scalar-valued function $g(a)$ of $a \in \mathbb{R}$,
\begin{equation}
\begin{aligned}
    &\ev{g(X\cos\theta + P\sin\theta)} \\
    &\qquad{}={} \iint_{\mathbb{R}^2}\dd{x}\dd{p} g(x\cos\theta + p\sin\theta) W_\rho(x,p),
\end{aligned}
\end{equation}
where the operator-valued function $g(A)$ of observable $A$ is specified by the action of the function on the spectral decomposition of its argument, as given in Eq.~\eqref{eq:scalar-function-operator}.

Therefore, the Wigner function acts like a probability density function of $x$ and $p$, although it is only a \emph{quasi}probability distribution as $X$ and $P$ are not jointly measurable in quantum theory. Nonetheless, it allows us to simultaneously study the classical and quantum harmonic oscillator in terms of an initial (quasi)probability distribution $F(x,p)$, where $F(x,p)$ is a joint probability density function in the classical case and $F(x,p) = W_\rho(x,p)$ in the quantum case. Then, the score $P(\vec{\theta})$ achieved by the oscillator is
\begin{equation}\label{eq:score-wigner}
\begin{aligned}
    P(\vec{\theta})
    &= \sum_{k=0}^2\frac{1}{3} \iint_{\mathbb{R}^2}\dd{x}\dd{p}\Theta\pqty{
        x\cos\theta_k +
        p\sin\theta_k
    } F(x,p) \\
    &= \mathbf{P}_3^c + \frac{1}{3}\pqty{\iint_{\Omega_+(\vec{\theta})}\dd{x}\dd{p} F(x,p)-1} \\
    &= \mathbf{P}_3^c - \frac{1}{3}\iint_{\Omega_-(\vec{\theta})}\dd{x}\dd{p} F(x,p),
\end{aligned}
\end{equation}
where $\Omega_{\pm}(\vec{\theta})$ are the phase space regions
\begin{equation}\label{eq:triple-wedge-definition}
\begin{aligned}
    \Omega_\pm(\vec{\theta}) = \big\{&(x,p) : \forall k \in\{0,1,2\} : \\
    &\qquad\theta_{k \oplus_3 1} \mp \frac{\pi}{2} \leq \atantwo(p,x) \leq \theta_k \pm \frac{\pi}{2} \big\},
\end{aligned}
\end{equation}
which are illustrated in Fig.~\ref{fig:wedge-integral}(c). Here, $\atantwo(p,x)$ is the principle value of the complex argument of $x + ip$; i.e., $-\pi \leq \atantwo(p,x) \leq \pi$ and $\sqrt{x^2+p^2} e^{i\atantwo(p,x)} = x + ip$.

\begin{figure}
    \centering
    \includegraphics[width=\columnwidth]{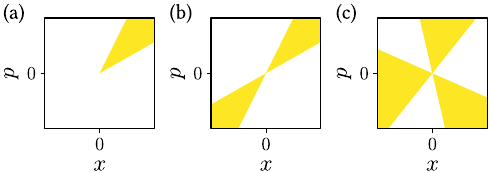}
    \caption{\label{fig:wedge-integral}\textbf{Wedge integrals of Wigner functions.} These are the (a) single $\Omega_1$, (b) double $\Omega_2$, and (c) triple $\Omega_3$ (equivalently $\Omega_\pm$ in Eq.~\eqref{eq:triple-wedge-definition}) wedges. The single wedge is a pointed sector stemming from the origin; while the double and triple wedges are, respectively, formed by two and three lines that cross the origin, and can also be seen as unions of single wedges. The exact angles do not matter, as different choices of angles are equivalent up to a symplectic transformation.}
\end{figure}

If $F(x,p)$ satisfies the properties of a joint probability density, that is if $0 \leq \iint_{\omega}\dd{x}\dd{p} F(x,p) \leq 1$ for every phase space region $\omega$, then $\abs{\frac{1}{2}-\iint_{\omega}\dd{x}\dd{p} F(x,p)} \leq \frac{1}{2}$ and Eq.~\eqref{eq:score-wigner} implies that $P(\vec{\theta}) \leq \mathbf{P}_3^c$. Therefore, the violation of the classical bound can be understood as the impossibility of assigning a classical probability distribution to $x$ and $p$. A quantification of this impossibility is the Wigner negativity volume \cite{WignerNegativityVolume}
\begin{equation}
\begin{aligned}
    \mathcal{N}_V(\rho) &:= -\iint_{\mathbb{R}^2}\dd{x}\dd{p}\Theta[-W_\rho(x,p)]W_\rho(x,p) \\
    &= \frac{1}{2}\iint_{\mathbb{R}^2}\dd{x}\dd{p}\bqty\big{\abs{W_\rho(x,p)}-W_\rho(x,p)},
\end{aligned}
\end{equation}
which is a measure of nonclassicality in the resource theories of non-Gaussianity and Wigner negativity \cite{NonGaussianResourceTheory1,NonGaussianResourceTheory2}.

The amount of violation of the precession protocol can therefore be interpreted as the lower bound for the negativity volume, since rearranging the last line of Eq.~\eqref{eq:score-wigner} gives $3P(\vec{\theta})-3\mathbf{P}_3^c \leq \mathcal{N}_V(\rho)$.

We can also rearrange the last two lines of Eq.~\eqref{eq:score-wigner} to
\begin{equation}
\begin{aligned}
    P^{\infty}(\vec{\theta}) - \frac{1}{2} &
    =
            \frac{1}{3}\pqty{
                \frac{1}{2} - \frac{1}{3}\iint_{\Omega_-(\vec{\theta})}\!\!\dd{x}\dd{p} W_\rho(x,p)
            }
    \\
    &= -\frac{1}{3}\pqty{\frac{1}{2} - \iint_{\Omega_+(\vec{\theta})}\!\!\dd{x}\dd{p} W_\rho(x,p)
    },
\end{aligned}
\end{equation}
which, together with $2\mathbf{P}_3^c - 1 = 1/3$ and $\Theta(x) + \Theta(-x) = 1 \implies \min_{F(x,p)}P(\vec{\theta}) = 1 - \max_{F(x,p)}P(\vec{\theta})$, implies
\begin{equation}\label{eq:wedge-integral-score-bounds}
    \abs{\frac{1}{2}-\iint_{\Omega_\pm(\vec{\theta})}\!\!\dd{x}\dd{p} W_\rho(x,p)} = \frac{
        \abs{P^{\infty}(\vec{\theta})-\frac{1}{2}}
    }{
        2(\mathbf{P}_3^c - \frac{1}{2})
    } \leq  \frac{
        \mathbf{P}^{\infty}_3 - \frac{1}{2}
    }{
        2(\mathbf{P}_3^c - \frac{1}{2})
    },
\end{equation}
so the maximum quantum score $\mathbf{P}_3^\infty$ also fundamentally bounds integrals of Wigner functions over certain phase space regions. In fact, integrals over $\Omega_\pm(\vec{\theta})$ take the form of wedge integrals over phase space regions: these include the (single) wedge $\Omega_1 = \{(x,p):\theta_1 \leq \atantwo(p,x) \leq \theta_2\}$ and the double wedge $\Omega_2 = \{(x,p): \theta_1 \leq \atantwo(p,x) \leq \theta_2 \lor \theta_1 \leq \atantwo(p,x)-\pi \leq \theta_2 \}$, both for $\theta_1 < \theta_2$, the latter so called because it is a union of two wedges. Both are illustrated in Fig.~\ref{fig:wedge-integral}. Bounds for the single and double wedge integrals have been worked out to be \cite{wedgeIntegralWerner,wedgeIntegralWoodBracken}
\begin{equation}
\begin{aligned}
    \sup_\rho \abs{\frac{1}{2}-\iint_{\Omega_1}\dd{x}\dd{p} W_\rho(x,p)} &\approx 0.655\,940,\\
    \sup_\rho \abs{\frac{1}{2}-\iint_{\Omega_2}\dd{x}\dd{p} W_\rho(x,p)} &\approx 0.736\,824.
\end{aligned}
\end{equation}
Analogously, $\Omega_3 = \Omega_{\pm}$ is a union of three wedges and therefore a ``triple wedge'', for which
\begin{equation}\label{3wedge}
    \sup_\rho \abs{\frac{1}{2}-\iint_{\Omega_3}\dd{x}\dd{p} W_\rho(x,p)} = \frac{
        \mathbf{P}^{\infty}_3 - \frac{1}{2}
    }{
        2(\mathbf{P}_3^c - \frac{1}{2})
    }.
\end{equation}

\subsection{\label{sec:improved-rigorous-bounds}Improved Rigorous Bounds}
Prior to this work, the best known rigorous bounds of the maximum quantum score were \cite{tsirelson-spin}
\begin{equation}\label{previous}
    0.708\,741 \leq \mathbf{P}_3^\infty \leq 0.822\,607.
\end{equation}
In particular, the old bound $0.822\,607$ was found by taking the triple wedge integral as the sum of three single wedge integrals, and using the single wedge bounds derived by \citet{wedgeIntegralWerner} to find
\begin{equation}\label{3wedgeWerner}
    \sup_\rho \abs{\frac{1}{2}-\iint_{\Omega_3}\dd{x}\dd{p} W_\rho(x,p)} \leq 0.967\,820,
\end{equation}
then using Eq.~\eqref{eq:score-wigner} to obtain the upper bound.

Now, we shall first present improved rigorous bounds of $\mathbf{P}_3^\infty$, then use them to improve the bound for the negativity of the triple wedge. To start, we define the observable
\begin{equation}
\begin{aligned}
    A_3 &:= \pqty\Big{Q_3^\infty-\pqty{1-\mathbf{P}_3^c}\mathbbm{1}}\pqty\Big{Q_3^\infty-\mathbf{P}_3^c\mathbbm{1}} \\
    &= \pqty{Q_3^\infty-\frac{1}{2}\mathbbm{1}}^2 - \pqty{\mathbf{P}_3^c-\frac{1}{2}}^2\mathbbm{1},
\end{aligned}
\end{equation}
from which lower and upper bounds of $\|A_3\|_{\infty}$ can be rearranged into lower and upper bounds of $\mathbf{P}_3^\infty$ using $\|A_3\|_{\infty} = (\mathbf{P}_3^\infty-1/2)^2-(\mathbf{P}_3^c-1/2)^2$.

The reason for working with $A_3$ instead of $Q_3^\infty$ is two-fold. First, the lower bound will be obtained by truncating the operator and finding its maximum eigenvalue, for which the sequence of lower bounds was found to converge faster with $A_3$ than with $Q_3^\infty$. Second, the upper bound will be obtained from the trace of the observable, which requires removing the values $\mathbf{P}_3^c$ and $1-\mathbf{P}_3^c$ that are in the continuous spectra of $Q_3^\infty$ \cite{tsirelson-og}.

In the Supplementary Information, we derive an expression for the matrix elements of $A_3$ in the number basis $\{\ket{n}\}_{n=0}^\infty$, where $(X^2+P^2)\ket{n} = (2n+1)\ket{n}$. $\mel{n}{A_3}{n'}$ is given in terms of the incomplete beta function and the generalised hypergeometric function, which are commonly-used special functions that can be computed with standard numerical libraries to arbitrary precision.

The lower bound $\mathbf{P}_3^{\leq}(\mathbf{n}) \leq \mathbf{P}_3^\infty$ is then obtained by truncating $A_3$ to the subspace spanned by the first $6\mathbf{n} + 1$ number states $\{\ket{n}\}_{n=0}^{6\mathbf{n}}$ and solving for its maximum eigenvalue. $\mathbf{P}_3^{\leq}(\mathbf{n})$ is plotted against $\mathbf{n}$ in Fig.~\ref{fig:lowerBound}. The largest lower bound we obtained is $\mathbf{P}_3^{\leq}(400) \geq 0.709\,364$: in fact, the sequence of lower bounds appears to saturate to this value, which would imply that $\lim_{\mathbf{n}\to\infty}\mathbf{P}_3^{\leq}(\mathbf{n}) = \mathbf{P}_3^\infty \approx 0.709\,364$. Furthermore, fitting the sequence of lower bounds to the ansatz $\mathbf{P}_3^{\leq}(\mathbf{n}) = \mathbf{P}_3^{\infty} - a_1 (\mathbf{n}+1)^{-\frac{1}{2}} - a_2 (\mathbf{n}+1)^{-\frac{3}{2}} - \mathcal{O}[(\mathbf{n}+1)^{-\frac{5}{2}}]$ gives $\mathbf{P}_3^{\infty} \approx 0.709\,364$, which corroborates the observation.

\begin{figure}
    \centering
    \includegraphics[width=\columnwidth]{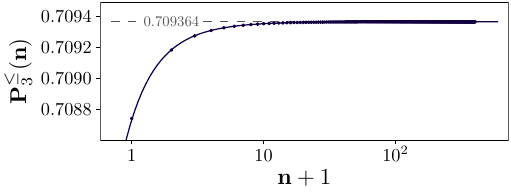}
    \caption{\label{fig:lowerBound}\textbf{Sequence of lower bounds $\mathbf{P}_3^{\leq}(\mathbf{n}) \leq \mathbf{P}_3^{\infty}$ against cutoff number $\mathbf{n}$.} The numerical fit with the ansatz $\mathbf{P}_3^{\leq}(\mathbf{n}) = \mathbf{P}_3^{\infty} - a_1 (\mathbf{n}+1)^{-\frac{1}{2}} - a_2 (\mathbf{n}+1)^{-\frac{3}{2}} - \mathcal{O}[(\mathbf{n}+1)^{-\frac{5}{2}}]$ is also plotted. The fitted parameters give $\mathbf{P}_3^{\infty}\approx  0.709\,364$, which corroborates the observation that $\lim_{\mathbf{n}\to\infty}\mathbf{P}_3^{\leq}(\mathbf{n}) = \mathbf{P}_3^{\infty}\approx 0.709\,364$.}
\end{figure}

Meanwhile, since $\Pi Q_3^\infty \Pi = \mathbbm{1} - Q_3^\infty$ and $\Pi A_3^\infty \Pi = A_3^\infty$ for the parity operator $\Pi$ that satisfies $\Pi X(\theta) \Pi = -X(\theta)$, every nonzero eigenvalue of $A_3$ is doubly degenerate. Therefore, an upper bound for $\mathbf{P}_3^\infty$ can be obtained using
\begin{equation}
\begin{aligned}
    \tr(A_3^2) = \frac{6\log 2}{(18\pi)^2} &\geq 2\|A_3\|_{\infty}^2 \\
    &= 2\bqty{\pqty{\mathbf{P}_3^\infty-\frac{1}{2}}^2-\pqty{\mathbf{P}_3^c-\frac{1}{2}}^2}^2.
\end{aligned}
\end{equation}
The analytical evaluation of the trace is given in the Supplementary Information. Relating this upper bound to $\mathbf{P}_3^\infty$ gives
\begin{equation}
    \mathbf{P}_3^\infty \leq \frac{1}{2}\pqty{1 + \frac{1}{3}\sqrt{1+\frac{2}{\pi}\sqrt{3\log2}}} \leq
    0.730\,822.
\end{equation}
In summary, we obtain rigorous lower and upper bounds \begin{align}
    0.709\,364 \leq \mathbf{P}_3^\infty \leq  0.730\,822,
\end{align} a stark improvement over Eq.~\eqref{previous}. The sequence as plotted in Fig.~\ref{fig:lowerBound} also strongly implies the tightness of the lower bound.

We finish by highlighting two consequences of the improved bounds. First, it is now rigorously proved that $\mathbf{P}_3^\infty<3/4$, which is a value that can be achieved with suitable states of spin $j=3/2$ and equally spaced angles (see Ref.~\cite{tsirelson-spin} and the section ``Spin Angular Momentum'' below). Thus, the score of the Tsirelson protocol is higher for finite-dimensional systems than for the quadratures $X(\theta)$ of continuous variable systems. Second, using the new bounds, the bound of the triple wedge integral Eq.~\eqref{3wedge} becomes
\begin{equation}
    0.628\,092 \leq \sup_\rho \abs{\frac{1}{2}-\iint_{\Omega_3}\dd{x}\dd{p} W_\rho(x,p)} \leq 0.692\,464,
\end{equation}
significantly tightening Eq.~\eqref{3wedgeWerner}.

\subsection{\label{secspin}Spin Angular Momentum}
The first generalisation of Tsirelson's original protocol extended it from phase space to real space by considering the precession of the angular momentum of a system \cite{tsirelson-spin}. A rotation of $J_x$ by an angle $\theta$ around the $-z$ axis reads
\begin{equation}
    J_x(\theta) = e^{-i\theta J_z} J_x e^{i\theta J_z} = J_x \cos\theta + J_y \sin\theta .
\end{equation}
The observed score upon performing the precession protocol on $J_x(\theta)$ is $P(\vec{\theta}) = {\tr}[\rho Q(\vec{\theta})]$, where
\begin{equation}
    Q(\vec{\theta}) = \frac{1}{3}\sum_{k=0}^2 e^{-i\theta_k J_z}  \Theta(J_x) e^{i\theta_k J_z}.
\end{equation}
In terms of the simultaneous eigenstates $\ket{j,m}$ of $|\vec{J}|^2$ and $J_z$ with eigenvalues $j(j+1)$ and $m$, respectively, the matrix elements of $Q(\vec{\theta})$ are known analytically, and it is also known that $Q(\vec{\theta}) = \bigoplus_j Q^{(j)}(\vec{\theta})$ can be decomposed into blocks with irreducible spin $j$ \cite{tsirelson-spin}.
As the score $P(\vec{\theta}) = {\tr}[\rho Q(\vec{\theta})] = \sum_j {\tr}[\rho Q^{(j)}(\vec{\theta})]$ is simply a sum of the expectation value on each $Q^{(j)}(\vec{\theta})$ block, we can restrict our analysis to the precession protocol performed for a fixed spin $j$.

For the rest of the paper, we shall denote the maximum quantum score for a fixed spin $j$ as $\mathbf{P}^{(j)}(\vec{\theta}) := \max_{\rho}{\tr}[\rho Q^{(j)}(\vec{\theta})]$, with analogous shorthands $Q^{(j)}_3 := Q^{(j)}(\vec{\theta}_3)$ and $\mathbf{P}^{(j)}_3 := \mathbf{P}^{(j)}(\vec{\theta}_3)$ for the original protocol.

Unlike the harmonic oscillator case, no continuous symplectic transformations exist for the angular momentum operators, so the choice of probing angles affects $\mathbf{P}^{(j)}(\vec{\theta})$, even in the interior region $\triangle$ where $\mathbf{P}^c(\vec{\theta}) = 2/3$. As such, it is of interest to find the choices of $\vec{\theta}$ that demonstrate large violations of the classical bound.

The maximum scores $\mathbf{P}^{(j)}(\vec{\theta})$, calculated with standard numerical tools to diagonalise the finite dimensional matrices $Q^{(j)}(\vec{\theta})$, are plotted against $\vartheta_1$ and $\vartheta_2$ as heatmaps for some select values of half integer spins in Fig.~\ref{fig:heatmap-half-spins} and integer spins in Fig.~\ref{fig:heatmap-integer-spins}.

\begin{figure*}
    \centering
    \includegraphics[width=\textwidth]{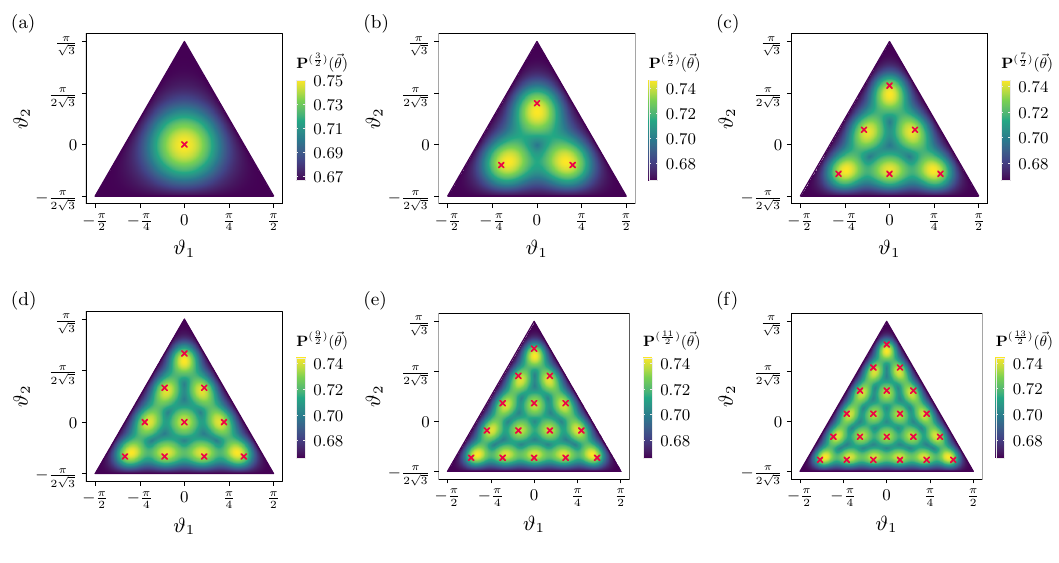}
    \caption{\textbf{Heatmaps of $\mathbf{P}^{(j)}(\vec{\theta})$ for half-integer spins plotted in the $\vec{\vartheta}$ coordinates.} The corresponding spins are (a) $j=3/2$, (b) $j=5/2$, (c) $j=7/2$, (d) $j=9/2$, (e) $j=11/2$, (f) $j=13/2$. Notice that the local peaks follow the triangle numbers $1,3,6,10,\dots$. The resonant angles $\vec{\theta}_{\Delta}^{(j)}(n_1,n_2) = \frac{2\pi}{2j} (n_1,n_2)$ for integers $n_1$ and $n_2$ are also crossed out in red.}
    \label{fig:heatmap-half-spins}
\end{figure*}

\begin{figure*}
    \centering
    \includegraphics[width=\textwidth]{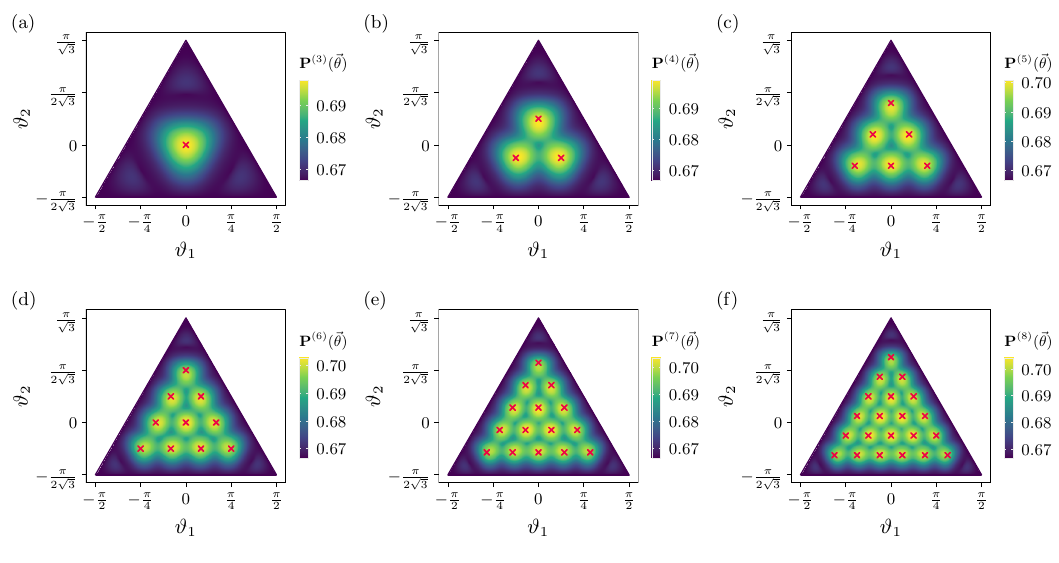}
    \caption{\textbf{Heatmaps of $\mathbf{P}^{(j)}(\vec{\theta})$ for integer spins plotted in the $\vec{\vartheta}$ coordinates.} The corresponding spins are (a) $j=3$, (b) $j=4$, (c) $j=5$, (d) $j=6$, (e) $j=7$, (f) $j=8$. As with Fig.~\ref{fig:heatmap-half-spins}, the local peaks follow the triangle numbers, and the resonant angles $\vec{\theta}_\Delta^{(j)}$ are crossed out in red.}
    \label{fig:heatmap-integer-spins}
\end{figure*}

A pattern becomes apparent by mere visual inspection. For the chosen sequences of spin, we find that there are a triangular number of local maxima, arranged exactly as per its definition as a figurate number. This pattern can be further appreciated by relating the symmetries of each spin particle with the symmetries of the protocol for the choices of angles.

For the equally-spaced protocol, it was previously understood that the large violation of $\mathbf{P}_3^{(3/2)} = 3/4$ by the cat state $\propto\ket{3/2,3/2} - \ket{3/2,-3/2}$ came about because it was ``\emph{in resonance}'' with the three probing times of the original protocol \cite{tsirelson-spin}. That is, under a $J_z$ rotation, states of a spin $j=3/2$ particle can only have discrete symmetries of integer multiples of $2\pi/(2j) = 2\pi/3$ rotations. Since the original protocol respects exactly this symmetry, a large quantum score can be achieved by preparing a state with that symmetry that has a large initial component of $\Theta(J_x)$. This intuition was validated by plotting the spherical Wigner function of the maximally-violating state in Fig.~\ref{fig:Wigner_Resonant}, where it can be clearly seen that the score is augmented by the constructive interference that occurs at angles that are $2\pi/3$ apart.

\begin{figure}
    \centering
    \includegraphics[width=\columnwidth]{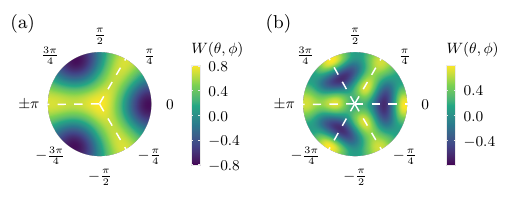}
    \caption{\label{fig:Wigner_Resonant}\textbf{Stereoscopic projection of the $J_z \geq 0$ hemisphere of the spherical Wigner function $W(\theta,\phi)$ \cite{sphericalWigner} of states that achieve the maximum scores.} The plotted states are of (a) a spin-$3/2$ particle with $\mathbf{P}^{(3/2)}_3 = 3/4$ and (b) a spin-$3$ particle with $\mathbf{P}^{(3)}_3 = (8+\sqrt{10})/16 \approx 0.698$. Spherical Wigner functions $W(\theta,\phi)$ are defined on a unit sphere, where $\theta$ and $\phi$ respectively specify the polar and azimuthal angles in spherical coordinates. The constructive interferences that occur at angles that are $2\pi/3$ apart for spin-$3/2$ and $\pi/3$ apart for spin-$3$ augment the scores to enable large violations of the classical bound.}
\end{figure}

Now, if we compare this to the Wigner function of maximally-violating states at the local peaks for other choices of $j$, as with the spin-$7/2$ and spin-$5$ cases plotted in Fig.~\ref{fig:Wigner_Nonresonant}, we observe again that the constructive interference occurs at angles that are about $2\pi/(2j)$ apart. As such, the choices of $\vec{\theta}$ must be such that the probing angles satisfy $\theta_k \approx n_k 2\pi/(2j)$ for integers $n_k$. Together with Eq.~\eqref{eq:condition} and $\theta_1 \leq \theta_2$, this gives the ``\emph{resonant angles}''
\begin{equation}\label{eq:resonance-condition}
    \vec{\theta}_{\Delta}^{(j)}(n_1,n_2) := \pqty{\frac{n_1\pi}{j},\frac{n_2\pi}{j}}
\end{equation}
where $n_1$ and $n_2$ are integers such that $1 \leq n_1 \leq \lfloor j - 1/2 \rfloor$ and $1 + \lfloor j \rfloor \leq n_2 \leq n_1 + \lfloor j - 1/2\rfloor$. For a given $j$, the number of points at which the condition is satisfied at is
\begin{equation}
    \sum_{n_1=1}^{\lfloor j-\frac{1}{2} \rfloor} \sum_{n_2=1 + \lfloor j \rfloor }^{n_1+\lfloor j - \frac{1}{2} \rfloor} = \left\lfloor j - \frac{1}{2} \right\rfloor\pqty{
    \frac{1}{2} + \frac{3}{2}\left\lfloor j - \frac{1}{2} \right\rfloor - \lfloor j \rfloor
    },
\end{equation}
which for the sequences $j = 3/2,5/2,7/2,9/2,\dots$ for the half integer spins and $j=3,4,5,6,\dots$ for the integer spins, give exactly the sequence of triangular numbers $1,3,6,10,\dots$.

Looking back to the protocol scores in Figs.~\ref{fig:heatmap-half-spins}~and~\ref{fig:heatmap-integer-spins}, the local maxima are indeed found in the vicinity of, although not always exactly at, the resonant angles. By maximising over this finite set of local maxima, the global maxima are found close to $\vec{\theta}_{\Delta}^{(j)}(n_G,2n_G)$, and the equivalent choices with respect to the symmetry in Eq.~\eqref{eq:angles-equivalence}, where
\begin{equation}
    n_G = \begin{cases}
        \lfloor (2j+1)/3 \rfloor & \text{for $j$ integer,}\\
        \lfloor j \rfloor & \text{for $j$ half-integer.}
    \end{cases}
\end{equation}

\begin{figure}
    \centering
    \includegraphics[width=\columnwidth]{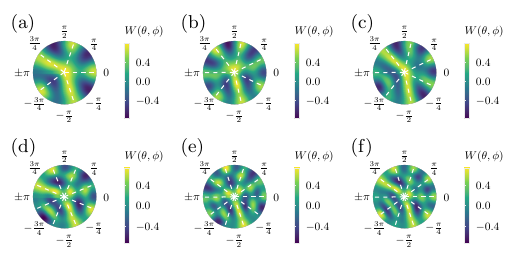}
    \caption{\label{fig:Wigner_Nonresonant}\textbf{Wigner function of the states that achieve the local peaks in Fig.~\ref{fig:heatmap-half-spins}}. These are of the (a) spin-$5/2$, (b,c) spin-$7/2$, (d) spin-$4$, and (e,f) spin-$5$ particles. The constructive interferences occur at angles that are $2\pi/2j$ apart, similarly to the spin-$3/2$ case in Fig.~\ref{fig:Wigner_Resonant}, although the interference patterns are far more complicated.}
\end{figure}

With these observation, we can try to find an even better approximation of the local maxima than the resonant angles. In ``Methods'', we use the latter as initial points in the gradient descent optimisation of $\mathbf{P}^{(j)}(\vec{\theta})$. Let $\vec{\boldsymbol{\theta}}_{\Delta}^{(j)}(n_1,n_2)$ be the local maximum found using gradient descent after starting with the resonant angle $\vec{\theta}_{\Delta}^{(j)}(n_1,n_2)$, where we will leave the arguments $(n_1,n_2)$ implicit when there is no cause for confusion. Then, we heuristically found that $\vec{\boldsymbol{\theta}}_{\Delta}^{(j)} \approx \lambda_j\vec{\theta}_{\Delta}^{(j)} + (1-\lambda_j)\vec{\theta}_3$, where
\begin{equation}
\begin{aligned}
    \frac{1}{\lambda_j} = 1 + \begin{cases}
        \frac{0.533\,051}{j-0.213\,570} & \text{for $j$ integer,}\\
        \frac{0.554\,086}{j-0.197\,425} & \text{for $j$ half-integer.}
    \end{cases}
\end{aligned}
\end{equation}
In other words, the optimal probing angles are approximately the convex combination of the probing angles that reflect the discrete symmetries of the spin system ($\vec{\theta}_{\Delta}^{(j)}$) and those that reflect the discrete symmetries of the protocol ($\vec{\theta}_3$). While our approximate expressions for $\vec{\boldsymbol{\theta}}_{\Delta}^{(j)}$ are heuristic, they may come in handy as approximate values to choose in an experiment, or as starting points for even more precise optimisations.

We are now able to revisit some properties of the obtained scores, referring to Fig.~\ref{fig:originalScores}. In Ref.~\cite{tsirelson-spin}, we had already plotted $\mathbf{P}_3^{(j)}$ against $j$ and observed a damped oscillatory pattern. The damping behaviour was explained by proving analytically that $\lim_{j\to\infty}\mathbf{P}_3^{(j)} = \mathbf{P}_3^{\infty}$, but an explanation for the oscillatory behaviour eluded us. With our new understanding of the location of the local maxima, we see that the oscillatory behaviour came from the fact that the probing angles $\vec{\theta}_3$ of original protocol only corresponds to a resonant angle $\vec{\theta}_{\Delta}^{(j)}$ when $2j$ is a multiple of three. In the other cases, as seen in Figs.~\ref{fig:heatmap-half-spins}~and~\ref{fig:heatmap-integer-spins} where $\vec{\theta}_3$ is the centre point of the triangular points, $\vec{\theta}_3$ is in fact a local minimum.

If we now adapt the angles and plot the local maxima obtained by gradient descent with the initial point at the origin, equivalently given by $\vec{\boldsymbol{\theta}}_{\Delta}^{(j)}(n_0,2n_0)$ with $n_0 = \lfloor (2j+1)/3 \rfloor$, we find that the score as a function of $j$ plots two different smooth sequences, one for half-integer spins and the other for integer spins. Both sequences converge towards $\mathbf{P}_3^{\infty}$. More generally, we observe in Fig.~\ref{fig:largeJparameterSpace} that every point in the interior of $\triangle$ approaches $\mathbf{P}_3^{\infty}$ for large spins, which is due to the convergence $Q^{(j)}(\vec{\theta}) \to Q^{\infty}(\vec{\theta})$ as $j \to \infty$ for any fixed $\vec{\theta}$ \cite{tsirelson-spin}.

\begin{figure*}
    \centering
    \includegraphics{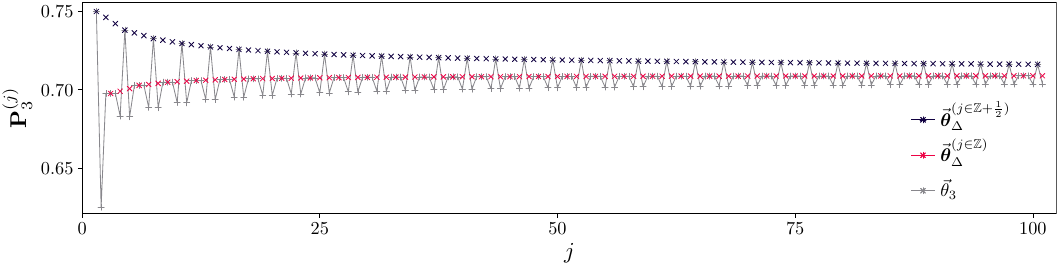}
    \caption{\label{fig:originalScores}\textbf{The scores $\mathbf{P}_3^{(j)}$ of the original equally-spaced precession protocol against $j$.} The scores $\mathbf{P}^{(j)}[\vec{\boldsymbol{\theta}}_\Delta^{(j)}(n_0,2n_0)]$, where $(n_0,2n_0)$ is the local peak closest to $\vec{\theta}_3$, are superimposed. Those corresponding to the half-integer cases $j\in\mathbb{Z+1/2}$ are coloured blue, while those corresponding to the integer cases $j\in\mathbb{Z}$ are coloured red.}
\end{figure*}

\begin{figure}
    \centering
    \includegraphics{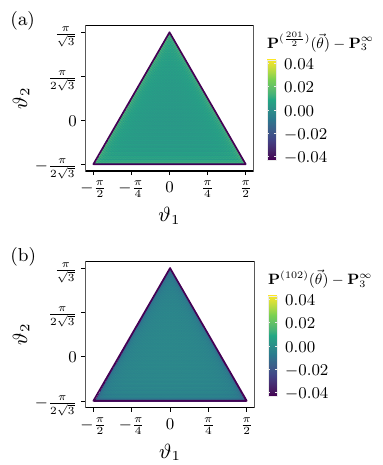}
    \caption{\label{fig:largeJparameterSpace}\textbf{Heatmap of $\mathbf{P}^{(j)}(\vec{\theta}) - 0.709\,364$ for large spins.} They are plotted for (a) half-integer $j=201/2$ and (b) integer $j=102$, where $\mathbf{P}_3^\infty \approx 0.709\,364$ is the approximate value of the maximum score of the quantum harmonic oscillator. The scores $\mathbf{P}^{(j)}(\vec{\theta})$ in the interior of the half-integer spin (integer spin) are all slightly larger (slightly smaller) than but close to $\mathbf{P}_3^\infty$.}
\end{figure}

\subsection{Improved Conjectured Bounds of \texorpdfstring{$\mathbf{P}_3^\infty$}{Maximum Score for the Quantum Harmonic Oscillator}}
Due to the convergence $\lim_{j\to\infty} \mathbf{P}^{(j)}(\vec{\theta}) = \mathbf{P}_3^\infty$, every value $\mathbf{P}^{(j)}(\vec{\theta})$ for sufficiently large $j$ is an approximation for $\mathbf{P}_3^\infty$. Furthermore, we observe that the sequence of $\mathbf{P}^\infty[\vec{\boldsymbol{\theta}}_{\Delta}^{(j)}(n_0,2n_0)]$ for integer $j$ (respectively, half-integer $j$) in Fig.~\ref{fig:originalScores} seem to be monotonically increasing (respectively, monotonically decreasing). We conjecture that this is indeed the case:
\begin{conjecture}
    We conjecture that the integer (half-integer) subsequence of $\mathbf{P}_3^{(3n/2)}$ is monotonically increasing (decreasing) in $n$, that is,
    \begin{equation}
        n \leq n' \implies \begin{cases}
            \mathbf{P}_3^{((2n)3/2)} \leq \mathbf{P}_3^{((2n')3/2)}, \\
            \mathbf{P}_3^{((2n+1)3/2)} \geq \mathbf{P}_3^{((2n'+1)3/2)}.
        \end{cases}
    \end{equation}
\end{conjecture}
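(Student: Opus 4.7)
The key fact I would exploit is that when $j = 3n/2$, the operator $Q_3^{(j)}$ commutes with the $2\pi/3$ rotation $R_z := e^{-i(2\pi/3)J_z}$, because the sum $\sum_k e^{-i\theta_k J_z}\Theta(J_x)e^{i\theta_k J_z}$ is invariant under cyclic permutation of its summands precisely when $\vec{\theta}_3$ is a resonant angle for $J_z$. This decomposes the spin-$j$ Hilbert space into three $\mathbb{Z}_3$ sectors $V_\omega^{(j)}$ labelled by $\omega \in \{1, e^{\pm 2\pi i/3}\}$, and $\mathbf{P}_3^{(j)}$ equals the largest eigenvalue of $Q_3^{(j)}$ taken across these sectors. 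The parity of $2j/3$---even for the integer subsequence, odd for the half-integer one---determines the $R_z$ quantum numbers of the $\{|j,m\rangle\}$ populating each sector, and I expect this parity to be the ultimate origin of the two opposite monotonicities.

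My primary strategy would be to construct, for each optimal state $|\psi_j\rangle$ achieving $\mathbf{P}_3^{(j)}$, a comparison state $|\psi_{j+3}\rangle$ in the same sector $V_\omega^{(j+3)}$ whose expectation value bounds $\mathbf{P}_3^{(j+3)}$ in the conjectured direction. Using $\mathcal{H}_j = \mathrm{Sym}^{2j}(\mathbb{C}^2)$, a natural candidate is to symmetrically tensor $|\psi_j\rangle$ with a fixed $\mathbb{Z}_3$-invariant six-qubit state lying in the spin-$3$ irrep. I would then expand $\langle\psi_{j+3}|Q_3^{(j+3)}|\psi_{j+3}\rangle$ using the closed-form matrix elements of $\Theta(J_x)$ in the $\{|j,m\rangle\}$ basis from Ref.~\cite{tsirelson-spin} together with Clebsch--Gordan algebra, aiming to show that the boundary term controlling the sign of $\mathbf{P}_3^{(j+3)} - \mathbf{P}_3^{(j)}$ is pinned by the parity of $2j/3$.

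As a fallback I would pursue an asymptotic route: since $Q^{(j)}(\vec{\theta}) \to Q^\infty(\vec{\theta})$ with subleading corrections captured by a Holstein--Primakoff expansion around the peaks of the optimal Wigner functions in Figs.~\ref{fig:Wigner_Resonant}--\ref{fig:Wigner_Nonresonant}, one should obtain $\mathbf{P}_3^{(3n/2)} = \mathbf{P}_3^\infty + c_{\mathrm{par}}/n + o(1/n)$ with $c_{\mathrm{par}}$ of opposite signs in the two parity classes. Combined with direct numerical diagonalisation of $Q_3^{(j)}$ for small $j$---essentially the data already displayed in Fig.~\ref{fig:originalScores}---this would close the conjecture. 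The hard part in either route is the nonlinearity of $\Theta(J_x)$: it rules out any straightforward algebraic recursion between $Q_3^{(j)}$ and $Q_3^{(j+3)}$, so the embedding produces only one-sided bounds on the maxima rather than matching the true top eigenvalue, and the asymptotic expansion must be carried to sufficient order to pin down the sign of $c_{\mathrm{par}}$ uniformly within each parity class. Neither of these steps is immediate from the structure developed in this paper, which is likely why the statement is left as a conjecture.
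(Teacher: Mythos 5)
This statement is labelled a \emph{conjecture} in the paper and is not proven there: the authors only supply numerical evidence (they verify the first $5500$ terms of each subsequence and fit the ansatz of Eq.~\eqref{eq:ansatz-spin}). So there is no paper proof to compare against, and the right standard for your proposal is whether it is a plausible and well-grounded attack, not whether it matches a reference argument. You are appropriately candid that neither route closes the conjecture, and your identification of the nonlinearity of $\Theta(J_x)$ as the obstruction is well taken; that said, a few of the structural claims on which the first route rests are off.

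The claimed ``key fact'' --- that $Q_3^{(j)}$ commutes with $R_z := e^{-i(2\pi/3)J_z}$ \emph{precisely when} $j=3n/2$ --- is not correct. Conjugating the sum $\sum_{k=0}^2 e^{-i(2\pi k/3)J_z}\Theta(J_x)e^{i(2\pi k/3)J_z}$ by $R_z$ cyclically shifts the three summands, and the wrap-around term $e^{-i2\pi J_z}\Theta(J_x)e^{i2\pi J_z}$ equals $\Theta(J_x)$ for \emph{every} spin (since $e^{-i2\pi J_z}=\pm\mathbbm{1}$). Likewise, the resulting $m\bmod 3$ block structure of $Q_3^{(j)}$ in the $\{\ket{j,m}\}$ basis --- $\mel{j,m}{Q_3}{j,m'}=\mel{j,m}{\Theta(J_x)}{j,m'}$ when $3\mid(m-m')$ and $0$ otherwise --- holds for all $j$. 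What is actually special about $j=3n/2$ is that $\vec{\theta}_3$ coincides with one of the resonant angles $\vec{\theta}_\Delta^{(j)}$ of Eq.~\eqref{eq:resonance-condition}; the paper invokes this only to argue that $\vec{\theta}_3$ is (numerically) the local-peak location and that the resulting computation simplifies, not to establish a symmetry absent at other $j$. So the premise that the parity of $2j/3$ changes which $\mathbb{Z}_3$ sectors are populated, and that this by itself explains the two monotonicities, needs to be reworked.

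Your fallback asymptotic route is closer in spirit to what the paper actually does empirically, but the scaling you posit, $\mathbf{P}_3^{(3n/2)}=\mathbf{P}_3^\infty+c_{\mathrm{par}}/n+o(1/n)$ with $c_{\mathrm{par}}$ changing sign by parity, does not match the paper's fitted ansatz: the integer-$j$ subsequence is fitted with leading correction $\propto j^{-1}$, while the half-integer-$j$ subsequence is fitted with leading correction $\propto j^{-1/2}$. The two parity classes thus appear to have \emph{different power-law exponents}, not merely opposite-sign coefficients of the same power. Any Holstein--Primakoff--type expansion would have to reproduce (and explain) this asymmetry before the sign of the leading term could be pinned down. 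More fundamentally, even granting such an expansion, monotonicity in $n$ is a stronger claim than a definite sign of the leading asymptotic correction: one also needs control for small $n$ and of the subleading terms, which you rightly flag as the hard part. In short, the proposal correctly identifies that this is genuinely open and correctly diagnoses the main technical obstacle, but the specific structural ingredients it leans on would need correction before either route could be carried out.
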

Note that we focus on the subsequence where $j$ is a multiple of $3/2$ as $\mathbf{P}^{(3n/2)}[\vec{\boldsymbol{\theta}}^{(3n/2)}_{\Delta}(n_0,2n_0)] = \mathbf{P}_3^{(3n/2)}$ can be more easily computed for large $n$ since $Q_3^{(3n/2)}$ is block-diagonal in that case \cite{tsirelson-spin}.

We checked that this conjecture is true for the first $5500$ terms of both subsequences, as shown in Fig.~\ref{fig:conjecturedBounds}. If this conjecture holds for the for all $j$, combined with the fact that both sequences approach $\mathbf{P}_3^\infty$, this means that $\mathbf{P}_3^{((2n)3/2)} \leq \mathbf{P}_3^\infty \leq \mathbf{P}_3^{((2n'+1)3/2)}$ for any $n,n'\geq0$. This provides us with conjectured bounds $0.709\,364 \leq \mathbf{P}_3^{(225\,000)}  \leq \mathbf{P}_3^\infty \leq \mathbf{P}_3^{(390\,003/2)} \leq 0.709\,511$ for the maximum score of the quantum harmonic oscillator.

Furthermore, we fitted both subsequences with the asymptotic ansatz
\begin{equation}\label{eq:ansatz-spin}
    \mathbf{P}_3^{(j=3n/2)} = \begin{cases}
        \mathbf{P}_3^{\infty} - \sum_{l=1}^2 b_l j^{-l} + \mathcal{O}(j^{-3}) & \text{for $n$ even,} \\
        \mathbf{P}_3^{\infty} + \sum_{l=1}^4 c_l j^{-\frac{l}{2}} + \mathcal{O}(j^{-\frac{5}{2}}) &
        \text{otherwise,}
    \end{cases}
\end{equation}
which is also plotted in Fig.~\ref{fig:conjecturedBounds}. For both numerical fits, the fitted parameters are
\begin{equation}
\begin{aligned}
    \mathbf{P}_3^{\infty} &= 0.709\,364\,176\,0(2)&&\text{for $j=3n$, $n \in 2\mathbb{Z}$} \\
    \mathbf{P}_3^{\infty} &=0.709\,364\,176\,(4)&&\text{for $j=3n + 3/2$, $n \in \mathbb{Z}$}
\end{aligned}
\end{equation}
up to the standard error of the fit reported in the parentheses. Both agree with the numerical fit
\begin{equation}
    \mathbf{P}_3^{\infty} =
    0.709\,364\,176\,01(5)
\end{equation}
of the rigorous lower bound in Fig.~\ref{fig:lowerBound} to nine decimal places. All in all, the numerical evidence strongly suggests that the true value of the maximum score of the quantum harmonic oscillator is indeed $\mathbf{P}_3^{\infty} \approx 0.709\,364$.

\begin{figure}
    \centering
    \includegraphics[width=\columnwidth]{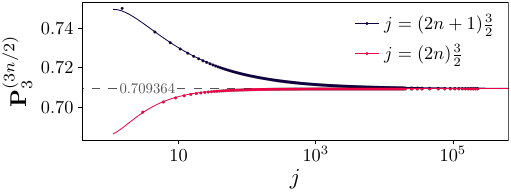}
    \caption{\label{fig:conjecturedBounds}\textbf{The score $\mathbf{P}_3^{(j)}$ when $2j$ is a multiple of $3$.} We conjecture that $\mathbf{P}_3^{(3n/2)}$ is monotonically increasing (decreasing) for $n$ even ($n$ odd), so their convergence to $\mathbf{P}_3^\infty$ would imply that their values for $n$ even ($n$ odd) form a sequence of lower bounds (upper bounds) of $\mathbf{P}_3^\infty$. We also perform a numerical fit to the ansatz given in Eq.~\eqref{eq:ansatz-spin}, which implies that both sequences converge to $\lim_{n\to\infty}\mathbf{P}_3^{(3n/2)} = \mathbf{P}_3^\infty = 0.709\,364$.}
\end{figure}

\subsection{\label{secentg}Implications on Detecting Entanglement}

The original precession protocol has been shown to be useful for detecting entanglement when performed on a collective coordinate of the system. In this section, we show how these previous results carry over to generalisations of the precession protocol.

Reference~\cite{tsirelson-harmonic-entanglement} considered two coupled harmonic oscillators, with local coordinates $(X_n,P_n)$ for the $n$th oscillator, of the form
\begin{equation}
\begin{aligned}
    H &= \sum_{n=1}^2 \frac{\omega_n}{2} \pqty{X_n^2 + P_n^2} - \frac{g}{2} X_1X_2 \\
    &= \sum_{\sigma \in \{+,-\}} \frac{\omega_\sigma}{2} \pqty{X_{\sigma\varphi}^2 + P_{\sigma\varphi}^2},
\end{aligned}
\end{equation}
with collective coordinates $X_{+\varphi} = X_1 \cos\varphi + X_2\sin\varphi$ and $X_{-\varphi} = X_2\cos\varphi - X_1\sin\varphi$, similarly defined for $P_{\pm\varphi}$, where $\varphi = \atantwo(g,\omega_1^2-\omega_2^2)$. By performing the precession protocol on the collective coordinate $X_{\sigma\varphi}(t)$ for $\sigma \in \{+,-\}$, entanglement between the local coordinates can be detected when the obtained score $P_{3,\sigma}^\infty(\varphi) := {\tr}[\rho Q_{3,\sigma}^\infty(\varphi)]$, where $3Q_{3,\sigma}^\infty(\varphi) := \sum_{k=0}^2 \Theta[X_{\sigma\varphi}({2\pi k}/{3\omega_\sigma})]$, satisfies
\begin{equation}
P_{3,\sigma}^\infty(\varphi) > \mathbf{P}^{\infty\text{-sep}}_3(\varphi) := \max_{\varrho \in\operatorname{PPT}}\tr[ \varrho Q_{3,\sigma}^\infty(\varphi)].
\end{equation}
Here, $\operatorname{PPT} := \{\varrho : \varrho^{T_2}\succeq 0\}$ is the set of positive partial transpose states, where $\rho^{T_2}$ is the partial transpose over the $(X_2,P_2)$ mode. The special case $\mathbf{P}^{\infty\text{-sep}}_3(\pi/4) = \mathbf{P}_3^c$ is known analytically due to the relationship between Wigner negativity and entanglement \cite{tsirelson-harmonic-entanglement,CD-entanglement,phase-space-entanglement} (see in particular Theorem~2 of Ref.~\cite{CD-entanglement}), while the general case can be calculated using semidefinite programming by truncating the energy levels \cite{tsirelson-harmonic-entanglement}.

The same idea can be applied to the family of precession protocols with three angles. For a choice of probing time $\vec{\theta}$, we can similarly perform the protocol on the collective coordinate $X_{\sigma\varphi}(t)$ to obtain the score $P_\sigma^\infty(\varphi,\vec{\theta}) := {\tr}[\rho Q_{\sigma}^\infty(\varphi,\vec{\theta})]$, where $3 Q_{\sigma}^\infty(\varphi,\vec{\theta}) := \sum_{k=0}^2 \Theta[X_{\sigma\varphi}({\theta_k}/{\omega_\sigma})]$. Then, entanglement is detected when $P_\sigma^\infty(\varphi,\vec{\theta}) > \mathbf{P}^{\infty\text{-sep}}(\varphi,\vec{\theta})$, where the separable bound is similarly defined as
\begin{equation}
    \mathbf{P}^{\infty\text{-sep}}(\varphi,\vec{\theta}) := \max_{\varrho\in\operatorname{PPT}}\tr[ \varrho Q_{\sigma}^\infty(\varphi,\vec{\theta})].
\end{equation}
In the Supplementary Information, we show that for all points in the interior region $\triangle$ such that $\mathbf{P}^c(\vec{\theta}) = 2/3$, we have $\mathbf{P}^{\infty\text{-sep}}(\varphi,\vec{\theta}) = \mathbf{P}^{\infty\text{-sep}}_3(\varphi)$. This means that the separable bounds previously found in Ref.~\cite{tsirelson-harmonic-entanglement} also hold for all three angle precession protocols performed on a collective mode. In particular, $\mathbf{P}^{\infty\text{-sep}}(\pi/4,\vec{\theta}) = \mathbf{P}_3^c$ for all $\vec{\theta}$ in the interior.

These results allow us to witness the non-Gaussian entanglement of states that could not be detected with the original protocol. For example, consider the state
\begin{equation}
\begin{aligned}
    \ket{\chi_4} = \frac{1}{\sqrt{10}}\Big[
    &2 \cos({10}/{17})\ket{0}_+ - \sqrt{5}\sin({2}/{3})\ket{1}_+ \\
    &{}+{} 2 \sin({10}/{17}) \ket{2}_+ - \sqrt{5} \cos({2}/{3}) \ket{3}_+ \\
    &{}+{} \ket{4}_+
    \Big] \otimes \ket{0}_-,
\end{aligned}
\end{equation}
where $\ket{n}_\pm$ are the number states in the collective mode $(X_{\pm\pi/4},P_{\pm\pi/4})$. $\ket{\chi_4}$ does not violate the original protocol as there are no states truncated to the first five energy levels for which $P_3^\infty > \mathbf{P}_3^c$ \cite{tsirelson-anharmonic}. However, $P^\infty_\sigma[\pi/4,(\pi^2/4,\pi^2/2)] = 0.669 > \mathbf{P}_3^c$ when performed on the collective mode $X_{+\pi/4}$, so we can detect the entanglement of $\ket{\chi_4}$ with a different set of probing angles.

Another important application is the detection of squeezed versions of states. Recently, advances in bosonic error correction has led to the development of squeezed cat codes of the form $\propto \sum_{k=0}^{K-1} S_\lambda \lvert\alpha e^{i\frac{2\pi k}{K}}\rangle$ where $S_\lambda := \exp[-i\lambda(XP+PX)/2]$ is the squeeze operator, which is robust against a variety of error sources \cite{squeezedCat1,squeezedCat2}. Previously, it was found that the entanglement of the entangled tricat state $\ket{\mathrm{cat}_3(\alpha)}\propto\sum_{k=0}^2 \lvert\alpha e^{i\frac{2\pi k}{3}}\rangle\otimes\lvert\alpha e^{i\frac{2\pi k}{3}}\rangle$ could be detected with the original precession protocol for certain values of $\alpha$ for which $P^\infty_{3,+}(\pi/4) > \mathbf{P}_3^c$ \cite{tsirelson-harmonic-entanglement}. From the relation $S_\lambda^{\otimes 2} Q_{3,+}^\infty(\pi/4) S_\lambda^{\dag \otimes 2} = Q_{+}^\infty[\pi/4,(\tilde{\theta}_-,\tilde{\theta}_+)]$ that comes from the constructive proof of Supplementary Theorem~3, where $\tilde{\theta}_\mp = \pi\mp\atan(\sqrt{3}e^{2\lambda})$, the squeezed version of the cat state $S_\lambda^{\otimes 2}\lvert{\mathrm{cat}_3(\alpha)}\rangle$ can be detected by performing the precession protocol on a collective mode with probing angles $\vec{\theta} = (\tilde{\theta}_-,\tilde{\theta}_+)$, since
\begin{equation}
\begin{aligned}
    &P^\infty_{+}[\pi/4,(\tilde{\theta}_-,\tilde{\theta}_+)] \\
    &\quad{}={}  \langle{\mathrm{cat}_3(\alpha)}\rvert  S_\lambda^{\dag \otimes 2} Q_{+}^\infty[\pi/4,(\tilde{\theta}_-,\tilde{\theta}_+)] S_\lambda^{\otimes 2} \lvert{\mathrm{cat}_3(\alpha)}\rangle \\
    &\quad{}={} \langle{\mathrm{cat}_3(\alpha)}\rvert Q_{3,+}^\infty(\pi/4)  \lvert{\mathrm{cat}_3(\alpha)}\rangle\\
    &\quad{}={} P^\infty_{3,+}(\pi/4) > \mathbf{P}_3^c.
\end{aligned}
\end{equation}

Meanwhile, Ref.~\cite{tsirelson-spin-entanglement} considered ensembles of $N$ particles with fixed spins $\{j_n\}_{n=1}^N$, where $j_n$ is the spin and $\vec{J}^{(j_n)} = (J_x^{(j_n)},J_y^{(j_n)},J_z^{(j_n)})$ the angular momentum of the $n$th particle. There, it was shown that for any observable defined by a function $f(\vec{J})$ of the total angular momentum $\vec{J} = (J_x,J_y,J_z) = \sum_{n=1}^N \vec{J}^{(j_n)}$ of the system
\begin{equation}
\begin{aligned}
\tr[\rho f(\vec{J})] &> \max_{j+j' \leq \sum_{n=1}^N j_n}
    \max_{\varrho \in \operatorname{PPT}_{j,j'}}\tr[\varrho f\pqty{\vec{J}^{(j)} + \vec{J}^{(j')} }] \\
&\implies \rho\text{ is genuinely multipartite entangled,}
\end{aligned}
\end{equation}
where $\operatorname{PPT}_{j,j'}$ is the set of positive partial transpose states over the tensor product of a spin-$j$ and spin-$j'$ system. Here, $\rho$ being genuinely multipartite entangled (GME) means that $\rho$ is not a probabilistic mixture of states separable over any bipartition of the $N$ spins.

One can then perform the precession protocol on the total angular momentum of the spin ensemble with any choice of probing angles, for which the score will be given by $P(\vec{\theta}) := {\tr}[\rho Q^{(\{j_n\}_{n})}(\vec{\theta})]$ where $3Q^{(\{j_n\}_{n})}(\vec{\theta}) := \sum_{k=0}^n \Theta[J_x\cos\theta_k + J_y\sin\theta_k]$. Since $Q^{(\{j_n\}_{n})}(\vec{\theta})$ is a function of the total angular momentum, we can define the separable bound as
\begin{equation}
\begin{aligned}
    \mathbf{P}^{\operatorname{sep}}(\{j_n\}_{n=1}^N,\vec{\theta}) &:= \hspace{-1em}\max_{j,j' \leq \sum_{n}j_n}  \underbrace{\max_{\varrho \in \operatorname{PPT}_{j,j'}}\tr[\varrho Q^{(\{j,j'\})}(\vec{\theta}) ]}_{\mathbf{P}^{\operatorname{sep}}(\{j,j'\},\vec{\theta})},
\end{aligned}
\end{equation}
which can be written as a maximisation over biseparable bounds $\mathbf{P}^{\operatorname{sep}}(\{j,j'\},\vec{\theta})$. This can in turn be calculated using semidefinite programming, whose results are shown in Fig.~\ref{fig:heatmap_SDP}. We found that $\mathbf{P}^{\operatorname{sep}}(\{j_n\}_{n=1}^N,\vec{\theta}) \leq \mathbf{P}_3^c$ when $\sum_n j_n \leq 2$. Hence, whenever a spin ensemble with total spin $\leq 2$ violates the classical bound, its constituents must be GME.

\begin{figure}
    \includegraphics[width=\columnwidth]{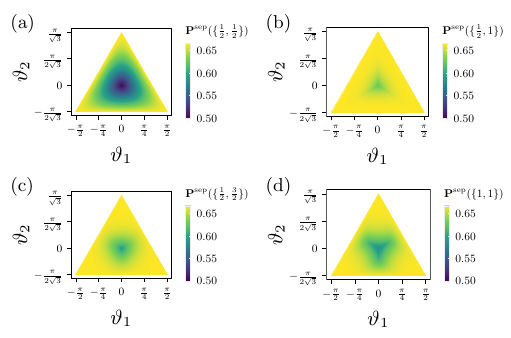}
    \caption{\label{fig:heatmap_SDP}\textbf{Biseparable bounds $\mathbf{P}^{\text{sep}}(\{j,j'\},\vec{\theta})$.} They are plotted for (a) $\{j,j'\}=\{1/2,1/2\}$, (b) $\{j,j'\}=\{1/2,1\}$, (c) $\{j,j'\}=\{1/2,3/2\}$, and (d) $\{j,j'\}=\{1,1\}$. For $j+j' \leq 2$, we find that the biseparable bounds satisfy $\mathbf{P}^{\text{sep}}(\{j,j'\},\vec{\theta})\leq \mathbf{P}_3^c$. As such, $\mathbf{P}^{\text{sep}}(\{j_n\}_n,\vec{\theta}) \leq \mathbf{P}_3^c$ for $\sum_{n}j_n \leq 2$, so GME is implied when the classical bound is violated upon performing any of the three-angle generalised precession protocols on the total angular momentum of a system with total spin at most two.}
\end{figure}

We are also able to detect more states with this larger family of GME witnesses. For example, consider the state $e^{-i(49\pi/60)J_z}\ket{\psi_4}$ of a four-particle ensemble of spin-$1/2$ particles, where
\begin{equation}
\begin{aligned}
    \ket{\psi_4} &= \frac{\sqrt{6}+1}{6}\ket{\Phi_+}^{\otimes2} + \frac{\sqrt{6}-1}{6}\ket{\Phi_-}^{\otimes2} + \frac{1}{3}\ket{\Psi_+}^{\otimes2} \\
    &\qquad{}+{} \frac{1}{2}\pqty\big{\ket{\Phi_+}\otimes\ket{\Psi_+} + \ket{\Psi_+}\otimes\ket{\Phi_+}},
\end{aligned}
\end{equation}
such that $\ket{\Psi_\pm} \propto \ket{\uparrow\downarrow} \pm \ket{\downarrow\uparrow}$ and $\ket{\Phi_{\pm}} \propto \ket{\uparrow\uparrow} \pm \ket{\downarrow\downarrow}$ are the Bell states. This state achieves the score $P_3 = 0.6 < \mathbf{P}_3^c$ for the original protocol performed on the total angular momentum, and the score $P((49\pi/60,49\pi/30)) = 0.67 > \mathbf{P}_3^c$ for the modified protocol. Therefore, the GME of $e^{-i(49\pi/60)J_z}\ket{\psi_4}$ can only be detected by the modified protocol and not the original one.

\subsection{\label{sec:finitestats}Impact of Limited Measurement Precision}
As this work is focused on the theoretical study of the precession protocol, we have thus far assumed that the measured values of $A_x(\theta)$ are precise enough for the unambiguious assignment of $\Theta[A_x(\theta)] \in \{0,1/2,1\}$ when calculating the score $P(\vec{\theta})$. This certainly holds for discrete variable systems, like the angular momentum of particles with finite spin, where the outcomes of $A_x$ are discrete and sufficiently separated from one another. In such systems, $\Theta[A_x(\theta)]$ can be determined exactly as long as the finite gap between $0$ and its neighbouring values can be resolved by the measurement apparatus.

However, in actual experimental implementations of continuous variable systems, the measurement device used to measure $A_x(\theta)$ might have a limited resolution. This would mean that the measured values of $A_x(\theta)$ in the vicinity of $A_x(\theta) \approx 0$ might not be precise enough for $\Theta[A_x(\theta)]$ to be determined unambiguiously.

More generally, let us consider some coarse graining of continuous variables into discrete bins specified by their end points $\{a_{n}\}_{n \in \mathbb{Z}}$, where the $n$th bin contains the measurement outcomes $a_{n} \leq A_x(\theta) < a_{n+1}$, with bin width $a_{n+1}-a_n$. Each bin might correspond to a particular division of a physical rule or a particular number with a finite number of decimal places, which are both examples of the limited resolution of the measurement apparatus. The experimentalist would therefore only know which bin the outcome was in, and not the exact value of $A_x(\theta)$ to infinite precision.

Label the $0$th bin to be the bin that contains $0$ such that $a_{0} \leq 0 < a_{1}$, and the $\mathbf{n}$th bin to be any bin chosen such that the total width of the $1$st to $\mathbf{n}$th bin $\sum_{n=1}^{\mathbf{n}} (a_{n + 1} - a_{n}) = a_{\mathbf{n}+1} - a_1 > a_{1}-a_0$ is larger than the width of the $0$th bin. If all bins have the same width, it suffices to choose $\mathbf{n} = 1$.

Then, we show in the Supplementary Information that given the modified score assignment $\Theta[A_x(\theta)] \to \widetilde{\Theta}[A_x(\theta)]$, where
\begin{equation}
    \widetilde{\Theta}[A_x(\theta)] := \begin{cases}
        1 & \text{if $A_x(\theta) \in n$th bin with $n > \mathbf{n}$,} \\
        1/2 & \text{if $A_x(\theta) \in n$th bin with $0 \leq n \leq \mathbf{n}$,} \\
        0 & \text{otherwise,}
    \end{cases}
\end{equation}
the classical inequality $P^c(\vec{\theta}) \leq \mathbf{P}_3^c$ still holds for $\vec{\theta} \in \triangle$. Therefore, when the measurement apparatus has limited precision, the score assignment of the coarse-grained data as given above ensures that no false positives occur, such that a violation of the same classical bound $\mathbf{P}_3^c$ still certifies nonclassicality.

\subsection{\label{secmore}Implications and Outlook on Protocols with More Angles}
In Ref.~\cite{tsirelson-spin}, the precession protocol was generalised to $K$ equally-spaced angles for odd $K$---note that there is also a related protocol for even $K$ \cite{tsirelson-even}, but it is rather different and not discussed in this paper. Just like the preceding sections, we consider now the generalisation to $K$ arbitrarily-spaced angles. The $K$ measured angles can be labelled such that $0=\theta_0 \leq \theta_1 \leq \dots \leq \theta_{K-1} \leq 2\pi$, again up to an offset, so a particular choice of probing angles is specified by the vector $\vec{\theta} := (\theta_k)_{k=1}^{K-1}$.

In the Supplementary Information, we show that
$\mathbf{P}^c(\vec{\theta}) = 1-\delta/K$ for $\delta \in \{0,1,\dots,(K-1)/2\}$ if and only if
\begin{equation}
\forall k: \theta_{(k\,\oplus_{_{\!K}}\delta)} \ominus_{_{\!2\pi}} \theta_k \leq \pi,
\end{equation}
with a reminder that $x\,\ooalign{\hidewidth$\pm$\hidewidth\crcr$\bigcirc$}_{\!m}\, y = (x\pm y)\bmod m$ are addition and subtraction modulo $m$, respectively. The parameter space $\vec{\theta}$ is therefore split into $(K+1)/2$ regions where $\mathbf{P}^c \in \{1-\delta/K\}_{\delta=0}^{(K-1)/2}$. For $K=3$, these are precisely triangular regions with $\mathbf{P}^c = 2/3$ and $\mathbf{P}^c = 1$, respectively, but the regions become more complex to characterise for $K>3$.

For the $K=3$ case for the quantum harmonic oscillator, the scores in the interior region $\triangle$ are all $\mathbf{P}^\infty(\vec{\theta}) = \mathbf{P}_3^\infty$ because every observable $Q^\infty(\vec{\theta})$ within the same region can be symplectically transformed to each other. However, because there are only three degrees of freedom---change of phase, magnitude of squeezing, axis of squeezing---once $3$ out of the $K$ angles are transformed, the other $K-3$ angles are fixed. Therefore, for $K>3$ the different observables $Q^\infty(\vec{\theta})$ within the same region are no longer symplectically related, and so the maximum quantum score $\mathbf{P}^\infty(\vec{\theta})$ is no longer the same within the region with the same classical score.

Meanwhile, the upper bound $\mathbf{P}^{\geq}(\vec{\theta}) \geq \mathbf{P}^\infty(\vec{\theta})$ within the region where $\mathbf{P}^{c}(\vec{\theta}) = (1+1/K)/2$ can be found by extending the derivation of the upper bound $\mathbf{P}^{\geq}_3$. In principle, closed-form expressions of $\mathbf{P}^{\geq}(\vec{\theta})$ can be obtained, and the steps required to do so are laid out in the Supplementary Information. However, the obtained expressions are extremely cumbersome for $K>3$, so we have instead evaluated $\mathbf{P}^{\geq}(\vec{\theta})$ numerically. This only involved integrals over piecewise smooth functions, and thus the calculated bounds, although obtained numerically, are reliable.

In Fig.~\ref{fig:generalBounds}, we plot the upper bound $\mathbf{P}_K^{\geq}$ of the maximum quantum score $\mathbf{P}_K^\infty$ for the equally-spaced protocol $\vec{\theta}_K := (2\pi k/K)_{k=1}^{K-1}$. The best upper bound previously known was $\mathbf{P}_K^\infty \leq \mathbf{P}^c(\vec{\theta}_K) + 0.155\,940$, where $\mathbf{P}^c(\vec{\theta}_K) = (1+1/K)/2$, which again came from bounds of the single-wedge integral of Wigner functions \cite{tsirelson-spin}. However, the old upper bound saturates at $1/2 + 0.155\,940$ as $K\to \infty$, which is the incorrect asymptotic behaviour, as it is known that $\lim_{K\to\infty}\mathbf{P}_K^\infty = 1/2$ \cite{tsirelson-spin}. Rather, the new upper bound $\mathbf{P}_K^{\geq}$ not only exhibits the correct asymptotic behaviour, but is also close to the previously-known lower bound $\mathbf{P}_K^{\leq}$ \cite{tsirelson-spin}. Furthermore, the new upper bound in turn provides an improved upper bound of
\begin{equation}
    \sup_\rho \abs{\frac{1}{2} - \iint_{\Omega_K}\dd{x}\dd{p}W_\rho(x,p)} \leq  K\pqty{\mathbf{P}_K^{\geq}-\frac{1}{2}},
\end{equation}
where $\Omega_K = \{(x,p): \atantwo(p,x) \bmod (4\pi/K) \leq 2\pi/K \}$ is the equally-spaced $K$-wedge region.

\begin{figure}
    \centering
    \includegraphics[width=\columnwidth]{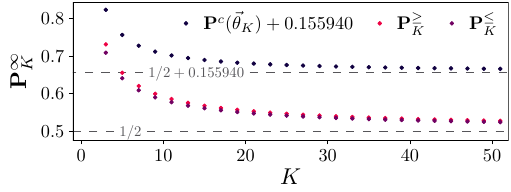}
    \caption{\label{fig:generalBounds} \textbf{Plot of the bounds $\mathbf{P}_K^{\leq} \leq \mathbf{P}_K^{\infty} \leq \mathbf{P}_K^{\geq}$ for the quantum harmonic oscillator, and for the equally-spaced protocol.} A comparison to the previously known upper bound $\mathbf{P}^c(\vec{\theta}_K) + 0.155\,940$ \cite{tsirelson-spin}  is also included. The improved upper bound exhibits the correct asymptotic behaviour for $\lim_{K\to\infty}\mathbf{P}_K^{\infty} = 1/2$, and is also very close to $\mathbf{P}_K^{\leq}$.}
\end{figure}

Lastly, we consider the precession protocol with more probing angles applied to spin angular momentum. While we can no longer plot out $\mathbf{P}^{(j)}(\vec{\theta})$ for the full parameter space like we did for $K=3$, we can try to observe some properties of the local peaks by starting with random initial parameters $\vec{\theta}$ and performing gradient descent. Some qualitative behaviour carries over from the $K=3$ case. For example, for $K=5$, we notice that many local peaks occur for $\vec{\boldsymbol{\theta}}^{(j)}_\Delta \approx \vec{\theta}_\lambda^{(j)} := (1-\lambda)\vec{\theta}^{(j)}_{\Delta} + \lambda \vec{\theta}_K$ region, where $\vec{\theta}^{(j)}_{\Delta} = ( \pi n_k/j )_{k=1}^{K-1}$, and $\{n_k\}_{k=1}^{K-1}$ are integers such that
\begin{equation}
0 < n_1 < \dots < n_{\frac{K-1}{2}} \leq \lfloor j \rfloor < n_{\frac{K+1}{2}} < \dots \leq K-1.
\end{equation}
This is shown in the Supplementary Information for $j \in \{7/2,11/2\}$. Although we also find local maxima not of the form $\vec{\theta}_\lambda^{(j)}$, we nonetheless observe that choosing initial points of the form $(\vec{\theta}^{(j)}_{\Delta} + \vec{\theta}_K)/2$ for the gradient descent provides faster convergence than starting with random initial points, and that the chosen initial points already achieve large violations of the classical bound. This aids in finding states and parameters $\vec{\theta}$ that obtain large scores $\mathbf{P}^{(j)}(\vec{\theta})$, similar to those prepared in recent experimental implementations of the precession protocol \cite{tsirelson-experiment}.

\section{\label{secconcl}Discussions}
In this work, we have characterised the family of Tsirelson's precession protocol with three angles, which include as special cases some previously-studied generalisations like the Type I and II Tsirelson inequalities. We answer an open question about the maximum violation possible for these inequalities by showing that the maximum score for the harmonic oscillator is the same for all members of the characterised family. Furthermore, we provide new rigorous and conjectured bounds for the maximum quantum score $\mathbf{P}_3^\infty$ of the original protocol that improve upon the best bounds that were previously known, which also contribute improved bounds of integrals of Wigner functions over certain phase space regions. Finally, by extrapolating the rigorous and conjectured bounds, we estimate that the true value of the maximum quantum score is $\mathbf{P}_3^\infty = 0.709\,364$.

We also studied the family of protocols when applied to spin angular momentum. There, we found that the quantum violation can be increased for a given spin system by adjusting the probing angles, and observed that the location of the local maxima in parameter space followed the pattern of triangle numbers. The latter observation came from the fact that the optimal probing angles are close to a mixture of probing angles that reflect the symmetry of the protocol and that of the given spin, which also gives us an approximation of the optimal probing angles. This will be useful for choosing the probing angles in experimental implementations of the precession protocol, as recently demonstrated \cite{tsirelson-experiment}.

Afterwards, we related our findings to some previous results on the connection between the precession protocol and witnesses of entanglement. We show that similar to the original protocol, every member of the generalised family is an entanglement witness when applied to the collective coordinate of coupled harmonic oscillators or spin ensembles. For the coupled harmonic oscillator, we further show that the separable bounds for every generalised protocol is the same as that of the original protocol. For both systems, we show by explicit examples that there are entangled states that can only be detected with the generalised protocol and not the original one. Therefore, this work also introduces new non-Gaussian and genuine multipartite entanglement witnesses.

Finally, we briefly touched upon precession protocols with more than three angles. We show that some results---like the classical bound and the upper bound of the maximum quantum score---can be generalised to protocols with more angles, but also comment upon some results---like the invariance of the maximum quantum score for the harmonic oscillator---that cannot. We further demonstrate how observations from the three angle case can inform strategies for finding larger violations for the precession protocol with more angles.

There are some evident future directions: the first is to extend our work to larger $K$. The complexity grows very quickly, as the parameter space of $K=5$ is already four dimensional, with the different regions taking more complicated forms. The second is to extend this work to general theories, as has been done for the original protocol \cite{tsirelson-general}. We note that our proof that the maximum harmonic oscillator score is the same for all three-angle generalisations depend only on symplectic transformations on the measured observables. Thus, our result also holds for the recently-introduced general theories based on continuous variable quasiprobability distributions, since these distributions also permit symplectic transformations \cite{GPT-CV-1,GPT-CV-2}.

\section{Methods}
\subsection{\label{apd:score-gradient}Heuristic Optimisation of Local Maxima}
We observed in ``Spin Angular Momentum'' that the local maxima are close to, but not exactly at, the resonant angles $\vec{\theta}_{\Delta}^{(j)}(n_1,n_2) = (n_1\pi/j,n_2\pi/j)$. We numerically found the local maxima using $\vec{\theta}_{\Delta}^{(j)}$ as the initial point for the gradient descent optimisation of $\mathbf{P}^{(j)}(\vec{\theta})$. This requires the gradient of $\mathbf{P}^{(j)}(\vec{\theta})$, which is given by
\begin{widetext}
\begin{equation}\label{eq:gradient}
\begin{aligned}
    \frac{\partial\mathbf{P}^{(j)}(\vec{\theta})}{\partial\theta_k}
    &= i\frac{1}{3} \langle{\mathbf{P}^{(j)}(\vec{\theta})}\rvert e^{-i\theta_k J_z} \comm{\Theta(J_x)}{J_z} e^{i\theta_k J_z} \lvert{\mathbf{P}^{(j)}(\vec{\theta})}\rangle \\
    &=\begin{cases}
    \begin{array}{l}
     i\frac{\sqrt{j(j+1)}}{12}
    \langle{\mathbf{P}^{(j)}(\vec{\theta})}\rvert e^{-i\theta_k J_z} e^{-i\frac{\pi}{2} J_y} \big[
    \ketbra{j,-1}{j,0} + \ketbra{j,0}{j,1} \\
    \hspace{15em} {}-{}
        \ketbra{j,1}{j,0} - \ketbra{j,0}{j,-1}
    \big] e^{i\frac{\pi}{2} J_y} e^{i\theta_k J_z} \lvert{\mathbf{P}^{(j)}(\vec{\theta})}\rangle
    \end{array}
    & \text{for $j$ integer,}\\[1.5ex]
    i\frac{j+1/2}{6}
        \langle{\mathbf{P}^{(j)}(\vec{\theta})}\rvert e^{-i\theta_k J_z} e^{-i\frac{\pi}{2} J_y} \pqty{
        \ketbra{j,-\tfrac{1}{2}}{j,\tfrac{1}{2}} -
        \ketbra{j,\tfrac{1}{2}}{j,-\tfrac{1}{2}}
    } e^{i\frac{\pi}{2} J_y} e^{i\theta_k J_z} \lvert{\mathbf{P}^{(j)}(\vec{\theta})}\rangle & \text{otherwise,}
    \end{cases}
\end{aligned}
\end{equation}
where $\lvert{\mathbf{P}^{(j)}(\vec{\theta})}\rangle$ is the maximal eigenstate of $Q^{(j)}(\vec{\theta})$. Here, the analytical expression for $\comm{\Theta(J_x)}{J_z}$ was found in a previous work \cite{tsirelson-spin}.
\end{widetext}

Equation~\eqref{eq:gradient} is passed into standard gradient descent libraries to maximise the objective $\mathbf{P}^{(j)}(\vec{\theta})$. The local peaks $\vec{\boldsymbol{\vartheta}}_{\Delta}^{(j)}$ as found are plotted alongside $\vec{\vartheta}_{\Delta}^{(j)}$ in Fig.~\ref{fig:angles}, where they appear to be scalar multiples of each other. To investigate this further, we plotted the angle
\begin{equation}
\angle(\vec{\boldsymbol{\vartheta}}_{\Delta}^{(j)},\vec{\vartheta}_{\Delta}^{(j)}) = \acos(
    \frac{
        \vec{\boldsymbol{\vartheta}}_{\Delta }^{(j)}\cdot\vec{\vartheta}_{ \Delta }^{(j)}
    }{
        |\vec{\boldsymbol{\vartheta}}_{\Delta }^{(j)}||\vec{\vartheta}_{ \Delta }^{(j)}|
    }
)
\end{equation}
for a large range of $j$ in Fig.~\ref{fig:angles}. $\angle(\vec{\boldsymbol{\vartheta}}_{\Delta }^{(j)},\vec{\vartheta}_{ \Delta }^{(j)})\approx 0$ for all values that we calculated, which verifies that they are indeed roughly scalar multiples of each other.

\begin{figure}
    \centering
    \includegraphics[width=\columnwidth]{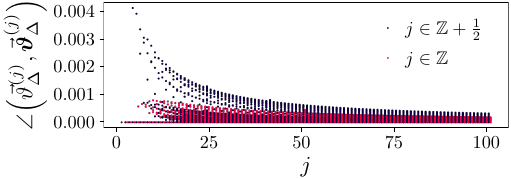}
    \caption{\label{fig:angles}\textbf{The angle $\angle(\vec{\boldsymbol{\vartheta}}_{\Delta}^{(j)},\vec{\vartheta}_\Delta^{(j)})$ between the local peaks $\vec{\boldsymbol{\vartheta}}^{(j)}_{\Delta}$ and the resonant probing angles $\vec{\vartheta}_{\Delta}^{(j)}$.} We find that $\angle(\vec{\boldsymbol{\vartheta}}_{\Delta}^{(j)},\vec{\vartheta}_\Delta^{(j)}) \approx 0$, which shows that $\vec{\boldsymbol{\vartheta}}_{\Delta}^{(j)}$ and $\vec{\vartheta}_{\Delta}$ are approximately scalar multiples of each other.}
\end{figure}

With a phenomenological fit of the scaling factor, we also found $\vec{\boldsymbol{\vartheta}}_{ \Delta }^{(j)} \approx \lambda_j\vec{\vartheta}_{ \Delta }^{(j)}$ to be an excellent approximation of the local maximum, where
\begin{equation}
\begin{aligned}
    \frac{1}{\lambda_j} = 1 + \begin{cases}
        \frac{0.533\,051}{j-0.213\,570} & \text{for $j$ integer,}\\
        \frac{0.554\,086}{j-0.197\,425} & \text{otherwise.}
    \end{cases}
\end{aligned}
\end{equation}
The discrepancy is at most $|\vec{\boldsymbol{\vartheta}}^{(j)}_{\Delta} - \lambda_j\vec{\vartheta}^{(j)}_{\Delta}| \leq 0.004$ for $j \leq 100$, which decreases monotonically with $j$, as shown in Fig.~\ref{fig:errors}.

\begin{figure}
    \centering
    \includegraphics[width=\columnwidth]{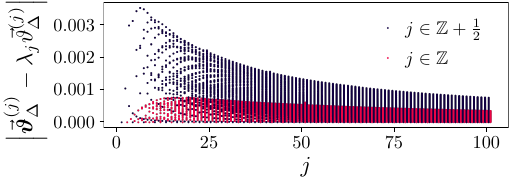}
    \caption{\label{fig:errors}\textbf{The discrepancy $|\vec{\boldsymbol{\vartheta}}_{\Delta} - \lambda_j\vec{\vartheta}_{\Delta}|$ between the local peaks $\vec{\boldsymbol{\vartheta}}_{\Delta}$ and the rescaled resonant probing angles $\lambda_j\vec{\vartheta}_{\Delta}$.} We find $\lambda_j\vec{\vartheta}_{\Delta}$ to be an excellent approximation for $\vec{\boldsymbol{\vartheta}}_{\Delta}$.}
\end{figure}

\section*{Data Availability}
The data generated in this study utilise only standard numerical techniques (i.e., eigenvalue solvers, semidefinite programming), which can be replicated from the provided equations. The raw data is also available from the corresponding author upon reasonable request.

\section*{Code Availability}
The code used is available from the corresponding author upon reasonable request.

\section*{Author Contributions}
L.H.Z. prepared all figures and wrote the first draft, V.S. supervised and edited the final version. All authors reviewed the manuscript.

\section*{Competing Interests}
The authors declare no competing interests.

\begin{acknowledgments}
This work is supported by the National Research Foundation, Singapore, and A*STAR under its CQT Bridging Grant; and by the same National Research Foundation, Singapore through the National Quantum Office, hosted in A*STAR, under its Centre for Quantum Technologies Funding Initiative (S24Q2d0009). The computation involved in this work is supported by NUS IT's Research Computing group under the grant NUSREC-HPC-00001. L.H.Z. thanks Harkan Kirk-Karakaya for pointing out that the explicit proof of the connection to the backflow constant given in Eq.~\eqref{eq:backflow} was missing in an earlier version of the manuscript.
\end{acknowledgments}

\bibliography{refs}

\begin{thebibliography}{28}%
\makeatletter
\providecommand \@ifxundefined [1]{%
 \@ifx{#1\undefined}
}%
\providecommand \@ifnum [1]{%
 \ifnum #1\expandafter \@firstoftwo
 \else \expandafter \@secondoftwo
 \fi
}%
\providecommand \@ifx [1]{%
 \ifx #1\expandafter \@firstoftwo
 \else \expandafter \@secondoftwo
 \fi
}%
\providecommand \natexlab [1]{#1}%
\providecommand \enquote  [1]{``#1''}%
\providecommand \bibnamefont  [1]{#1}%
\providecommand \bibfnamefont [1]{#1}%
\providecommand \citenamefont [1]{#1}%
\providecommand \href@noop [0]{\@secondoftwo}%
\providecommand \href [0]{\begingroup \@sanitize@url \@href}%
\providecommand \@href[1]{\@@startlink{#1}\@@href}%
\providecommand \@@href[1]{\endgroup#1\@@endlink}%
\providecommand \@sanitize@url [0]{\catcode `\\12\catcode `\$12\catcode `\&12\catcode `\#12\catcode `\^12\catcode `\_12\catcode `\%12\relax}%
\providecommand \@@startlink[1]{}%
\providecommand \@@endlink[0]{}%
\providecommand \url  [0]{\begingroup\@sanitize@url \@url }%
\providecommand \@url [1]{\endgroup\@href {#1}{\urlprefix }}%
\providecommand \urlprefix  [0]{URL }%
\providecommand \Eprint [0]{\href }%
\providecommand \doibase [0]{https://doi.org/}%
\providecommand \selectlanguage [0]{\@gobble}%
\providecommand \bibinfo  [0]{\@secondoftwo}%
\providecommand \bibfield  [0]{\@secondoftwo}%
\providecommand \translation [1]{[#1]}%
\providecommand \BibitemOpen [0]{}%
\providecommand \bibitemStop [0]{}%
\providecommand \bibitemNoStop [0]{.\EOS\space}%
\providecommand \EOS [0]{\spacefactor3000\relax}%
\providecommand \BibitemShut  [1]{\csname bibitem#1\endcsname}%
\let\auto@bib@innerbib\@empty
\bibitem [{\citenamefont {Ehrenfest}(1927)}]{Ehrenfest}%
  \BibitemOpen
  \bibfield  {author} {\bibinfo {author} {\bibfnamefont {P.}~\bibnamefont {Ehrenfest}},\ }\bibfield  {title} {\bibinfo {title} {Bemerkung {\"u}ber die angen{\"a}herte g{\"u}ltigkeit der klassischen mechanik innerhalb der quantenmechanik},\ }\href {https://doi.org/10.1007/BF01329203} {\bibfield  {journal} {\bibinfo  {journal} {Zeitschrift f{\"u}r Physik}\ }\textbf {\bibinfo {volume} {45}},\ \bibinfo {pages} {455} (\bibinfo {year} {1927})}\BibitemShut {NoStop}%
\bibitem [{\citenamefont {Tsirelson}(2006)}]{tsirelson-og}%
  \BibitemOpen
  \bibfield  {author} {\bibinfo {author} {\bibfnamefont {B.}~\bibnamefont {Tsirelson}},\ }\href@noop {} {\bibinfo {title} {How often is the coordinate of a harmonic oscillator positive?}} (\bibinfo {year} {2006}),\ \Eprint {https://arxiv.org/abs/quant-ph/0611147} {arXiv:quant-ph/0611147 [quant-ph]} \BibitemShut {NoStop}%
\bibitem [{\citenamefont {Emary}\ \emph {et~al.}(2013)\citenamefont {Emary}, \citenamefont {Lambert},\ and\ \citenamefont {Nori}}]{LG-inequalities}%
  \BibitemOpen
  \bibfield  {author} {\bibinfo {author} {\bibfnamefont {C.}~\bibnamefont {Emary}}, \bibinfo {author} {\bibfnamefont {N.}~\bibnamefont {Lambert}},\ and\ \bibinfo {author} {\bibfnamefont {F.}~\bibnamefont {Nori}},\ }\bibfield  {title} {\bibinfo {title} {{Leggett–Garg} inequalities},\ }\href {https://doi.org/10.1088/0034-4885/77/1/016001} {\bibfield  {journal} {\bibinfo  {journal} {Reports on Progress in Physics}\ }\textbf {\bibinfo {volume} {77}},\ \bibinfo {pages} {016001} (\bibinfo {year} {2013})}\BibitemShut {NoStop}%
\bibitem [{\citenamefont {Budroni}\ \emph {et~al.}(2022)\citenamefont {Budroni}, \citenamefont {Cabello}, \citenamefont {G\"uhne}, \citenamefont {Kleinmann},\ and\ \citenamefont {Larsson}}]{noncontextuality}%
  \BibitemOpen
  \bibfield  {author} {\bibinfo {author} {\bibfnamefont {C.}~\bibnamefont {Budroni}}, \bibinfo {author} {\bibfnamefont {A.}~\bibnamefont {Cabello}}, \bibinfo {author} {\bibfnamefont {O.}~\bibnamefont {G\"uhne}}, \bibinfo {author} {\bibfnamefont {M.}~\bibnamefont {Kleinmann}},\ and\ \bibinfo {author} {\bibfnamefont {J.-A.}\ \bibnamefont {Larsson}},\ }\bibfield  {title} {\bibinfo {title} {{Kochen-Specker} contextuality},\ }\href {https://doi.org/10.1103/RevModPhys.94.045007} {\bibfield  {journal} {\bibinfo  {journal} {Rev. Mod. Phys.}\ }\textbf {\bibinfo {volume} {94}},\ \bibinfo {pages} {045007} (\bibinfo {year} {2022})}\BibitemShut {NoStop}%
\bibitem [{\citenamefont {Kenfack}\ and\ \citenamefont {Życzkowski}(2004)}]{WignerNegativityVolume}%
  \BibitemOpen
  \bibfield  {author} {\bibinfo {author} {\bibfnamefont {A.}~\bibnamefont {Kenfack}}\ and\ \bibinfo {author} {\bibfnamefont {K.}~\bibnamefont {Życzkowski}},\ }\bibfield  {title} {\bibinfo {title} {Negativity of the {{Wigner}} function as an indicator of non-classicality},\ }\href {https://doi.org/10.1088/1464-4266/6/10/003} {\bibfield  {journal} {\bibinfo  {journal} {J. Opt. B}\ }\textbf {\bibinfo {volume} {6}},\ \bibinfo {pages} {396} (\bibinfo {year} {2004})}\BibitemShut {NoStop}%
\bibitem [{\citenamefont {Zaw}\ \emph {et~al.}(2022)\citenamefont {Zaw}, \citenamefont {Aw}, \citenamefont {Lasmar},\ and\ \citenamefont {Scarani}}]{tsirelson-spin}%
  \BibitemOpen
  \bibfield  {author} {\bibinfo {author} {\bibfnamefont {L.~H.}\ \bibnamefont {Zaw}}, \bibinfo {author} {\bibfnamefont {C.~C.}\ \bibnamefont {Aw}}, \bibinfo {author} {\bibfnamefont {Z.}~\bibnamefont {Lasmar}},\ and\ \bibinfo {author} {\bibfnamefont {V.}~\bibnamefont {Scarani}},\ }\bibfield  {title} {\bibinfo {title} {Detecting quantumness in uniform precessions},\ }\href {https://doi.org/10.1103/PhysRevA.106.032222} {\bibfield  {journal} {\bibinfo  {journal} {Phys. Rev. A}\ }\textbf {\bibinfo {volume} {106}},\ \bibinfo {pages} {032222} (\bibinfo {year} {2022})}\BibitemShut {NoStop}%
\bibitem [{\citenamefont {Werner}(1988)}]{wedgeIntegralWerner}%
  \BibitemOpen
  \bibfield  {author} {\bibinfo {author} {\bibfnamefont {R.~F.}\ \bibnamefont {Werner}},\ }\bibfield  {title} {\bibinfo {title} {Wigner quantisation of arrival time and oscillator phase},\ }\href {https://doi.org/10.1088/0305-4470/21/24/012} {\bibfield  {journal} {\bibinfo  {journal} {Journal of Physics A: Mathematical and General}\ }\textbf {\bibinfo {volume} {21}},\ \bibinfo {pages} {4565} (\bibinfo {year} {1988})}\BibitemShut {NoStop}%
\bibitem [{\citenamefont {Zaw}\ \emph {et~al.}(2025)\citenamefont {Zaw}, \citenamefont {Weilenmann},\ and\ \citenamefont {Scarani}}]{tsirelson-general}%
  \BibitemOpen
  \bibfield  {author} {\bibinfo {author} {\bibfnamefont {L.~H.}\ \bibnamefont {Zaw}}, \bibinfo {author} {\bibfnamefont {M.}~\bibnamefont {Weilenmann}},\ and\ \bibinfo {author} {\bibfnamefont {V.}~\bibnamefont {Scarani}},\ }\bibfield  {title} {\bibinfo {title} {Tsirelson's inequality for the precession protocol is maximally violated by quantum theory},\ }\href {https://doi.org/10.1103/PhysRevLett.134.190201} {\bibfield  {journal} {\bibinfo  {journal} {Phys. Rev. Lett.}\ }\textbf {\bibinfo {volume} {134}},\ \bibinfo {pages} {190201} (\bibinfo {year} {2025})}\BibitemShut {NoStop}%
\bibitem [{\citenamefont {Vaartjes}\ \emph {et~al.}(2025)\citenamefont {Vaartjes}, \citenamefont {Nurizzo}, \citenamefont {Zaw}, \citenamefont {Wilhelm}, \citenamefont {Yu}, \citenamefont {Holmes}, \citenamefont {Schwienbacher}, \citenamefont {Kringh{\o}j}, \citenamefont {van Blankenstein}, \citenamefont {Jakob}, \citenamefont {Hudson}, \citenamefont {Itoh}, \citenamefont {Murray}, \citenamefont {Blume-Kohout}, \citenamefont {Anand}, \citenamefont {Dzurak}, \citenamefont {Jamieson}, \citenamefont {Scarani},\ and\ \citenamefont {Morello}}]{tsirelson-experiment}%
  \BibitemOpen
  \bibfield  {author} {\bibinfo {author} {\bibfnamefont {A.}~\bibnamefont {Vaartjes}}, \bibinfo {author} {\bibfnamefont {M.}~\bibnamefont {Nurizzo}}, \bibinfo {author} {\bibfnamefont {L.~H.}\ \bibnamefont {Zaw}}, \bibinfo {author} {\bibfnamefont {B.}~\bibnamefont {Wilhelm}}, \bibinfo {author} {\bibfnamefont {X.}~\bibnamefont {Yu}}, \bibinfo {author} {\bibfnamefont {D.}~\bibnamefont {Holmes}}, \bibinfo {author} {\bibfnamefont {D.}~\bibnamefont {Schwienbacher}}, \bibinfo {author} {\bibfnamefont {A.}~\bibnamefont {Kringh{\o}j}}, \bibinfo {author} {\bibfnamefont {M.~R.}\ \bibnamefont {van Blankenstein}}, \bibinfo {author} {\bibfnamefont {A.~M.}\ \bibnamefont {Jakob}}, \bibinfo {author} {\bibfnamefont {F.~E.}\ \bibnamefont {Hudson}}, \bibinfo {author} {\bibfnamefont {K.~M.}\ \bibnamefont {Itoh}}, \bibinfo {author} {\bibfnamefont {R.~J.}\ \bibnamefont {Murray}}, \bibinfo {author} {\bibfnamefont {R.}~\bibnamefont {Blume-Kohout}}, \bibinfo {author} {\bibfnamefont {N.}~\bibnamefont {Anand}}, \bibinfo {author}
  {\bibfnamefont {A.~S.}\ \bibnamefont {Dzurak}}, \bibinfo {author} {\bibfnamefont {D.~N.}\ \bibnamefont {Jamieson}}, \bibinfo {author} {\bibfnamefont {V.}~\bibnamefont {Scarani}},\ and\ \bibinfo {author} {\bibfnamefont {A.}~\bibnamefont {Morello}},\ }\bibfield  {title} {\bibinfo {title} {Certifying the quantumness of a nuclear spin qudit through its uniform precession},\ }\bibfield  {journal} {\bibinfo  {journal} {Newton}\ }\textbf {\bibinfo {volume} {1}},\ \href {https://doi.org/10.1016/j.newton.2025.100017} {10.1016/j.newton.2025.100017} (\bibinfo {year} {2025})\BibitemShut {NoStop}%
\bibitem [{\citenamefont {Jayachandran}\ \emph {et~al.}(2023)\citenamefont {Jayachandran}, \citenamefont {Zaw},\ and\ \citenamefont {Scarani}}]{tsirelson-harmonic-entanglement}%
  \BibitemOpen
  \bibfield  {author} {\bibinfo {author} {\bibfnamefont {P.}~\bibnamefont {Jayachandran}}, \bibinfo {author} {\bibfnamefont {L.~H.}\ \bibnamefont {Zaw}},\ and\ \bibinfo {author} {\bibfnamefont {V.}~\bibnamefont {Scarani}},\ }\bibfield  {title} {\bibinfo {title} {Dynamics-based entanglement witnesses for non-gaussian states of harmonic oscillators},\ }\href {https://doi.org/10.1103/PhysRevLett.130.160201} {\bibfield  {journal} {\bibinfo  {journal} {Phys. Rev. Lett.}\ }\textbf {\bibinfo {volume} {130}},\ \bibinfo {pages} {160201} (\bibinfo {year} {2023})}\BibitemShut {NoStop}%
\bibitem [{\citenamefont {Huynh-Vu}\ \emph {et~al.}(2024)\citenamefont {Huynh-Vu}, \citenamefont {Zaw},\ and\ \citenamefont {Scarani}}]{tsirelson-spin-entanglement}%
  \BibitemOpen
  \bibfield  {author} {\bibinfo {author} {\bibfnamefont {K.-N.}\ \bibnamefont {Huynh-Vu}}, \bibinfo {author} {\bibfnamefont {L.~H.}\ \bibnamefont {Zaw}},\ and\ \bibinfo {author} {\bibfnamefont {V.}~\bibnamefont {Scarani}},\ }\bibfield  {title} {\bibinfo {title} {Certification of genuine multipartite entanglement in spin ensembles with measurements of total angular momentum},\ }\href {https://doi.org/10.1103/PhysRevA.109.042402} {\bibfield  {journal} {\bibinfo  {journal} {Phys. Rev. A}\ }\textbf {\bibinfo {volume} {109}},\ \bibinfo {pages} {042402} (\bibinfo {year} {2024})}\BibitemShut {NoStop}%
\bibitem [{\citenamefont {Zaw}\ and\ \citenamefont {Scarani}(2023)}]{tsirelson-anharmonic}%
  \BibitemOpen
  \bibfield  {author} {\bibinfo {author} {\bibfnamefont {L.~H.}\ \bibnamefont {Zaw}}\ and\ \bibinfo {author} {\bibfnamefont {V.}~\bibnamefont {Scarani}},\ }\bibfield  {title} {\bibinfo {title} {Dynamics-based quantumness certification of continuous variables using time-independent hamiltonians with one degree of freedom},\ }\href {https://doi.org/10.1103/PhysRevA.108.022211} {\bibfield  {journal} {\bibinfo  {journal} {Phys. Rev. A}\ }\textbf {\bibinfo {volume} {108}},\ \bibinfo {pages} {022211} (\bibinfo {year} {2023})}\BibitemShut {NoStop}%
\bibitem [{\citenamefont {Pl\'avala}\ \emph {et~al.}(2024)\citenamefont {Pl\'avala}, \citenamefont {Heinosaari}, \citenamefont {Nimmrichter},\ and\ \citenamefont {G\"uhne}}]{tsirelson-inequalities}%
  \BibitemOpen
  \bibfield  {author} {\bibinfo {author} {\bibfnamefont {M.}~\bibnamefont {Pl\'avala}}, \bibinfo {author} {\bibfnamefont {T.}~\bibnamefont {Heinosaari}}, \bibinfo {author} {\bibfnamefont {S.}~\bibnamefont {Nimmrichter}},\ and\ \bibinfo {author} {\bibfnamefont {O.}~\bibnamefont {G\"uhne}},\ }\bibfield  {title} {\bibinfo {title} {Tsirelson inequalities: Detecting cheating and quantumness in a single framework},\ }\href {https://doi.org/10.1103/PhysRevA.109.062216} {\bibfield  {journal} {\bibinfo  {journal} {Phys. Rev. A}\ }\textbf {\bibinfo {volume} {109}},\ \bibinfo {pages} {062216} (\bibinfo {year} {2024})}\BibitemShut {NoStop}%
\bibitem [{\citenamefont {Bracken}\ and\ \citenamefont {Melloy}(1994)}]{quantum-backflow}%
  \BibitemOpen
  \bibfield  {author} {\bibinfo {author} {\bibfnamefont {A.~J.}\ \bibnamefont {Bracken}}\ and\ \bibinfo {author} {\bibfnamefont {G.~F.}\ \bibnamefont {Melloy}},\ }\bibfield  {title} {\bibinfo {title} {Probability backflow and a new dimensionless quantum number},\ }\href {https://doi.org/10.1088/0305-4470/27/6/040} {\bibfield  {journal} {\bibinfo  {journal} {Journal of Physics A: Mathematical and General}\ }\textbf {\bibinfo {volume} {27}},\ \bibinfo {pages} {2197} (\bibinfo {year} {1994})}\BibitemShut {NoStop}%
\bibitem [{\citenamefont {Trillo}\ \emph {et~al.}(2023)\citenamefont {Trillo}, \citenamefont {Le},\ and\ \citenamefont {Navascu{\'e}s}}]{quantum-rockets}%
  \BibitemOpen
  \bibfield  {author} {\bibinfo {author} {\bibfnamefont {D.}~\bibnamefont {Trillo}}, \bibinfo {author} {\bibfnamefont {T.~P.}\ \bibnamefont {Le}},\ and\ \bibinfo {author} {\bibfnamefont {M.}~\bibnamefont {Navascu{\'e}s}},\ }\bibfield  {title} {\bibinfo {title} {Quantum advantages for transportation tasks - projectiles, rockets and quantum backflow},\ }\href {https://doi.org/10.1038/s41534-023-00739-z} {\bibfield  {journal} {\bibinfo  {journal} {npj Quantum Information}\ }\textbf {\bibinfo {volume} {9}},\ \bibinfo {pages} {69} (\bibinfo {year} {2023})}\BibitemShut {NoStop}%
\bibitem [{\citenamefont {Zachos}\ \emph {et~al.}(2005)\citenamefont {Zachos}, \citenamefont {Fairlie},\ and\ \citenamefont {Curtright}}]{wigner-review}%
  \BibitemOpen
  \bibfield  {author} {\bibinfo {author} {\bibfnamefont {C.~K.}\ \bibnamefont {Zachos}}, \bibinfo {author} {\bibfnamefont {D.~B.}\ \bibnamefont {Fairlie}},\ and\ \bibinfo {author} {\bibfnamefont {T.~L.}\ \bibnamefont {Curtright}},\ }\href {https://doi.org/10.1142/5287} {\emph {\bibinfo {title} {Quantum Mechanics in Phase Space}}}\ (\bibinfo  {publisher} {WORLD SCIENTIFIC},\ \bibinfo {year} {2005})\BibitemShut {NoStop}%
\bibitem [{\citenamefont {Takagi}\ and\ \citenamefont {Zhuang}(2018)}]{NonGaussianResourceTheory1}%
  \BibitemOpen
  \bibfield  {author} {\bibinfo {author} {\bibfnamefont {R.}~\bibnamefont {Takagi}}\ and\ \bibinfo {author} {\bibfnamefont {Q.}~\bibnamefont {Zhuang}},\ }\bibfield  {title} {\bibinfo {title} {Convex resource theory of non-{Gaussianity}},\ }\href {https://doi.org/10.1103/PhysRevA.97.062337} {\bibfield  {journal} {\bibinfo  {journal} {Phys. Rev. A}\ }\textbf {\bibinfo {volume} {97}},\ \bibinfo {pages} {062337} (\bibinfo {year} {2018})}\BibitemShut {NoStop}%
\bibitem [{\citenamefont {Albarelli}\ \emph {et~al.}(2018)\citenamefont {Albarelli}, \citenamefont {Genoni}, \citenamefont {Paris},\ and\ \citenamefont {Ferraro}}]{NonGaussianResourceTheory2}%
  \BibitemOpen
  \bibfield  {author} {\bibinfo {author} {\bibfnamefont {F.}~\bibnamefont {Albarelli}}, \bibinfo {author} {\bibfnamefont {M.~G.}\ \bibnamefont {Genoni}}, \bibinfo {author} {\bibfnamefont {M.~G.~A.}\ \bibnamefont {Paris}},\ and\ \bibinfo {author} {\bibfnamefont {A.}~\bibnamefont {Ferraro}},\ }\bibfield  {title} {\bibinfo {title} {Resource theory of quantum non-{Gaussianity} and {Wigner} negativity},\ }\href {https://doi.org/10.1103/PhysRevA.98.052350} {\bibfield  {journal} {\bibinfo  {journal} {Phys. Rev. A}\ }\textbf {\bibinfo {volume} {98}},\ \bibinfo {pages} {052350} (\bibinfo {year} {2018})}\BibitemShut {NoStop}%
\bibitem [{\citenamefont {Wood}\ and\ \citenamefont {Bracken}(2005)}]{wedgeIntegralWoodBracken}%
  \BibitemOpen
  \bibfield  {author} {\bibinfo {author} {\bibfnamefont {J.~G.}\ \bibnamefont {Wood}}\ and\ \bibinfo {author} {\bibfnamefont {A.~J.}\ \bibnamefont {Bracken}},\ }\bibfield  {title} {\bibinfo {title} {{Bounds on integrals of the Wigner function: The hyperbolic case}},\ }\href {https://doi.org/10.1063/1.1851971} {\bibfield  {journal} {\bibinfo  {journal} {Journal of Mathematical Physics}\ }\textbf {\bibinfo {volume} {46}},\ \bibinfo {pages} {042103} (\bibinfo {year} {2005})}\BibitemShut {NoStop}%
\bibitem [{\citenamefont {Várilly}\ and\ \citenamefont {Gracia-Bondía}(1989)}]{sphericalWigner}%
  \BibitemOpen
  \bibfield  {author} {\bibinfo {author} {\bibfnamefont {J.~C.}\ \bibnamefont {Várilly}}\ and\ \bibinfo {author} {\bibfnamefont {J.}~\bibnamefont {Gracia-Bondía}},\ }\bibfield  {title} {\bibinfo {title} {The moyal representation for spin},\ }\href {https://doi.org/https://doi.org/10.1016/0003-4916(89)90262-5} {\bibfield  {journal} {\bibinfo  {journal} {Annals of Physics}\ }\textbf {\bibinfo {volume} {190}},\ \bibinfo {pages} {107} (\bibinfo {year} {1989})}\BibitemShut {NoStop}%
\bibitem [{\citenamefont {Zaw}(2024)}]{CD-entanglement}%
  \BibitemOpen
  \bibfield  {author} {\bibinfo {author} {\bibfnamefont {L.~H.}\ \bibnamefont {Zaw}},\ }\bibfield  {title} {\bibinfo {title} {Certifiable lower bounds of wigner negativity volume and non-gaussian entanglement with conditional displacement gates},\ }\href {https://doi.org/10.1103/PhysRevLett.133.050201} {\bibfield  {journal} {\bibinfo  {journal} {Phys. Rev. Lett.}\ }\textbf {\bibinfo {volume} {133}},\ \bibinfo {pages} {050201} (\bibinfo {year} {2024})}\BibitemShut {NoStop}%
\bibitem [{\citenamefont {Liu}\ \emph {et~al.}(2024)\citenamefont {Liu}, \citenamefont {Guo}, \citenamefont {He},\ and\ \citenamefont {Fadel}}]{phase-space-entanglement}%
  \BibitemOpen
  \bibfield  {author} {\bibinfo {author} {\bibfnamefont {S.}~\bibnamefont {Liu}}, \bibinfo {author} {\bibfnamefont {J.}~\bibnamefont {Guo}}, \bibinfo {author} {\bibfnamefont {Q.}~\bibnamefont {He}},\ and\ \bibinfo {author} {\bibfnamefont {M.}~\bibnamefont {Fadel}},\ }\href {https://arxiv.org/abs/2409.17891} {\bibinfo {title} {Quantum entanglement in phase space}} (\bibinfo {year} {2024}),\ \Eprint {https://arxiv.org/abs/2409.17891} {arXiv:2409.17891 [quant-ph]} \BibitemShut {NoStop}%
\bibitem [{\citenamefont {Hillmann}\ and\ \citenamefont {Quijandr\'{\i}a}(2023)}]{squeezedCat1}%
  \BibitemOpen
  \bibfield  {author} {\bibinfo {author} {\bibfnamefont {T.}~\bibnamefont {Hillmann}}\ and\ \bibinfo {author} {\bibfnamefont {F.}~\bibnamefont {Quijandr\'{\i}a}},\ }\bibfield  {title} {\bibinfo {title} {Quantum error correction with dissipatively stabilized squeezed-cat qubits},\ }\href {https://doi.org/10.1103/PhysRevA.107.032423} {\bibfield  {journal} {\bibinfo  {journal} {Phys. Rev. A}\ }\textbf {\bibinfo {volume} {107}},\ \bibinfo {pages} {032423} (\bibinfo {year} {2023})}\BibitemShut {NoStop}%
\bibitem [{\citenamefont {Xu}\ \emph {et~al.}(2023)\citenamefont {Xu}, \citenamefont {Zheng}, \citenamefont {Wang}, \citenamefont {Zoller}, \citenamefont {Clerk},\ and\ \citenamefont {Jiang}}]{squeezedCat2}%
  \BibitemOpen
  \bibfield  {author} {\bibinfo {author} {\bibfnamefont {Q.}~\bibnamefont {Xu}}, \bibinfo {author} {\bibfnamefont {G.}~\bibnamefont {Zheng}}, \bibinfo {author} {\bibfnamefont {Y.-X.}\ \bibnamefont {Wang}}, \bibinfo {author} {\bibfnamefont {P.}~\bibnamefont {Zoller}}, \bibinfo {author} {\bibfnamefont {A.~A.}\ \bibnamefont {Clerk}},\ and\ \bibinfo {author} {\bibfnamefont {L.}~\bibnamefont {Jiang}},\ }\bibfield  {title} {\bibinfo {title} {Autonomous quantum error correction and fault-tolerant quantum computation with squeezed cat qubits},\ }\href {https://doi.org/10.1038/s41534-023-00746-0} {\bibfield  {journal} {\bibinfo  {journal} {npj Quantum Inf.}\ }\textbf {\bibinfo {volume} {9}},\ \bibinfo {pages} {78} (\bibinfo {year} {2023})}\BibitemShut {NoStop}%
\bibitem [{\citenamefont {Chen}\ \emph {et~al.}(2024)\citenamefont {Chen}, \citenamefont {Tiong}, \citenamefont {Zaw},\ and\ \citenamefont {Scarani}}]{tsirelson-even}%
  \BibitemOpen
  \bibfield  {author} {\bibinfo {author} {\bibfnamefont {J.}~\bibnamefont {Chen}}, \bibinfo {author} {\bibfnamefont {J.}~\bibnamefont {Tiong}}, \bibinfo {author} {\bibfnamefont {L.~H.}\ \bibnamefont {Zaw}},\ and\ \bibinfo {author} {\bibfnamefont {V.}~\bibnamefont {Scarani}},\ }\bibfield  {title} {\bibinfo {title} {Even-parity precession protocol for detecting nonclassicality and entanglement},\ }\href {https://doi.org/10.1103/PhysRevA.110.062408} {\bibfield  {journal} {\bibinfo  {journal} {Phys. Rev. A}\ }\textbf {\bibinfo {volume} {110}},\ \bibinfo {pages} {062408} (\bibinfo {year} {2024})}\BibitemShut {NoStop}%
\bibitem [{\citenamefont {Pl\'avala}\ and\ \citenamefont {Kleinmann}(2022)}]{GPT-CV-1}%
  \BibitemOpen
  \bibfield  {author} {\bibinfo {author} {\bibfnamefont {M.}~\bibnamefont {Pl\'avala}}\ and\ \bibinfo {author} {\bibfnamefont {M.}~\bibnamefont {Kleinmann}},\ }\bibfield  {title} {\bibinfo {title} {{Operational Theories in Phase Space: Toy Model for the Harmonic Oscillator}},\ }\href {https://doi.org/10.1103/PhysRevLett.128.040405} {\bibfield  {journal} {\bibinfo  {journal} {Phys. Rev. Lett.}\ }\textbf {\bibinfo {volume} {128}},\ \bibinfo {pages} {040405} (\bibinfo {year} {2022})}\BibitemShut {NoStop}%
\bibitem [{\citenamefont {Jiang}\ \emph {et~al.}(2024)\citenamefont {Jiang}, \citenamefont {Terno},\ and\ \citenamefont {Dahlsten}}]{GPT-CV-2}%
  \BibitemOpen
  \bibfield  {author} {\bibinfo {author} {\bibfnamefont {L.}~\bibnamefont {Jiang}}, \bibinfo {author} {\bibfnamefont {D.~R.}\ \bibnamefont {Terno}},\ and\ \bibinfo {author} {\bibfnamefont {O.}~\bibnamefont {Dahlsten}},\ }\bibfield  {title} {\bibinfo {title} {Framework for generalized hamiltonian systems through reasonable postulates},\ }\href {https://doi.org/10.1103/PhysRevA.109.032218} {\bibfield  {journal} {\bibinfo  {journal} {Phys. Rev. A}\ }\textbf {\bibinfo {volume} {109}},\ \bibinfo {pages} {032218} (\bibinfo {year} {2024})}\BibitemShut {NoStop}%
\bibitem [{\citenamefont {Chen}(2010)}]{subseries}%
  \BibitemOpen
  \bibfield  {author} {\bibinfo {author} {\bibfnamefont {H.}~\bibnamefont {Chen}},\ }\bibfield  {title} {\bibinfo {title} {On the summation of subseries in closed form},\ }\href {https://doi.org/10.1080/00207390903477475} {\bibfield  {journal} {\bibinfo  {journal} {International Journal of Mathematical Education in Science and Technology}\ }\textbf {\bibinfo {volume} {41}},\ \bibinfo {pages} {538} (\bibinfo {year} {2010})}\BibitemShut {NoStop}%
\end{thebibliography}%

\clearpage

\phantomsection\addcontentsline{toc}{part}{Supplementary Information for: All three-angle variants of Tsirelson's precession protocol, and improved bounds for wedge integrals of Wigner functions}

\title{Supplementary Information for: All three-angle variants of Tsirelson's precession protocol, and improved bounds for wedge integrals of Wigner functions}

\maketitle

\setcounter{section}{0}
\setcounter{figure}{0}
\setcounter{equation}{0}

\renewcommand{\thesection}{S\arabic{section}}
\renewcommand{\thefigure}{S\arabic{figure}}
\renewcommand{\theequation}{S\arabic{equation}}

\renewcommand{\figurename}{Supplementary Figure}

\section{\label{apd:classical-bound-proof}Proof of Classical Bound}
In this section, we present the proof of the classical bound for the general precession protocol that involves probing at $K$ arbitrary angles, with $K$ odd. Collect the measured angles in the vector $\vec{\theta} := (\theta_k)_{k=1}^{K-1}$ with $0 =: \theta_0 \leq \theta_1 \leq \dots \leq \theta_{K-1} \leq 2\pi$. That is, we place all the angles in a circle and label them in increasing order in a clockwise direction.
\begin{theorem}\label{thm:classical-bound}
The classical bound is $\mathbf{P}^c(\vec{\theta}) = 1 - \delta/K$ for $\delta \in \{0,1,\dots,(K-1)/2\}$ if and only if
\begin{equation}\label{eq:condition-general}
\forall k: \theta_{(k\,\oplus_{_{\!K}}\delta)} \ominus_{_{\!2\pi}} \theta_k \leq \pi.
\end{equation}
\end{theorem}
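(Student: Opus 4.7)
The plan is to reduce the classical maximization to a combinatorial question about the number of points $\theta_k$ contained in open semicircles of the unit circle. Starting from the reduction to extremal Dirac distributions already carried out in the main text, I would parameterize a direction in the plane as $(A_x, A_y) = r(\cos\phi, \sin\phi)$. Since $\Theta$ depends only on the sign of its argument, the radius $r$ is irrelevant and each summand becomes $\Theta(\cos(\theta_k - \phi))$, equal to $1$, $1/2$, or $0$ according as $\theta_k$ lies inside, on the boundary of, or outside the open semicircle $S(\phi) = (\phi - \pi/2, \phi + \pi/2) \bmod 2\pi$. A short perturbation argument then shows that the piecewise-constant score function $\phi \mapsto \frac{1}{K}\sum_k \Theta(\cos(\theta_k - \phi))$ attains its maximum at a generic $\phi$ where no $\theta_k$ sits on the boundary of $S(\phi)$, so
\begin{equation*}
K \cdot \mathbf{P}^c(\vec\theta) \;=\; \max_{\phi}\;\#\{k : \theta_k \in S(\phi)\}.
\end{equation*}

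The remaining task is purely combinatorial: show that this maximum equals $K - \delta$ precisely when $\delta$ is the largest value in $\{0, 1, \dots, (K-1)/2\}$ for which the stated condition holds. This reduces to two implications: (i) if the condition holds at $\delta$, then $\max_\phi \#\{\theta_k \in S(\phi)\} \leq K - \delta$; and (ii) if the condition fails at $\delta + 1$, then $\max_\phi \#\{\theta_k \in S(\phi)\} \geq K - \delta$.

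For implication (i), I argue by contradiction. If some open semicircle $S(\phi)$ contained at least $K - \delta + 1$ angles, those angles would form a circularly-consecutive block $\theta_a, \theta_{a \oplus 1}, \dots, \theta_{a \oplus (K - \delta)}$ fitting in an open arc of length $\pi$, so $\theta_{a \oplus (K - \delta)} \ominus_{2\pi} \theta_a < \pi$. Applying the assumed condition at the index $k = a \oplus_K (K - \delta)$, for which $k \oplus_K \delta = a$, gives $\theta_a \ominus_{2\pi} \theta_{a \oplus (K - \delta)} \leq \pi$. But this quantity equals $2\pi - (\theta_{a \oplus (K - \delta)} \ominus_{2\pi} \theta_a) > \pi$, a contradiction.

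For implication (ii), I use a witness index $k_0$ with $\theta_{k_0 \oplus (\delta + 1)} \ominus_{2\pi} \theta_{k_0} > \pi$. The complementary arc going counterclockwise from $\theta_{k_0 \oplus (\delta + 1)}$ back to $\theta_{k_0}$ then has arc length strictly less than $\pi$ and contains precisely the $K - \delta$ angles $\theta_{k_0 \oplus (\delta + 1)}, \theta_{k_0 \oplus (\delta + 2)}, \dots, \theta_{k_0}$. Since this closed arc has length $< \pi$, it can be slightly enlarged to an open arc of length exactly $\pi$ still containing all $K - \delta$ of those angles, exhibiting a $\phi$ with $\#\{\theta_k \in S(\phi)\} \geq K - \delta$. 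The main obstacle is finicky bookkeeping of the modular indices $\oplus_K$ and $\ominus_{2\pi}$ and the open/closed arc distinction in the borderline cases (a $\theta_k$ sitting exactly at $\phi \pm \pi/2$, or an angular gap equal to $\pi$), after which the counting argument is straightforward.
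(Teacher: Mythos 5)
Your proposal is correct and follows essentially the same route as the paper: reduce $K\,\mathbf{P}^c(\vec\theta)$ to the maximum number of $\theta_k$ lying in an open semicircle, then argue combinatorially about blocks of cyclically consecutive angles in both directions. The only cosmetic difference is that you phrase the converse via the shifted condition at $\delta+1$ (so the pair of implications directly pins down the equality), whereas the paper runs its contrapositive at $\delta$ itself and anchors on a reference point chosen in the fourth quadrant; the two are interchangeable after reindexing, and your explicit $\phi$-parameterization together with the perturbation to a generic orientation makes the semicircle reduction a bit crisper.
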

\begin{proof}
For any initial classical state $(A_x(0),A_y(0))$, the $K$ points will take angles $\phi_0 + \theta_k$ with $\phi_0 := \atantwo(A_y(0),A_x(0))$. Let $\phi_0 + \theta_{k_0}$ be the point in the $A_x >0 \land A_y<0$ plane closest to the $A_y$ axis.

``\emph{if}'': If $(\phi_0 + \theta_{k_0\,\oplus_{_{\!K}}\delta}) \ominus_{_{\!2\pi}} (\phi_0 + \theta_{k_0}) \leq \pi$, this means that there must be at least $1 + \delta$ points on the left of a straight line from  $\phi_0 + \theta_{k_0}$ to the origin, which means that there must be at least $\delta$ points on the negative $A_x$ plane, which therefore implies that there must be at most $K - \delta$ points on the positive $A_x$ plane. Since this is true for every $\phi_0$, Eq.~\eqref{eq:condition-general} implies
\begin{equation}
P^c(\vec{\theta}) \leq \mathbf{P}^c(\vec{\theta}) = \frac{K-\delta}{K} = 1 - \frac{\delta}{K}.
\end{equation}

``\emph{only if}'': Assume Eq.~\eqref{eq:condition-general} is false. Then, there is some $k_0$ such that $\theta_{k_0\,\oplus_{_{\!K}}\delta} \ominus_{_{\!2\pi}} \theta_{k_0} > \pi$. We can choose a $\phi_0$ such that both $\phi_0+\theta_{k_0\,\oplus_{_{\!K}}\delta}$ and $\phi_0+\theta_{k_0}$ are in the positive $A_x$ plane, as the angle from the former to the latter in the anticlockwise direction is less than $\pi$. Since there are $K-\delta-1$ points between the former and latter points excluding those two points, which would all be in the positive $A_x$ plane, the score would be \begin{equation}
P^c(\vec{\theta}) = \frac{(K-\delta-1) + 2}{K}
= \pqty{1 - \frac{\delta}{K}} + \frac{1}{K},
\end{equation}
so $\mathbf{P}^c(\vec{\theta}) \geq P^c(\vec{\theta}) > 1-\delta/K$. Taking the converse, $\mathbf{P}^c(\vec{\theta}) \leq 1-\delta/K$ implies Eq.~\eqref{eq:condition-general}.
\end{proof}

\section{\label{apd:symplectic-relation}Transformation Between Three-angle Generalisations of Tsirelson's Precession Protocol with the Quantum Harmonic Oscillator}

\begin{theorem}\label{prop:boundary} $\mathbf{P}^\infty(\vec{\theta}) = 2/3$ on the boundary of the $\mathbf{P}^c(\vec{\theta}) = 2/3$ region.
\end{theorem}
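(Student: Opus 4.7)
The plan is to reduce $Q^\infty(\vec\theta)$ on the boundary of $\triangle$ to an affine function of a single half-line projector, whose spectral norm is immediate. By Eq.~\eqref{eq:condition}, the boundary of $\triangle$ is precisely the locus where at least one of the cyclic inequalities $\theta_{k\oplus_3 1}\ominus_{2\pi}\theta_k \leq \pi$ saturates to an equality. Using the probing-angle equivalence Eq.~\eqref{eq:angles-equivalence} together with the constant offset $\theta_k\to\theta_k-\theta_0$, I first reduce to the canonical case where the saturating pair is $\theta_0=0$, $\theta_1=\pi$, and $\theta_2$ ranges freely over $[0,2\pi)$.

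Next, I would invoke the key identity $X(\theta+\pi)=-X(\theta)$, which is immediate from $X(\theta) = X\cos\theta + P\sin\theta$. The scalar identity $\Theta(x)+\Theta(-x)=1$ (valid for all $x\in\mathbb{R}$, including $x=0$ by the half-weight convention), promoted to the operator identity $\Theta(A)+\Theta(-A)=\mathbbm{1}$ via the spectral theorem for the self-adjoint quadrature $X(0)$, gives $\Theta[X(\pi)] = \mathbbm{1} - \Theta[X(0)]$. The first two terms in $3Q^\infty(\vec\theta) = \Theta[X(0)] + \Theta[X(\pi)] + \Theta[X(\theta_2)]$ then collapse to $\mathbbm{1}$, leaving
\begin{equation}
3Q^\infty(\vec\theta) = \mathbbm{1} + \Theta[X(\theta_2)].
\end{equation}

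Finally, since $\Theta[X(\theta_2)]$ is an orthogonal projector onto the positive half-line of $X(\theta_2)$, it has spectrum contained in $\{0,1\}$, so $\|\mathbbm{1} + \Theta[X(\theta_2)]\|_\infty = 2$ and hence $\mathbf{P}^\infty(\vec\theta) = \|Q^\infty(\vec\theta)\|_\infty = 2/3$ by Eq.~\eqref{eq:score-operator}. I anticipate no real obstacle: the proof is essentially the two-line operator calculation above. The only minor subtlety is that $\Theta[X(\theta_2)]$ has no normalizable eigenvector of eigenvalue $1$ (the spectrum of $X(\theta_2)$ is purely continuous), so the value $2/3$ is a supremum approached by states increasingly concentrated on $X(\theta_2)>0$ rather than a maximum attained by a specific state; but this is already accommodated by the spectral-norm convention in Eq.~\eqref{eq:score-operator}, and the classical lower bound $\mathbf{P}^\infty(\vec\theta)\geq\mathbf{P}^c(\vec\theta)=2/3$ — inherited by taking the boundary as a limit of the interior via coherent-state approximations of classical distributions — provides an independent sanity check.
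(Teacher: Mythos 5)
Your proof follows exactly the same route as the paper's: reduce the boundary to the canonical case where two of the probing angles are antipodal ($\theta_0=0,\theta_1=\pi$), collapse $\Theta[X(0)]+\Theta[X(\pi)]=\mathbbm{1}$, and read off $\|\mathbbm{1}+\Theta[X(\theta_2)]\|_\infty=2$. The core argument is correct.

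However, the closing aside is wrong. You claim that ``$\Theta[X(\theta_2)]$ has no normalizable eigenvector of eigenvalue~$1$ (the spectrum of $X(\theta_2)$ is purely continuous), so the value $2/3$ is a supremum\ldots rather than a maximum attained by a specific state.'' This confuses the spectral type of $X(\theta_2)$ with that of the bounded operator $\Theta[X(\theta_2)]$. Since the spectral measure of the quadrature $X(\theta_2)$ has no atom at $0$, the operator $\Theta[X(\theta_2)]$ is genuinely an orthogonal projection, and its eigenvalue-$1$ eigenspace is the (infinite-dimensional) subspace of wavefunctions supported on $\{x(\theta_2)>0\}$. Any normalized $\psi$ with, say, $\langle x(\theta_2)|\psi\rangle = 0$ for $x(\theta_2)\le 0$ satisfies $\Theta[X(\theta_2)]\psi=\psi$ exactly and hence attains $P(\vec\theta)=2/3$. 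This is what the paper means by ``saturated by any state with only positive support on $X(\theta_2)$.'' Your limiting/coherent-state argument for the lower bound is therefore unnecessary: the maximum is attained, not merely approached. None of this affects the validity of your main calculation, which stands on the spectral-norm identity alone.
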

\begin{proof}
    On the boundary, exactly one or two of Eq.~\eqref{eq:condition} from the main text is satisfied: so, at least one inequality will be saturated and at least one other will be strict. By Eq.~\eqref{eq:angles-equivalence} from the main text, all such boundaries are equivalent. Consider the one defined by $\theta_1 = \pi$ and $\theta_2-\theta_1 < \pi$. Then,
    \begin{equation}
    \begin{aligned}
        Q^\infty(\vec{\theta})
        &= \frac{1}{3}\Bqty\Big{
            \underbrace{\Theta(X) + \Theta(-X)}_{\mathbbm{1}} + \Theta[X(\theta_2)]
        } \\
        &= \frac{1}{3}\mathbbm{1} + \frac{1}{3}\Theta[X(\theta_2)].
        \end{aligned}
    \end{equation}
    Since $\langle \Theta[X(\theta_2)] \rangle \leq 1$, this gives $\mathbf{P}^\infty \leq 2/3$, which is saturated by any state with only positive support on $X(\theta_2)$.
\end{proof}
\begin{theorem}\label{prop:interior}
    $\mathbf{P}^\infty(\vec{\theta}) = \mathbf{P}_3^\infty$ in the interior region $\triangle$ where $\mathbf{P}^c(\vec{\theta}) = 2/3$.
\end{theorem}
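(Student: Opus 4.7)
The plan is to establish Eq.~(11) from the main text, i.e.\ to exhibit a unitary $U(\vec{\theta})$, built from phase-shift and squeeze operations, that conjugates $Q_3^\infty$ into $Q^\infty(\vec{\theta})$ whenever $\vec{\theta}$ lies in the interior of $\triangle$. Once this intertwining is in hand, the spectra of the two operators coincide, so $\mathbf{P}^\infty(\vec{\theta}) = \|Q^\infty(\vec{\theta})\|_\infty = \|Q_3^\infty\|_\infty = \mathbf{P}_3^\infty$, which is the claim.

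The two key ingredients would be recorded first. Phase shifts $e^{-i\phi(X^2+P^2)/2}$ and squeezes $e^{-i\lambda(XP+PX)/2}$ act on $(X,P)$ by conjugation as real symplectic linear maps in $SL(2,\mathbb{R})$; by the metaplectic representation, every $S\in SL(2,\mathbb{R})$ is realised this way. Secondly, $\Theta$ is positively homogeneous of degree zero, so $\Theta(cA)=\Theta(A)$ for any Hermitian $A$ and any $c>0$. Together these say that if a unitary $U$ implements $S$ and $S(\cos\theta,\sin\theta)^T = r(\cos\theta',\sin\theta')^T$ with $r>0$, then $U X(\theta)U^\dagger = r\,X(\theta')$ and hence $U\,\Theta[X(\theta)]\,U^\dagger = \Theta[X(\theta')]$.

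The theorem is therefore reduced to a purely geometric statement: for every $\vec{\theta}$ in the interior of $\triangle$, find $S\in SL(2,\mathbb{R})$ that simultaneously maps each of the three rays at angles $2\pi k/3$ ($k=0,1,2$) to a positive multiple of the ray at $\theta_k$. I would realise $S$ via an Euler/Iwasawa decomposition $S = R(\phi_1)\,\Sigma(r)\,R(\phi_2)$, where $R$ is a rotation (phase shift) and $\Sigma(r) = \operatorname{diag}(e^{-r},e^{r})$ is a squeeze. This gives exactly three real parameters to match three target angles. Explicitly, I would use $R(\phi_2)$ to reposition the original triple so that the image of the first ray is aligned with a convenient axis, apply $\Sigma(r)$ (whose effect on a ray at angle $\alpha$ is the new angle $\alpha'$ satisfying $\tan\alpha' = e^{2r}\tan\alpha$, multiplied by a positive factor), and finally rotate by $R(\phi_1)$ to deliver the remaining two rays to $\theta_1$ and $\theta_2$. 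Solving the resulting three equations gives $(\phi_1,r,\phi_2)$ as an explicit function of $\vec{\theta}$, and the desired unitary is
\begin{equation*}
U(\vec{\theta}) = e^{-i\phi_1(X^2+P^2)/2}\, e^{-ir(XP+PX)/2}\, e^{-i\phi_2(X^2+P^2)/2}.
\end{equation*}

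The main obstacle is not the existence of \emph{some} real $S$ mapping the three lines (this follows from $3$-transitivity of $SL(2,\mathbb{R})$ on $\mathbb{RP}^1$) but rather verifying that, throughout the open interior of $\triangle$, the three resulting scalars $r_k$ all come out positive, so that rays are mapped to rays rather than to their opposites. Here the interior hypothesis does the work: the condition $\theta_{k\oplus_3 1}\ominus_{2\pi}\theta_k < \pi$ for every $k$ ensures $\theta_1\in(0,\pi)$ and $\theta_2\in(\pi,2\pi)$, so the cyclic orientation of the target triple matches that of $\{0,2\pi/3,4\pi/3\}$; since $SL(2,\mathbb{R})$ is orientation-preserving, the three positive-scalar matchings can be achieved simultaneously. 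Approaching the boundary one of the scalars forces $r\to\infty$ and $S$ degenerates, which is consistent with the drop $\mathbf{P}^\infty\to 2/3$ proved in Theorem~\ref{prop:boundary}. Once positivity is confirmed, Step 1 packages the result into $U(\vec{\theta})\,\Theta[X(2\pi k/3)]\,U^\dagger(\vec{\theta}) = \Theta[X(\theta_k)]$ for each $k$; summing over $k$ yields $U(\vec{\theta})\,Q_3^\infty\,U^\dagger(\vec{\theta}) = Q^\infty(\vec{\theta})$, completing the proof.
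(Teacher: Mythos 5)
Your proposal is correct and takes essentially the same route as the paper: both exhibit a symplectic unitary built from phase shifts and squeezes that conjugates $Q_3^\infty$ into $Q^\infty(\vec{\theta})$, relying on the degree-zero homogeneity of $\Theta$ and the geometric fact that the interior of $\triangle$ is precisely the $SL(2,\mathbb{R})$-orbit on ray-triples of $\{0,2\pi/3,4\pi/3\}$. The paper writes the transformation out explicitly as $R_{\phi_0}S_{\lambda_1}R_{\phi_2}S_{\lambda_3}$ with closed-form parameters, whereas you invoke the Euler/Iwasawa form $R\Sigma R$ and transitivity more abstractly; the one thing worth tightening is the ``cyclic orientation'' step -- the actual discrete invariant that makes all scalars positive is that all three consecutive gaps are strictly less than $\pi$ (equivalently, the rays do not lie in a closed half-plane), which is exactly what the interior condition gives.
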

\begin{proof}
    The constructive proof involves showing that $Q^\infty(\vec{\theta}) = U(\vec{\theta}) Q^\infty_3 U^\dag(\vec{\theta})$ for a symplectic unitary $U(\vec{\theta})$, which leaves the maximum eigenvalue unchanged. Explicitly, $U(\vec{\theta}) = R_{\phi_0}S_{\lambda_1}R_{\phi_2}S_{\lambda_3}$ is a sequence of squeeze operators $S_\lambda := \exp[-i\lambda(XP+PX)/2]$ and rotation operators $R_\phi := \exp[-i\phi (X^2 + P^2)/2]$, which act on a general quadrature $X(\theta) = X \cos\theta + P \sin\theta$ as
    \begin{equation}
    \begin{aligned}
        S_\lambda^\dag X(\theta) S_\lambda &= X e^{\lambda}  \cos\theta + P e^{-\lambda} \sin\theta, \\
        R_\phi^\dag X(\theta) R_\phi &= X \cos(\theta-\phi) + P\sin(\theta-\phi).
    \end{aligned}
    \end{equation}
    We shall break it down into several steps. First, defining
    \begin{equation}
    \begin{aligned}
        \phi_0 := \begin{cases}
            0 & \begin{array}{l}
            \text{if }
            (\pi < \theta_2 \leq 2\theta_1)
            \land
            (\frac{\pi}{2} < \theta_1 \leq \frac{2\pi}{3}) \\
            \text{or }
            (\pi < \theta_2 \leq \frac{\theta_1}{2} + \pi)
            \land
            (\frac{2\pi}{3} \leq \theta_1 < \pi)
            \end{array} \\[3.5ex]
            \theta_1 & \begin{array}{l}
            \text{if }
            (2\pi - \theta_1 \leq \theta_2 < \pi + \theta_1)
            \land
            (\frac{\pi}{2} < \theta_1 \leq \frac{2\pi}{3}) \\
            \text{or }
            (\frac{\theta_1}{2} + \pi \leq \theta_2 < \pi + \theta_1)
            \land
            (\frac{2\pi}{3} \leq \theta_1 < \pi)
            \end{array} \\[3.5ex]
            \theta_2 & \begin{array}{l}
            \text{if }
            (2\theta_1 \leq \theta_2 \leq 2\pi - \theta_1)
            \land
            (\frac{\pi}{2} < \theta_1 \leq \frac{2\pi}{3}) \\
            \text{or }
            (\pi < \theta_2 < \pi + \theta_1)
            \land
            (0 < \theta_1 \leq \frac{\pi}{2})
            \end{array}
        \end{cases}
    \end{aligned}
    \end{equation}
    we have $R_{\phi_0}^\dag Q^\infty(\vec{\theta}) R_{\phi_0} = Q^\infty(\theta_1',\theta_2')$, where
    \begin{equation}
        \theta_1' = \begin{cases}
            \theta_1 & \text{if $\phi_0 = 0$,}\\
            \theta_2-\theta_1 & \text{if $\phi_0 = \theta_1$,}\\
            2\pi - \theta_2 & \text{if $\phi_0 = \theta_2$;}
        \end{cases}
    \end{equation}
    and
    \begin{equation}
        \theta_2' = \begin{cases}
            \theta_1 & \text{if $\phi_0 = 0$,}\\
            2\pi - \theta_1 & \text{if $\phi_0 = \theta_1$,}\\
            2\pi - (\theta_2-\theta_1) & \text{if $\phi_0 = \theta_2$.}
        \end{cases}
    \end{equation}
    This transformation essentially uses the symmetry in Eq.~\eqref{eq:angles-equivalence} from the main text to bring every point in the interior of the $\mathbf{P}_3^c = 2/3$ region onto the bottom right trine of the purple triangle in Fig.~\ref{fig:changeOfVariable}(b) of the main text, for which $(\pi/2 < \theta_1' < \pi) \land (\pi  < \theta_2' < 3\pi/2)$. Then, defining
    \begin{equation}\label{eq:squeeze-step-1}
        \lambda_1 := \frac{1}{4}\log(\frac{\tan^2\theta_1'\tan\theta_2'}{\tan\theta_2'-2\tan\theta_1'}),
    \end{equation}
    which is well-defined because $\tan\theta_2 > 0$ and $\tan\theta_1 < 0$ in this region, so the argument in the logarithm is strictly positive. This gives
    \begin{widetext}
    \begin{equation}
    \begin{aligned}
        S_{\lambda_1}^\dag Q^\infty(\theta_1',\theta_2') S_{\lambda_1}  &=
        \frac{1}{3}\bqty\Big{
            \Theta\pqty{X e^{\lambda_1}} +
        \Theta\pqty{Xe^{\lambda_1}\cos\theta_1' + P e^{-\lambda_1}\sin\theta_1'} +
            \Theta\pqty{X e^{\lambda_1}\cos\theta_2' + P e^{-\lambda_1}\sin\theta_2'}
        } \\
    &=: \frac{1}{3}\bqty\Big{
            \Theta\pqty{X} +
            \Theta\pqty{X\cos\theta_1'' + P\sin\theta_1''} +
            \Theta\pqty{X \cos\theta_2'' + P\sin\theta_2''}
        } \\
    &= Q^{\infty}(\theta_1'',\theta_2''),
    \end{aligned}
    \end{equation}
    \end{widetext}
    where we used $\Theta(x) = \Theta(cx)$ for any $c > 0$ and defined $\theta_k'' := \atantwo(e^{-\lambda_1}\sin\theta_k',e^{\lambda_1}\cos\theta_k')\bmod 2\pi$ for $k=1,2$. Substituting Eq.~\eqref{eq:squeeze-step-1} into the definition of $\theta_k''$,
    \begin{subequations}
    \begin{align}
        \theta_1'' &= \atantwo\pqty{
            e^{-\lambda_1}\sin\theta_1',
            e^{\lambda_1}\cos\theta_1'
        }\bmod 2\pi \nonumber\\
        &= \pi - \atan\sqrt{1 + \frac{2\abs{\tan\theta_1'}}{\tan\theta_2'}} \\
        &= \frac{\pi}{2} + \atan\frac{1}{\sqrt{1 + 2\abs{\tan\theta_1'}/\tan\theta_2'}} \nonumber\\
        &=:\phi_2, \nonumber\\
        \theta_2'' &= \atantwo\pqty{
            e^{-\lambda_1}\sin\theta_2',
            e^{\lambda_1}\cos\theta_2'
        }\bmod 2\pi \nonumber\\
        &= \pi + \atan\frac{\sqrt{1 + 2\abs{\tan\theta_1'}/\tan\theta_2'}}{\abs{\tan\theta_1'}/\tan\theta_2'} \\
        &=
            2\pqty{\frac{\pi}{2} + \atan\frac{1}{\sqrt{1 + 2\abs{\tan\theta_1'}/\tan\theta_2'}}}\nonumber\\
        &=2\phi_2,\nonumber
    \end{align}
    \end{subequations}
    where we have used $\atan(x) + \atan(1/x) = \pi/2$ and $2\atan(1/\sqrt{2x+1}) = \atan(\sqrt{2x+1}/x)$ for $x > 0$; the latter identity can be proven by taking the tangent on both sides and using the double angle formula. This therefore implies that $S_{\lambda_1}^\dag Q^{\infty}(\theta_1',\theta_2') S_{\lambda_1} = Q^{\infty}(\phi_2,2\phi_2)$, where $\pi/2 < \phi_2 < 3\pi/4$ as $\abs{\tan\theta_1'}/\tan\theta_2'$ is finite and strictly positive in this region.

    Finally, by defining $4\lambda_3 := 2\log|\tan\phi_2| - \log 3$,
    \begin{equation}
    \begin{aligned}
        & S_{\lambda_3}^\dag R_{\phi_2}^\dag Q^{\infty} (\phi_2,2\phi_2) R_{\phi_2}S_{\lambda_3} \\&=
        \frac{1}{3}\Big[
            \Theta\pqty{X e^{\lambda_3} \cos\phi_2 - P e^{-\lambda_3} \sin\phi_2} +
        \Theta\pqty{X} \\
        &\qquad\qquad{}+{}
            \Theta\pqty{X e^{\lambda_3} \cos\phi_2 + P e^{-\lambda_3} \sin\phi_2}
        \Big] \\
        &= Q^{\infty}(\tilde\theta_+,\tilde\theta_-),
    \end{aligned}
    \end{equation}
    where similar to before we have defined $\tilde\theta_\pm := {\atantwo}( \pm e^{-\lambda_3} \sin\phi_2, e^{\lambda_3} \cos\phi_2 ) \bmod 2\pi$. Since $\pi/2 < \phi_2 < 3\pi/4 \implies \tan\phi_2 = -|\tan\phi_2|$,
    \begin{equation}
    \begin{aligned}
        \tilde\theta_{\pm} &= \pi \mp \atan\!\abs{e^{-2\lambda_3} \tan\theta } \\
        &= \pi \mp \atan\sqrt{3} \\
        &= \begin{cases}
        2\pi/3 & \text{for $\tilde\theta_+$,}\\
        4\pi/3 & \text{for $\tilde\theta_-$.}
        \end{cases}
    \end{aligned}
    \end{equation}
    Hence, $S_{\lambda_3}^\dag R_{\phi_2}^\dag Q^{\infty} (\phi_2,2\phi_2) R_{\phi_2}S_{\lambda_3} = Q^{\infty}(2\pi/3,4\pi/3) = Q^{\infty}_3$. Putting everything together,
    \begin{equation}
        Q^{\infty}(\vec\theta) = \underbrace{R_{\phi_0}S_{\lambda_1}R_{\phi_2}S_{\lambda_3}}_{=: U(\vec{\theta})} Q^{\infty}_3 S_{\lambda_3}^\dag R_{\phi_2}^\dag S_{\lambda_1}^\dag R_{\phi_0}^\dag,
    \end{equation}
    with $\phi_0$, $\lambda_1$, $\phi_2$, and $\lambda_3$ as defined above. Since this shows that $Q_3^{\infty}$ and $Q^{\infty}(\vec\theta)$ are related by the unitary transformation $U(\vec{\theta})$, which leaves the eigenvalues unchanged, this implies that $\mathbf{P}^\infty(\vec{\theta}) = \mathbf{P}_3^\infty$, which completes the proof. Note that this also means that if the state $\ket{P_3}$ obtains the score $P_3$ for the original protocol, then the state $U(\vec{\theta})\ket{P_3}$ obtains the same score for the protocol with probing angles $\vec{\theta}$.
\end{proof}

\section{\label{apd:lower-bound-expression}Lower Bound Expressions}
The matrix element of $\Theta(X)-\mathbbm{1}/2$ is known to be \cite{tsirelson-spin}
\begin{widetext}
\begin{equation}
    \bra{n}\pqty{\Theta(X)-\frac{1}{2}\mathbbm{1}}\ket{n'} = \begin{cases}
    0 & \text{if $n\ominus_{2}n'=0$,} \\
    \displaystyle\frac{(-1)^{\frac{n-n'-1}{2}} 2^{-\frac{n+n'}{2}}}{n-n'}
    \sqrt{
        \frac{n^{(n\bmod 2)}n^{\prime (n'\bmod 2)}}{\pi}
        \pmqty{
            2\lfloor \frac{n}{2} \rfloor\\
            \frac{n}{2}
        }
        \pmqty{
            2\lfloor\frac{n'}{2}\rfloor\\
            \lfloor\frac{n'}{2}\rfloor
        }
    } & \text{otherwise,}
    \end{cases}
\end{equation}
while $Q_3^\infty-\mathbbm{1}/2$ is given by the direct sum
\begin{equation}
\begin{aligned}
    Q_3^\infty-\frac{1}{2}\mathbbm{1} &= \bigoplus_{k=0}^2 \bqty{\Pi_{k|3}^{-}\pqty{\Theta(X)-\frac{1}{2}\mathbbm{1}}\Pi_{k|3}^{+} + \Pi_{k|3}^{+}\pqty{\Theta(X)-\frac{1}{2}\mathbbm{1}}\Pi_{k|3}^{-}},
\end{aligned}
\end{equation}
where $\Pi_{k|3}^{+} := \sum_{n=0}^\infty \ketbra{6n+k}{6n+k}$ and  $\Pi_{k|3}^{-} := \sum_{n=0}^\infty \ketbra{6n+3+k}{6n+3+k}$. Since $A_3 = (Q_3^\infty-\frac{1}{2}\mathbbm{1})^2 - (\mathbf{P}_3^c-\frac{1}{2})^2\mathbbm{1}$,
\begin{equation}\label{eq:Q3-square-mel-definition}
    \bra{6n+k}A_3\ket{6n'+k} = \sum_{m=0}^\infty \bra{6n+k}\pqty{\Theta(X)-\frac{1}{2}\mathbbm{1}}\ketbra{6m + k + 3}{6m + k + 3}\pqty{\Theta(X)-\frac{1}{2}\mathbbm{1}}\ket{6n'+k}
    - \delta_{n,n'}\pqty{\mathbf{P}_3^c-\frac{1}{2}}^2
\end{equation}
with $k \in \{0,1,2\}$. Let us first consider $k$ even. Then, we can write Eq.~\eqref{eq:Q3-square-mel-definition} as the sum $\sum_{m=0}^\infty c^{n,n'}_{3m+(1+k/2)} = \sum_{m\bmod 3 = 1+k/2}^\infty c^{n,n'}_{m}$, where
\begin{equation}
c^{n,n'}_{m} := \bra{6n+k}\pqty{\Theta(X)-\frac{1}{2}\mathbbm{1}}\ketbra{2m +1}{2m + 1}\pqty{\Theta(X)-\frac{1}{2}\mathbbm{1}}\ket{6n'+k}.
\end{equation}
In other words, Eq.~\eqref{eq:Q3-square-mel-definition} is a subseries of the series $\{c^{n,n'}_m\}_{m=0}^\infty$, so the matrix element can be rewritten as \cite{subseries}
\begin{equation}
    \bra{6n+k}A_3\ket{6n'+k} = \frac{1}{3}\sum_{j=0}^2 e^{-i\frac{2\pi j}{3}(1+\frac{k}{2})} \bqty{\sum_{m=0}^\infty c_m^{3n+\frac{k}{2},3n'+\frac{k}{2}} \pqty{e^{i\frac{2 \pi j}{3}}}^m} - \delta_{n,n'}\pqty{\mathbf{P}_3^c-\frac{1}{2}}^2.
\end{equation}
The full series in the square bracket can be interpreted as the Maclaurin series of a function with the argument $e^{i(2\pi j/3)}$. Isolating the parts of the function where the series needs to be resolved, we have
\begin{equation}
\begin{aligned}
    \sum_{m=0}^\infty c_m^{n,n'} z^m &=  \frac{(-1)^{n+n'}2^{-(n+n')}}{8\pi}\sqrt{
        \binom{2n}{n}\binom{2n'}{n'}} \underbrace{\sum_{m=0}^\infty \binom{2m}{m} \frac{2(m+\frac{1}{2})2^{-2m}}{((m-n)+\frac{1}{2})((m-n')+\frac{1}{2})}  z^m}_{:=l(z;n,n')}.
\end{aligned}
\end{equation}
Finally, by using the series definitions of the generalised hypergeometric function $_3F_2(a_1,a_2,a_3;b_1,b_2;z) = \sum_{n=0}^\infty \frac{(a_1+n-1)!(a_2+n-1)!(a_3+n-1)!(b_1-1)!(b_2-1)!}{(b_1+n-1)!(b_2+n-1)!(a_1-1)!(a_2-1)!(a_3-1)!}\frac{z^n}{n!}$ and the incomplete beta function $B(z;a,b) = \int_0^z t^{a-1}(1-t)^{b-1} \dd{t} = \sum_{n=0}^\infty \frac{(n-b)!z^{n+a}}{(-b)!n!(n+a)}$, we obtain
\begin{equation}
l(z;n,n') = \begin{cases}
    \frac{ {_3F_2}\left(\frac{1}{2},\frac{1}{2}-n,\frac{1}{2}-n;\frac{3}{2}-n,\frac{3}{2}-n;z\right)}{(\frac{1}{2}-n)^2}+ \frac{z \, {_3F_2}\left(\frac{3}{2},\frac{3}{2}-n,\frac{3}{2}-n;\frac{5}{2}-n,\frac{5}{2}-n;z\right)}{(\frac{3}{2}-n)^2} & \text{if $n = n'$,} \\
    \begin{array}{l}
        2z^{n'-1/2}B(z;\frac{1}{2}-n',\frac{1}{2})
    \end{array} & \text{if $0 = n \neq n'$,}\\
    \begin{array}{l}
        2z^{n-1/2} B(z;\frac{1}{2}-n,\frac{1}{2})
    \end{array} & \text{if $n \neq n' = 0$,}\\
    \begin{array}{l}
        \frac{z^{n-1/2}}{n-n'}B(z;\frac{1}{2}-n,-\frac{1}{2}) + \frac{z^{n'-1/2}}{n'-n}B(z;\frac{1}{2}-n',-\frac{1}{2})
    \end{array} & \text{otherwise.}
\end{cases}
\end{equation}
With this, for $k$ even, the matrix elements of $A_3$ is
\begin{equation}
\begin{aligned}
    \bra{6n+k}A_3\ket{6n'+k} &= \frac{(-1)^{(n+n')}2^{-(3(n+n')+k)}}{24\pi}\sqrt{\binom{6n+k}{3n+\frac{k}{2}}\binom{6n'+k}{3n'+\frac{k}{2}}}\sum_{j=0}^2 e^{-i\frac{2\pi j}{3}(1+\frac{k}{2})} l(e^{i\frac{2\pi j}{3}};3n+\tfrac{k}{2},3n'+\tfrac{k}{2})\\
    &\qquad\qquad {}-{} \delta_{n,n'}\pqty{\mathbf{P}_3^c-\frac{1}{2}}^2,
\end{aligned}
\end{equation}
and similar steps for $k$ odd give
\begin{equation}
\begin{aligned}
    \bra{6n+3+k}A_3\ket{6n'+3+k} &= \frac{(-1)^{(n+n')}2^{-(3(n+n'+1)+k)}}{24\pi}\sqrt{\binom{6n+3+k}{3n+\frac{3+k}{2}}\binom{6n'+3+k}{3n'+\frac{3+k}{2}}}\\
    &\qquad\qquad\qquad{}\times{}\sum_{j=0}^2 e^{-i\frac{2\pi j}{3}(\frac{k-1}{2})} l(e^{i\frac{2\pi j}{3}};3n+\tfrac{3+k}{2},3n'+\tfrac{3+k}{2}) - \delta_{n,n'}\pqty{\mathbf{P}_3^c-\frac{1}{2}}^2.
\end{aligned}
\end{equation}
\end{widetext}

\section{\label{apd:upper-bound-expression}Upper Bound Expressions}
In the section titled \emph{Improved Rigorous Bounds} in the main text, the inequality $2(\mathbf{P}_3^\infty-\frac{2}{3})^2(\mathbf{P}_3^\infty-\frac{1}{3})^2 \leq \tr(A_3^2)$ was found by using the double degeneracies of the nonzero eigenvalues of $A_3$ and the relationship between the eigenvalues of $Q_3^\infty$ and $A_3$. In this section, we detail the steps required to obtain the exact value of the latter expression.

To do so, we require the formalism of Wigner functions. For a continuous variable system with a single degree of freedom, recall that the Wigner function of a state $\rho$ is defined as \cite{wigner-review}
\begin{equation}
    W_\rho(x,p) := \frac{1}{2\pi i}\tr(\rho e^{i\frac{\pi}{2}[(X-x)^2+(P-p)^2)]} ),
\end{equation}
while the Wigner function of an observable $A$ is the function $W_A(x,p)$ that satisfies $\tr(A\rho) = \int\dd{x}\int\dd{p} W_A(x,p) W_\rho(x,p)$ for every state $\rho$, which may be defined in the sense of a distribution. The Wigner function of $\Theta[X(\theta)]\circ\Theta[X(\theta')]:=\frac{1}{2}(\Theta[X(\theta)]\Theta[X(\theta')] +\Theta[X(\theta')]\Theta[X(\theta)])$, where $X(\theta) = X  \cos\theta + P\sin\theta$, was derived by Tsirelson to be \cite{tsirelson-og}
\begin{equation}\label{eq:heaviside-Wigner}
\begin{aligned}
    W_{\Theta[X(\theta)]\circ\Theta[X(\theta')]}(x,p) &= \frac{1}{2}\pqty\big{\Theta[x(\theta)]+\Theta[x(\theta')]} \\
    &\qquad{}+{} \frac{1}{2\pi}\si\!\bqty{\frac{2 x(\theta)x(\theta')}{\abs{\sin(\theta-\theta')}}},
\end{aligned}
\end{equation}
where $\si(x) = -\int_{x}^\infty\dd{t} \frac{\sin t}{t}$ and $x(\theta) = x\cos\theta + p\sin\theta$. A few special values of $\si(x)$ are $\si(-\infty) = \pi$, $\si(0) = \pi/2$, and $\si(\infty) = 0$, which come from standard integrals of $\sin(x)/x$. The last special value also implies that $W_{(\Theta[X(\theta)])^2}(x,p) = \Theta[x(\theta)]$, which is the expected result coming from $(\Theta[X(\theta)])^2 = \Theta[X(\theta)]$

Using Eq.~\eqref{eq:heaviside-Wigner}, the Wigner function of $A_3 = (Q_3^\infty-\frac{2}{3}\mathbbm{1})(Q_3^\infty-\frac{1}{3}\mathbbm{1})$ can be found to be
\begin{equation}
    W_{A_3}(x,p) = \frac{2}{9} + \frac{1}{18}\sum_{k=0}^2\frac{2}{\pi}\si\!\bqty{\frac{4}{\sqrt{3}}x\pqty{\frac{2\pi k}{3}}x\pqty{\frac{2\pi (k+1)}{3}}}.
\end{equation}
In terms of their Wigner functions, the trace of two observables is given by
\begin{equation}
    \tr(AB) = \frac{1}{2\pi}\int\dd{x}\int\dd{p} W_A(x,p)W_B(x,p),
\end{equation}
with which we calculate ${\tr}(A_3^2)$ to be
\begin{widetext}
\begin{equation}\label{eqA:trace-integral}
\begin{aligned}
    \tr(A_3^2) &= \frac{1}{2\pi}\int_{-\infty}^\infty\dd{x}\int_{-\infty}^\infty\dd{p}\bqty{W_{A_3}(x,p)}^2 \\
    &= \frac{1}{2\pi}\int_{-\infty}^{\infty}\dd{x}\int_{-\infty}^{\infty}\dd{p} \Bqty{\frac{2}{9} + \frac{1}{18}\sum_{k=0}^2\frac{2}{\pi}\si\!\bqty{\frac{4}{\sqrt{3}}x\pqty{\frac{2\pi k}{3}}x\pqty{\frac{2\pi (k+1)}{3}}}}^2 \\
    &= \frac{1}{162\pi^3}\int_{-\infty}^{\infty}\dd{x}\int_{-\infty}^{\infty}\dd{p} \Bqty{\sum_{k=0}^2(-1)^{\delta_{k,1}}\si\!\bqty{-(-1)^{\delta_{k,1}}\frac{4}{\sqrt{3}}x\pqty{\frac{2\pi k}{3}} x\pqty{\frac{2\pi (k+1)}{3}}}}^2 \\
    &= \frac{1}{324\pi^3}\int_{-\pi}^{\pi}\dd{\phi} \int_{0}^\infty\dd{[r^2]} \Bqty{
    \si\!\bqty{\frac{4r^2}{\sqrt{3}}\cos(\phi+\frac{\pi}{3})\cos(\phi-\frac{\pi}{3})}
    - \sum_{\sigma\in \{+,-\}}\si\!\bqty{\frac{4r^2}{\sqrt{3}}\cos(\phi)\cos(\phi+\sigma\frac{\pi}{3})}
    }^2 \\
    &= \frac{1}{27\pi^3}\int_{0}^{\frac{\pi}{6}}\dd{\phi} \int_{0}^\infty\dd{[r^2]} \Bqty{
    \si\!\bqty{\frac{4r^2}{\sqrt{3}}\cos(\phi+\frac{\pi}{3})\cos(\phi-\frac{\pi}{3})}
    - \sum_{\sigma\in \{+,-\}}\si\!\bqty{\frac{4r^2}{\sqrt{3}}\cos(\phi)\cos(\phi+\sigma\frac{\pi}{3})}
    }^2,
\end{aligned}
\end{equation}
where we have used $\si(-x) = -\si(x) - \si(-\infty) =  -\si(x) -\pi$ in the second line. In the penultimate line, a change of variables to radial coordinates $x=r\cos\phi$ and $p=r\sin\phi$ was performed, and we also used the identity $-\cos(x) = \cos(x-\pi)$; in the last line, we used the fact that the integrand is symmetric under the transformations $\phi \to \phi + \pi/3$ and $\phi \to -\phi$, which implies that the integral over $-\pi < \phi < \pi$ is simply $12$ times of the integral over $0<\phi<\pi/6$.

The integration over the radial variable involve integrals of the form $\int_{0}^{\infty}\dd{x}\si(ax)\si(bx)$ for some $0 < a \leq b < \infty$. We can evaluate the integral $\int_{0}^{L}\dd{x}\si(ax)\si(bx)$, where we will later take $L \to \infty$, to be
\begin{equation}\label{eq:integrated-si-si}
\begin{aligned}
    \int_{0}^{L}\dd{x}\si(ax)\si(bx) = \frac{\pi}{2b}
    + \si(aL)\si(bL)L
    - \frac{b-a}{2ab}\si[(b-a)L]
    - \frac{b+a}{2ab}\si[(b+a)L]
    + \frac{\cos(aL)}{a}\si(bL)
    + \frac{\cos(bL)}{b}\si(aL),
\end{aligned}
\end{equation}
\end{widetext}
which can be verified by taking the derivative on both sides with respect to $L$ using $\frac{d}{dL} \si(cL) =\sin(cL)/L$ and also confirming that both sides of the equation are zero when $L=0$ using $\si(0) = \pi/2$.

Now, to take the limit $L\to \infty$, we first take care of the troublesome term $\si(aL)\si(bL)L$ using integration by parts to find that
\begin{equation}
\begin{aligned}
    \sqrt{x}\si(x) &= \sqrt{x}\int_x^\infty\dd{t} \frac{1}{t}\frac{d\cos(t)}{dt} \\
    &= \sqrt{x}\bqty{\frac{\cos(t)}{t}}_x^\infty + \int_x^\infty\dd{t} \frac{\sqrt{x}\cos(t)}{t^2} \\
    &= -\frac{\cos(x)}{\sqrt{x}} + \int_x^\infty\dd{t} \frac{\sqrt{x}\cos(t)}{t^2},
\end{aligned}
\end{equation}
whose magnitude for $x > 0$ is therefore bounded as
\begin{equation}
\begin{aligned}
\abs{\sqrt{x}\si(x)} &\leq \frac{\abs{\cos(x)}}{\sqrt{x}} + \int_x^\infty\dd{t} \frac{\sqrt{x}\abs{\cos(t)}}{t^2} \\
&\leq \frac{1}{\sqrt{x}} + \int_x^\infty\dd{t} \frac{1}{t^{3/2}} = \frac{3}{\sqrt{x}}.\\[6ex]
\end{aligned}
\end{equation}
This implies that $\lim_{x\to\infty}\abs{\sqrt{x}\si(x)} \leq \lim_{x\to\infty} 3/\sqrt{x} = 0$, and thus $\lim_{L\to\infty}\si(aL)\si(bL)L = 0$. Every other term in Eq.~\eqref{eq:integrated-si-si} that depend on $L$ vanish due to $\lim_{L\to\infty}\si(cL) = \si(\infty) = 0$ when $c>0$, which leaves only the first term $\pi/(2b)$. Therefore, $\int_{0}^{\infty}\dd{x}\si(ax)\si(bx) = \pi/(2b)$ for any $0<a \leq b<\infty$.

To apply this to the integral over $r^2$ in Eq.~\eqref{eqA:trace-integral}, we use the monotonicity of $\cos(x)$ within the range $0\leq x \leq \pi$ to ascertain that $\cos(\phi+\frac{\pi}{3}) \leq \cos(\phi-\frac{\pi}{3}) \leq \cos(\phi)$ for $0<\phi<\pi/6$, with which we can determine that
\begin{equation}
\begin{aligned}
    \frac{4}{\sqrt{3}}\cos(\phi+\tfrac{\pi}{3})\cos(\phi-\tfrac{\pi}{3}) &\leq \frac{4}{\sqrt{3}}\cos(\phi)\cos(\phi+\tfrac{\pi}{3}) \\
    &\leq \frac{4}{\sqrt{3}}\cos(\phi)\cos(\phi-\tfrac{\pi}{3})
\end{aligned}
\end{equation}
within the range of $\phi$ we will be integrating over. Finally, by making liberal use of the identities $\cos(a \pm b) = \cos(a)\cos(b) \mp \sin(a)\sin(b)$ and $\sin(a \pm b) = \sin(a)\cos(b) \pm \cos(a)\sin(b)$, and the special values $\cos(\pi/3) = 1/2$ and $\sin(\pi/3) = \sqrt{3}/2$, we end up with
\begin{widetext}
\begin{equation}\label{eq:upper-bound-angle-integral}
\begin{aligned}
    \tr(A_3^2)
    &= \frac{1}{27\pi^3}\int_{0}^{\frac{\pi}{6}}\dd{\phi} \Bigg\{
        \frac{\pi/2}{\frac{4}{\sqrt{3}}\cos(\phi+\tfrac{\pi}{3})\cos(\phi-\tfrac{\pi}{3})} +
        \frac{\pi/2}{\frac{4}{\sqrt{3}}\cos(\phi)\cos(\phi+\tfrac{\pi}{3})} +
        \frac{\pi/2}{\frac{4}{\sqrt{3}}\cos(\phi)\cos(\phi-\tfrac{\pi}{3})} \\
    &\qquad\qquad\qquad\qquad\quad{}-{}
        \frac{\pi}{\frac{4}{\sqrt{3}}\cos(\phi)\cos(\phi+\tfrac{\pi}{3})} -
        \frac{\pi}{\frac{4}{\sqrt{3}}\cos(\phi)\cos(\phi-\tfrac{\pi}{3})} +
        \frac{\pi}{\frac{4}{\sqrt{3}}\cos(\phi)\cos(\phi-\tfrac{\pi}{3})}
    \Bigg\}\\
    &= \frac{1}{72\sqrt{3}\pi^2}\int_{0}^{\frac{\pi}{6}}\dd{\phi} \frac{\cos(\phi)-\cos(\phi-\frac{\pi}{3})+\cos(\phi+\frac{\pi}{3})}{\cos(\phi)\cos(\phi+\frac{\pi}{3})\cos(\phi-\frac{\pi}{3})}\\
    &= \frac{1}{72\sqrt{3}\pi^2}\int_{0}^{\frac{\pi}{6}}\dd{\phi} \frac{\cos(\phi)-2\sin(\frac{\pi}{3})\sin(\phi)}{\cos(\phi)\cos(\phi+\frac{\pi}{3})\cos(\phi-\frac{\pi}{3})}\\
    &= \frac{1}{36\sqrt{3}\pi^2}\int_{0}^{\frac{\pi}{6}}\dd{\phi} \frac{\frac{1}{2}\cos(\phi)-\frac{\sqrt{3}}{2}\sin(\phi)}{\cos(\phi)\cos(\phi+\frac{\pi}{3})\cos(\phi-\frac{\pi}{3})}\\
    &= \frac{1}{36\sqrt{3}\pi^2}\int_{0}^{\frac{\pi}{6}}\dd{\phi} \frac{1}{\cos(\phi)\cos(\phi-\frac{\pi}{3})}\\
    &= \frac{1}{54\pi^2}\int_{0}^{\frac{\pi}{6}}\dd{\phi} \bqty{\tan(\phi+\frac{\pi}{6}) - \tan(\phi-\frac{\pi}{6})} \\
    &= \frac{6\log 2}{(18\pi)^2}.
\end{aligned}
\end{equation}
\end{widetext}

\section{\label{apd:separable-bound}Separable Bounds of Coupled Harmonic Oscillators}
\begin{theorem}
    For all $\vec{\theta}$ in the region where $\mathbf{P}^c(\vec{\theta}) = 2/3$, $\mathbf{P}^{\infty\mathrm{\text{-}sep}}(\varphi,\vec{\theta}) = \mathbf{P}^{\infty\mathrm{\text{-}sep}}_3(\phi)$.
\end{theorem}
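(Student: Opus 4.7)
The plan is to exploit the constructive form of the symplectic unitary $U(\vec{\theta}) = R_{\phi_0} S_{\lambda_1} R_{\phi_2} S_{\lambda_3}$ produced in Supplementary Theorem~\ref{prop:interior}, which satisfies $Q^\infty(\vec{\theta}) = U(\vec{\theta}) Q_3^\infty U^\dag(\vec{\theta})$ on a single mode. The key idea is to \emph{lift} this single-mode unitary to a two-mode operation $U(\vec{\theta}) \otimes U(\vec{\theta})$ acting locally on the original modes, show that it conjugates $Q_{3,\sigma}^\infty(\varphi)$ into $Q_\sigma^\infty(\varphi,\vec{\theta})$, and then appeal to local unitary invariance of the PPT set to conclude that the two separable bounds coincide.

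First I would verify that applying the same squeeze or phase rotation on both original modes with identical parameters induces the corresponding single-mode operation on the collective coordinates $X_{\sigma\varphi}$, $P_{\sigma\varphi}$. This is because the collective coordinates are linear combinations $X_{+\varphi} = X_1 \cos\varphi + X_2 \sin\varphi$ (and similarly for $X_{-\varphi}$, $P_{\pm\varphi}$), while each factor $R_\phi$, $S_\lambda$ acts linearly in phase space. For instance, $(S_\lambda \otimes S_\lambda)\, X_{+\varphi}\, (S_\lambda \otimes S_\lambda)^\dag = e^\lambda X_1 \cos\varphi + e^\lambda X_2 \sin\varphi = e^\lambda X_{+\varphi}$, matching the single-mode action of $S_\lambda$, and analogously for $R_\phi \otimes R_\phi$. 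Composing these elementary factors, one obtains $(U(\vec{\theta}) \otimes U(\vec{\theta}))\, Q_{3,\sigma}^\infty(\varphi)\, (U(\vec{\theta}) \otimes U(\vec{\theta}))^\dag = Q_\sigma^\infty(\varphi,\vec{\theta})$.

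The final step is to observe that $U(\vec{\theta}) \otimes U(\vec{\theta})$ is manifestly local with respect to the bipartition $\mathcal{H}_1 \otimes \mathcal{H}_2$, so local unitary conjugation preserves positive partial transposition: $\varrho \in \operatorname{PPT}$ iff $(U \otimes U)^\dag \varrho (U \otimes U) \in \operatorname{PPT}$. A change of variables in the optimisation over PPT states then yields
\begin{equation}
\mathbf{P}^{\infty\text{-sep}}(\varphi,\vec{\theta}) = \max_{\varrho \in \operatorname{PPT}} \tr[\varrho\, Q_\sigma^\infty(\varphi,\vec{\theta})] = \max_{\varrho' \in \operatorname{PPT}} \tr[\varrho'\, Q_{3,\sigma}^\infty(\varphi)] = \mathbf{P}_3^{\infty\text{-sep}}(\varphi).
\end{equation}
The main subtlety I anticipate is confirming that every factor of $U(\vec{\theta})$ lifts cleanly under the prescription $S_\lambda \to S_\lambda^{\otimes 2}$ and $R_\phi \to R_\phi^{\otimes 2}$, since a generic symplectic unitary on the collective mode (e.g., a non-symmetric squeeze) need not factorise into a local operation on the original modes. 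Fortunately the explicit decomposition produced in the constructive proof of Supplementary Theorem~\ref{prop:interior} consists only of single-mode rotations and squeezes, each of which is linear in phase space and therefore admits exactly such a local lift.
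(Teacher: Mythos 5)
Your proposal is correct and takes essentially the same route as the paper's proof: lift the symplectic unitary of Supplementary Theorem~\ref{prop:interior} to an identical local unitary $U(\vec{\theta})\otimes U(\vec{\theta})$ on the two original modes, verify that it reproduces the single-mode transformation on the collective quadratures $X_{\sigma\varphi},P_{\sigma\varphi}$, and invoke local-unitary invariance of the PPT set. The paper's version works with the abstract quadratic generator $\Lambda$ of $U(\vec{\theta})$ rather than factor by factor, which incidentally dispels the ``subtlety'' you flag at the end — applying the \emph{same} symplectic on both collective modes always factorises into a local operation, because the change of basis $\mathbbm{1}\otimes\mathcal{R}_\varphi$ between collective and local quadratures commutes with $\Lambda\otimes\mathbbm{1}$ for arbitrary symmetric $\Lambda$, not just for rotations and squeezes.
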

\begin{proof}
    For all points in the interior region, $Q_{\sigma}^\infty(\varphi,\vec{\theta}) = U_\sigma^\dag(\vec{\theta}) Q_{3,\sigma}^\infty(\varphi) U_\sigma(\vec{\theta})$ for some symplectic unitary $U_\sigma(\vec{\theta})$ defined on the collective coordinate $X_{\sigma\varphi}$, due to Theorem~\ref{prop:interior}. Define $U_{\pm\sigma}(\vec{\theta})$ as
    \begin{equation}
         U_{\pm\sigma}(\vec{\theta}) = \exp[i\pmqty{X_{\pm\sigma\varphi}\\P_{\pm\sigma\varphi}}^\dag\Lambda\pmqty{X_{\pm\sigma\varphi}\\P_{\pm\sigma\varphi}}],
    \end{equation}
    where $\Lambda = \Lambda^T$ specifies the symplectic transformation $U_{\sigma}(\vec{\theta})$, and $U_{-\sigma}(\vec{\theta})$ is defined with the same $\Lambda$ but on the coordinate $X_{-\sigma\varphi}$. Notice also that we can relate the local and global coordinates as
    \begin{equation}
        \pmqty{
            X_{+\varphi}\\
            X_{-\varphi}\\
            P_{+\varphi}\\
            P_{-\varphi}
        } = \bqty\bigg{\mathbbm{1}\otimes \underbrace{\pmqty{
                \cos\varphi & \sin\varphi\\
                -\sin\varphi & \cos\varphi
            }}_{\mathcal{R}_\varphi}}\pmqty{
            X_1\\
            X_2\\
            P_1\\
            P_2
        },
    \end{equation}
    and thus
    \begin{equation}
    \begin{aligned}
        &U_+(\vec{\theta})U_-(\vec{\theta}) \\
        &= \exp[i\pmqty{
            X_{+\varphi}\\
            X_{-\varphi}\\
            P_{+\varphi}\\
            P_{-\varphi}
        } ^\dag\pqty{\Lambda\otimes\mathbbm{1}}\pmqty{
            X_{+\varphi}\\
            X_{-\varphi}\\
            P_{+\varphi}\\
            P_{-\varphi}
        } ] \\
        &=  \exp[i\pmqty{
            X_{1}\\
            X_{2}\\
            P_{1}\\
            P_{2}
        }^\dag
        \pqty{\mathbbm{1}\otimes\mathcal{R}_\varphi^\dag}
        \pqty{\Lambda\otimes\mathbbm{1}}
        \pqty{\mathbbm{1}\otimes\mathcal{R}_\varphi}
        \pmqty{
            X_{1}\\
            X_{2}\\
            P_{1}\\
            P_{2}
        } ] \\
        &= U_{1}U_{2},
    \end{aligned}
    \end{equation}
    where $U_{n} = {\exp}[i\spmqty{X_{n}\\P_{n}}^\dag\Lambda\spmqty{X_{n}\\P_{n}}]$ are now unitaries defined in the local coordinates. Therefore,
    \begin{equation}
    \begin{aligned}
        Q_{\sigma}^\infty(\varphi,\vec{\theta})
        &= U_\sigma^\dag(\vec{\theta}) U_{-\sigma}^\dag(\vec{\theta}) Q_{3,\sigma}^\infty(\varphi) U_\sigma(\vec{\theta}) U_{-\sigma}(\vec{\theta}) \\
        &= U_1^\dag U_2^\dag Q_{3,\sigma}^\infty(\varphi) U_1 U_2.
    \end{aligned}
    \end{equation}
    Since the maximisation over the set of positive partial transpose states is unchanged under local unitary operations, we have $\mathbf{P}^{\infty\text{-sep}}(\varphi,\vec{\theta}) = \mathbf{P}^{\infty\text{-sep}}_3(\varphi)$ as desired.
\end{proof}

\section{\label{apd:finitestats}Proof of Finite Measurement Precision}
In the section titled \emph{Impact of Limited Measurement Precision} in the main text, we considered the impact of limited measurement precision as might emerge in experimental implementations of the protocol. The discussion relies on the following theorem.

\begin{theorem}\label{thm:finitestats}
    Consider $\vec{\theta}\in\triangle$ such that $\mathbf{P}^c(\vec{\theta}) = \mathbf{P}_3^c$. Define $\widetilde{\mathbf{P}}^c(\vec{\theta})$ to be the maximum classical score obtained when performing the three-angle protocol with the score assignment $\widetilde{\Theta}[A_x(\theta)]$ instead of $\Theta[A_x(\theta)]$, where
    \begin{equation}
        \widetilde{\Theta}[A_x(\theta)] := \begin{cases}
            1 & \text{if $A_x(\theta) > \varepsilon_+$,} \\
            1/2 & \text{if $-\varepsilon_- \leq A_x(\theta) \leq  \varepsilon_+$,} \\
            0 & \text{otherwise,}
        \end{cases}
    \end{equation}
    and $0 \leq \varepsilon_- \leq \varepsilon_+ < \infty$. Then, the classical bound $\widetilde{\mathbf{P}}^c(\vec{\theta}) = \mathbf{P}_3^c$ is unchanged.
\end{theorem}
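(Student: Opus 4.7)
The plan is to exploit convexity, as in the main text's derivation of $\mathbf{P}^c(\vec{\theta})$, to reduce the supremum defining $\widetilde{\mathbf{P}}^c(\vec{\theta})$ over classical densities $f(A_x,A_y)$ to a supremum over Dirac point measures $\delta(A_x-A_x')\delta(A_y-A_y')$. This turns the statement into the pointwise inequality $\widetilde{S}(\vec{v}) := \tfrac{1}{3}\sum_{k=0}^{2}\widetilde{\Theta}(x_k) \leq \tfrac{2}{3}$ for every $\vec{v} = (A_x,A_y)\in\mathbb{R}^2$, where $\hat{n}_k := (\cos\theta_k,\sin\theta_k)$ and $x_k := \vec{v}\cdot\hat{n}_k$.

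The key structural input is Theorem~\ref{thm:classical-bound} applied with $K=3$, $\delta=1$: $\vec{\theta}\in\triangle$ is equivalent to every cyclic gap between the $\theta_k$ being at most $\pi$, which in turn is equivalent to the origin lying in the closed convex hull of $\{\hat{n}_0,\hat{n}_1,\hat{n}_2\}$. So there exist weights $\lambda_k \geq 0$ with $\sum_k \lambda_k = 1$ and $\sum_k \lambda_k \hat{n}_k = 0$, which yields the balance identity $\sum_k \lambda_k x_k = 0$ for every $\vec{v}$. A one-line triangle inequality applied to $\lambda_{k_3}\hat{n}_{k_3} = -\sum_{k\neq k_3}\lambda_k \hat{n}_k$ gives $\lambda_{k_3} \leq 1-\lambda_{k_3}$, so $\lambda_k \leq \tfrac{1}{2}$ for every $k$.

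The proof then splits into cases on $N_+ := |\{k : x_k > \varepsilon_+\}|$. If $N_+ \leq 1$, then $\widetilde{S}(\vec{v}) \leq \tfrac{1}{3}(1 + 2\cdot\tfrac{1}{2}) = \tfrac{2}{3}$ trivially. The only delicate case is $N_+ \geq 2$: assuming $x_{k_1}, x_{k_2} > \varepsilon_+$, the balance identity gives $\lambda_{k_3} x_{k_3} < -(\lambda_{k_1}+\lambda_{k_2})\varepsilon_+ = -(1-\lambda_{k_3})\varepsilon_+$, and for $\lambda_{k_3} > 0$ this rearranges to
\begin{equation*}
x_{k_3} \;<\; -\frac{1-\lambda_{k_3}}{\lambda_{k_3}}\,\varepsilon_+ \;\leq\; -\varepsilon_+ \;\leq\; -\varepsilon_-,
\end{equation*}
where the second inequality uses $\lambda_{k_3}\leq 1/2$ and the third uses the hypothesis $\varepsilon_-\leq\varepsilon_+$. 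Consequently $\widetilde{\Theta}(x_{k_3})=0$ and $\widetilde{S}(\vec{v}) \leq \tfrac{2}{3}$; the same chain also rules out $N_+ = 3$, which would require $x_{k_3} > \varepsilon_+$ in contradiction. I expect this to be the main obstacle: only the combination of $\lambda_{k_3}\leq 1/2$ (which comes from $\vec{\theta}\in\triangle$) and the asymmetry hypothesis $\varepsilon_-\leq\varepsilon_+$ eliminates the ``dangerous'' configuration of two projections well above the positive threshold with a third one stuck in the ambiguous middle bin. The boundary subcase $\lambda_{k_3}=0$ needs to be handled separately; there $\hat{n}_{k_1}=-\hat{n}_{k_2}$ with $\lambda_{k_1}=\lambda_{k_2}=1/2$, so the balance identity forces $x_{k_1}+x_{k_2}=0$, which is incompatible with both being $>\varepsilon_+\geq 0$.

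Finally, to upgrade the upper bound to the equality $\widetilde{\mathbf{P}}^c(\vec{\theta}) = \mathbf{P}_3^c$, I would take any $\vec{v}_\star$ saturating the original bound $\mathbf{P}^c(\vec{\theta}) = 2/3$ and rescale it to $t\vec{v}_\star$ with $t$ large enough that $|t x_k| > \max(\varepsilon_+,\varepsilon_-)$ for every nonzero $x_k$; after such a rescaling, $\widetilde{\Theta}$ agrees with $\Theta$ on all three probes, so $\widetilde{S}(t\vec{v}_\star) = \tfrac{2}{3}$.
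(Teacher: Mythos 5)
Your proof is correct, and it takes a genuinely different route from the paper's. The paper argues by contradiction: it rules out $\widetilde{\mathbf{P}}^c \in \{1, 5/6\}$ by writing $A_x(\theta) = r\cos(\theta-\phi)$, noting that the arc from $\theta_0$ to $\theta_1$ has length at most $\pi$ so the cosine stays above $\varepsilon_+$ on the whole arc once both endpoints do, and then evaluating at the antipodal point $\theta_2 - \pi$ (which the angle constraints of Theorem~\ref{thm:classical-bound} place inside $[\theta_0,\theta_1]$) to conclude $A_x(\theta_2) < -\varepsilon_+ \leq -\varepsilon_-$. You instead reformulate $\vec{\theta}\in\triangle$ as ``the origin lies in the convex hull of the unit vectors $(\cos\theta_k,\sin\theta_k)$'', extract the balance identity $\sum_k\lambda_k x_k = 0$ together with $\lambda_k\leq 1/2$, and then a single inequality chain shows that two projections above $\varepsilon_+$ force the third strictly below $-\varepsilon_+ \leq -\varepsilon_-$. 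Both proofs exploit the same geometric content of $\vec{\theta}\in\triangle$, both invoke $\varepsilon_-\leq\varepsilon_+$ at precisely the same terminal step, and both establish the lower bound by rescaling a point that saturates the unperturbed classical bound. What your version buys is a cleaner, coordinate-free packaging: the $\lambda_k\leq 1/2$ bound replaces the explicit antipodal-point bookkeeping, the case split on $N_+$ avoids the WLOG relabelling of which probe sits in the middle bin, and the role of the asymmetry hypothesis $\varepsilon_-\leq\varepsilon_+$ is visible at a glance. One small remark: the degenerate subcase $\lambda_{k_3}=0$ you treat separately is actually already excluded by the strict inequality $\lambda_{k_3}x_{k_3} < -(1-\lambda_{k_3})\varepsilon_+ \leq 0$ (a nonnegative left side cannot be strictly less than a nonpositive right side), so that branch could be folded into the main chain.
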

\begin{proof}
    Let us first show that $\widetilde{\mathbf{P}}^c(\vec{\theta}) \geq \mathbf{P}_3^c$. Because $\mathbf{P}^c(\vec{\theta}) \geq \mathbf{P}_3^c$, there exists a classical state with $A_x(\theta) =: r\cos(\theta-\phi)$ such that
    $\Theta[A_x(\theta_0)] = \Theta[A_x(\theta_1)] = 1$ and $\Theta[A_x(\theta_2)] = 0$. This implies that $r\cos(\theta_0-\phi)>0$, $r\cos(\theta_1-\phi) > 0$, and $r\cos(\theta_2) < 0$ for all $r>0$. Therefore, for any choice of $0 \leq \varepsilon_- \leq \varepsilon_+ < \infty$, we can always find an $r$ large enough such that $r\cos(\theta_0-\phi) > \varepsilon_+$, $r\cos(\theta_1-\phi) > \varepsilon_+$, and $r\cos(\theta_2-\phi) < -\varepsilon_-$.
    For such an $r$, we have $\widetilde{\Theta}[A_x(\theta_0)] = \widetilde{\Theta}[A_x(\theta_1)] = 1$ and $\widetilde{\Theta}[A_x(\theta_2)] = 0$, which therefore implies that $\mathbf{P}_3^c = \sum_{k=0}^3 \widetilde{\Theta}[A_x(\theta_k)]/3 \leq \widetilde{\mathbf{P}}^c(\vec{\theta})$.

    Next, let us show that $\widetilde{\mathbf{P}}^c(\vec{\theta}) \leq \mathbf{P}_3^c$ using a proof by contradiction. First, assume that $\widetilde{\mathbf{P}}^c(\vec{\theta}) > \mathbf{P}_3^c$, which either means that  $\widetilde{\mathbf{P}}^c(\vec{\theta}) = 1$ or $\widetilde{\mathbf{P}}^c(\vec{\theta}) = (1+1+1/2)/3 = 5/6$. The first is not possible, since
    \begin{equation}
    \begin{aligned}
        \widetilde{\mathbf{P}}^c(\vec{\theta}) = 1
        &\implies \forall k : A_x(\theta_k) > \varepsilon_+ \geq 0 \\
        &\implies \forall k : \Theta[A_x(\theta_k)] = 1 \\
        &\implies \mathbf{P}^c(\vec{\theta}) = 1,
    \end{aligned}
    \end{equation}
    which contradicts $\mathbf{P}^c(\vec{\theta}) = \mathbf{P}_3^c$.

    Assuming instead that $\widetilde{\mathbf{P}}^c(\vec{\theta}) = 5/6$, there must exist a classical state such that $\widetilde{\Theta}[A_x(\theta_{k_-})] = 1/2$ for some $k_-$ and $\forall k\neq k_- : \widetilde{\Theta}[A_x(\theta_{k})] = 1 \implies \Theta[A_x(\theta_{k})] = 1$. The requirement that $\mathbf{P}^c(\vec{\theta}) = \mathbf{P}_3^c$ further implies that $\Theta[A_x(\theta_{k_-})] = 0$.

    Without any loss of generality, let $k_- = 2$ up to an offset of the measured angles, and rewrite $A_x(\theta) = r\cos(\theta-\phi)$ in polar coordinates as before. From Theorem~\ref{thm:classical-bound}, we also have
    \begin{equation}\label{eq:three-angle-req}
    \begin{gathered}
        0 \leq \theta_1-\theta_0 \leq \pi,\\
        0 \leq \theta_2-\theta_1 \leq \pi,\\
        \pi \leq \theta_2-\theta_0 \leq 2\pi,
    \end{gathered}
    \end{equation}
    which are required in order to satisfy $\mathbf{P}^c(\vec{\theta}) = \mathbf{P}_3^c$. Now, $0 \leq \theta_1-\theta_0 \leq \pi$ implies that the trajectory $r\cos(\theta-\phi)$ for $\theta_0 \leq \theta \leq \theta_1$ is at most a semicircular path in $A_x$--$A_y$ space, and thus cannot exit and reenter the $A_x > \varepsilon_+$ plane. Since $\widetilde{\Theta}[A_x(\theta_0)] = \widetilde{\Theta}[A_x(\theta_1)] = 1$, this means that the entire trajectory must remain in the $A_x > \varepsilon_+$ plane, and therefore $\forall \theta_0 \leq \theta \leq \theta_1: r\cos(\theta-\phi) > \varepsilon_+$.

    Finally, rearranging the last two lines of Eq.~\eqref{eq:three-angle-req}, we have $\theta_0 \leq \theta_2 - \pi \leq \theta_1$. Therefore,
    \begin{equation}
    \begin{aligned}
    &\begin{aligned}
        \varepsilon_+ &< r\cos(\theta_2-\pi-\phi) \\
        &= -r\cos(\theta_2-\phi) = -A_x(\theta_2)
    \end{aligned} \\
    &\implies A_x(\theta_2) < - \varepsilon_+ \leq -\varepsilon_- \\
    &\implies \widetilde{\Theta}[A_x(\theta_2)] = 0,
    \end{aligned}
    \end{equation}
    which contradicts the assumption that $\widetilde{\Theta}[A_x(\theta_{k_-})] = 1/2 \land \forall k \neq k_- : \widetilde{\Theta}[A_x(\theta_{k})] = 1$, which in turn contradicts $\widetilde{\mathbf{P}}^c(\vec{\theta}) = 5/6$. Taking the contrapositive implications, this means that $\widetilde{\mathbf{P}}^c(\vec{\theta}) \leq \mathbf{P}_3^c$, which together with $\widetilde{\mathbf{P}}^c(\vec{\theta}) \geq \mathbf{P}_3^c$ completes the proof.
\end{proof}

We can recast the above in terms of the coarse-grained measurements specified by the end points $\{a_{n}\}_{n \in \mathbb{Z}}$, where the $n$th bin contains the outcomes $a_{n} \leq A_x(\theta) < a_{n+1}$. We shall label the $0$th bin as the bin such that $a_{0} \leq 0 < a_{1}$, and the $n_+$th bin as the one that satisfies
\begin{equation}\label{eq:bin-epsilon}
a_{n_++1} - a_1 > a_{1}-a_0 \implies a_{\mathbf{n}+1} > 2a_1 - a_0 \geq - a_0.
\end{equation}
Then, if $A_x(\theta)$ falls somewhere within the $0$th to $\mathbf{n}$th bins, it must be that $a_0 \leq A_x(\theta) \leq a_{\mathbf{n}+1}$. Setting $\varepsilon_- := -a_0$ and $\varepsilon_+ := a_{\mathbf{n}+1}$, we have $0 \leq \varepsilon_- \leq \varepsilon_+$ due to Eq.~\eqref{eq:bin-epsilon}, as required by Theorem~\ref{thm:finitestats}. Therefore, by setting
\begin{equation}
    \widetilde{\Theta}[A_x(\theta)] := \begin{cases}
        1 & \text{if $A_x(\theta) \in n$th bin with $n > \mathbf{n}$,} \\
        1/2 & \text{if $A_x(\theta) \in n$th bin with $0 \leq n \leq \mathbf{n}$,} \\
        0 & \text{otherwise,}
    \end{cases}
\end{equation}
like in the main text, Theorem~\ref{thm:finitestats} implies that $\widetilde{\mathbf{P}}^c(\vec{\theta}) = \mathbf{P}_3^c$. We drop the tildes in the main text, where it is understood that the replacement $\Theta[A_x(\theta)] \to \widetilde{\Theta}[A_x(\theta)]$ is performed when implementing the protocol.

\section{\label{apd:upper-bound-general}Rigorous Upper Bound for Precession Protocol With More Angles}
From Eq.~\eqref{eq:heaviside-Wigner}, the Wigner function of $A(\vec{\theta}) := (Q(\vec{\theta})-\frac{K+1}{2K})(Q(\vec{\theta})-\frac{K-1}{2K})$ for $K Q(\vec{\theta}) := \sum_{k=0}^{K-1}\Theta[X(\theta_k)]$ is
\begin{equation}
    W_{A(\vec{\theta})}(x,p) = \frac{K^2-1}{4K^2} +\frac{1}{2\pi K^2}\sum_{j=0}^{K-1}\sum_{k=0}^{K-1} \si\bqty{ \frac{2 x(\theta_j) x(\theta_k)}{\abs{\sin(\theta_j-\theta_k)}} }.
\end{equation}
For the region where $\mathbf{P}^c(\vec{\theta} = (1+1/K)/2$, $(K-1)/2$ of the angles are always on one side and $(K+1)/2$ of the angles on the other, so $(K-1)/2\cdot (K-1)/2 + (K-1)/2 \cdot (K-1)/2 = (K^2-1)/2$ of the terms $x(\theta_j)x(\theta_k)$ will be negative, for which
\begin{equation}
    \si\bqty{ \frac{2 x(\theta_j) x(\theta_k)}{\abs{\sin(\theta_j-\theta_k)}} } = - \pi - \si\abs{\frac{2 x(\theta_j) x(\theta_k)}{\sin(\theta_j-\theta_k)}}.
\end{equation}
\begin{widetext}
Therefore, the Wigner function can be simplified to
\begin{equation}
\begin{aligned}
    W_{A(\vec{\theta})}(x,p) &= \frac{1}{2\pi K^2}\sum_{j=0}^{K-1}\sum_{k=0}^{K-1} \sgn[x(\theta_j)x(\theta_k)] \si\abs{\frac{2 x(\theta_j) x(\theta_k)}{\sin(\theta_j-\theta_k)} } \\
    W_{A(\vec{\theta})}(r\cos\phi,r\sin\phi) &= \frac{1}{2\pi K^2}\sum_{j=0}^{K-1}\sum_{k=0}^{K-1} \sgn[\cos(\phi-\theta_j)\cos(\phi-\theta_k)] \si\abs{\frac{2 r^2 \cos(\phi-\theta_j) \cos(\phi-\theta_k)}{\sin(\theta_j-\theta_k)} }.
\end{aligned}
\end{equation}
Hence, ${\tr}[A(\vec{\theta})^2]$ is
\begin{equation}\label{eq:upper-bound-general-angle-integral}
\begin{aligned}
    \tr[A(\vec{\theta})^2] &= \frac{1}{16\pi^3 K^4} \int_{-\pi}^\pi\dd{\phi} \int_{0}^\infty\dd{[r^2]} \sum_{j=0}^{K-1}\sum_{k=0}^{K-1}\sum_{l=0}^{K-1}\sum_{m=0}^{K-1}
    \sgn[\cos(\phi-\theta_j)\cos(\phi-\theta_k)\cos(\phi-\theta_l)\cos(\phi-\theta_m)] \\
    &\hspace{20em}{}\times{}\si\abs{\frac{2 r^2 \cos(\phi-\theta_j) \cos(\phi-\theta_k)}{\sin(\theta_j-\theta_k)} }\si\abs{\frac{2 r^2 \cos(\phi-\theta_l) \cos(\phi-\theta_m)}{\sin(\theta_l-\theta_m)} } \\
    &= \frac{1}{64\pi^2 K^4} \int_{-\pi}^\pi\dd{\phi}\sum_{j=0}^{K-1}\sum_{k=0}^{K-1}\sum_{l=0}^{K-1}\sum_{m=0}^{K-1} \sgn[\cos(\phi-\theta_j)\cos(\phi-\theta_k)\cos(\phi-\theta_l)\cos(\phi-\theta_m)] \\
    &\hspace{20em}{}\times{}\min\Bqty{
    \abs{\frac{\sin(\theta_j-\theta_k)}{\cos(\phi-\theta_j) \cos(\phi-\theta_k)} },\abs{\frac{\sin(\theta_l-\theta_m)}{\cos(\phi-\theta_l) \cos(\phi-\theta_m)} }
    }.
\end{aligned}
\end{equation}
\end{widetext}
Now, the remaining integral involves an integrand that is the sum of piecewise smooth functions: apart from nondifferentiable points whenever the sign function flips signs or the minimum function swaps between its arguments, the function is simply the reciprocal of two cosines, which is smooth within the nondifferentiable points.

We can therefore split the integral as $-\pi \leq \phi \leq \pi = \sum_{n=0}^{|\Phi|+1} \int_{\phi_{n}}^{{\phi}_{n+1}}$, where $\phi_0 =-\pi$, $\phi_{|\Phi|+1} = \pi$, and $\phi_n \in \Phi$ is the $n$th smallest element of the set
\begin{equation}
\begin{aligned}
    \Phi = \Bigg\{
    \phi {}:{} & |\phi| \leq \pi : \exists j,k,l,m : \cos(\phi-\theta_k) = 0 \\[-1ex]
    &{}\lor{} \pm \frac{|\sin(\theta_j-\theta_k)|}{\cos(\phi-\theta_j)\cos(\phi-\theta_k)} \\
    &\qquad{}={} \frac{|\sin(\theta_l-\theta_m)|}{\cos(\phi-\theta_l)\cos(\phi-\theta_m)}
    \Bigg\}.
\end{aligned}
\end{equation}
Here, $\Phi$ includes all points $\phi$ at which the $\sgn$ and $\min$ functions are nondifferentiable. Note that closed-form expressions of the elements of $\Phi$ can be found---the condition $\cos(\phi-\theta_k)=0 \implies \phi = \theta_k + (2l+1)\pi/2$ for integer $l$, and the condition involving reciprocals can be rewritten into a quadratic polynomial of $\tan\phi$. This means that we can rewrite Eq.~\eqref{eq:upper-bound-general-angle-integral} as a sum of integrals over smooth functions.

Within each integral, after taking care to use trigonometric identities to regulate the terms whose denominators go to zero in the same way as Eq.~\eqref{eq:upper-bound-angle-integral}, each integral can be analytically evaluated to be
\begin{equation}
\begin{aligned}
    &\int_{\phi_n}^{\phi_{n+1}}\dd{\phi} \frac{\pm \sin(\theta_j-\theta_k)}{\cos(\phi-\theta_j)\cos(\phi-\theta_k)} \\
    &\qquad{}={} \pm\log\!\bqty{\frac{{\cos}(\phi_{n+1}-\theta_j){\cos}(\phi_{n}-\theta_k)}{{\cos}(\phi_{n+1}-\theta_k){\cos}(\phi_{n}-\theta_j)}}.
\end{aligned}
\end{equation}
This means that, in principle, closed-form solutions of Eq.~\eqref{eq:upper-bound-general-angle-integral} can be evaluated analytically. In practice, however, the obtained expressions become very cumbersome for $K > 3$. Nonetheless, this shows that the value obtained from numerical integration of Eq.~\eqref{eq:upper-bound-general-angle-integral} is reliable.

Finally, the expression ${\tr}[A(\vec{\theta})^2] \geq 2[(\mathbf{P}^\infty(\vec{\theta})-\tfrac{1}{2})^2-(1/2K)^2]^2$ can be rearranged to find
\begin{equation}
    \mathbf{P}^{\geq}(\vec{\theta}) := \frac{1}{2}\pqty{1 + \frac{1}{K}\sqrt{1 + 2K^2\sqrt{2{\tr}[A(\vec{\theta})^2]}}} \geq \mathbf{P}^{\infty}(\vec{\theta}).
\end{equation}

\section{\label{apd:spin-argument-convergence}Location of Local Maxima for Precession Protocol With More Angles}
In this section, we compare the location of the local maxima of the precession protocol with $K$ angles to the resonant angles
$\vec{\theta}^{(j)}_{\Delta} = ( \pi n_k/j )_{k=1}^{K-1}$ defined analogously to the $K=3$ case, where $\{n_k\}_{k=1}^{K-1}$ are integers such that
\begin{equation}
0 < n_1 < \dots < n_{\frac{K-1}{2}} \leq \lfloor j \rfloor < n_{\frac{K+1}{2}} < \dots \leq K-1.
\end{equation}
In Fig.~\ref{fig:K5-parameter}, we plot the distance between $\vec{\theta}$ and the closest mixture $\vec{\theta}_{\lambda}^{(j)} = (1-\lambda)\vec{\theta}^{(j)}_{\Delta} + \lambda \vec{\theta}_K$ as we increase the number of iterations of the gradient descent algorithm for $j \in \{7/2,11/2\}$. Many initial points converge to $\vec{\theta}_{\lambda}^{(j)}$, which shows that they are still local maxima even in the $K=5$ case. However, we also observe local maxima that are not of the form $(1-\lambda)\vec{\theta}^{(j)}_{\Delta} + \lambda \vec{\theta}_K$.

\begin{figure}
    \centering
    \includegraphics[width=\columnwidth]{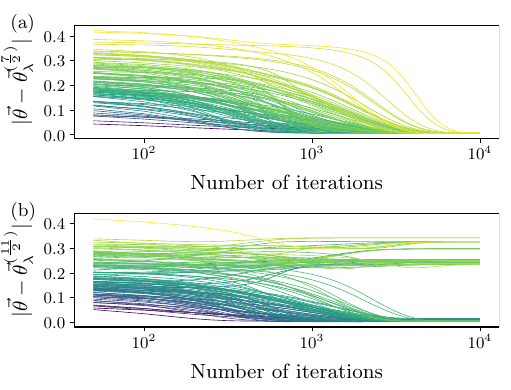}
    \caption{\label{fig:K5-parameter}\textbf{Distance of probing angles $\vec{\theta}$ to closest mixture $\vec{\theta}_\lambda^{(j)} =  (1-\lambda)\vec{\theta}_{\Delta}^{(j)} + \lambda\vec{\theta}_K$ against the number of iterations of the gradient descent, with $K=5$.} This is done for (a) $j=7/2$ and (b) $j=11/2$. The location of many local peaks are close to $\vec{\theta}_\lambda^{(j)}$, although there are some local peaks that are not.}
\end{figure}

\begin{figure}
    \centering
    \includegraphics[width=\columnwidth]{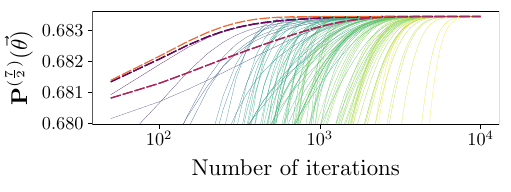}
    \caption{\label{fig:K5-initial-point}\textbf{Convergence to local maximum with the number of iterations.} By choosing the initial parameters $(\vec{\theta}^{(j)}_{\Delta} + \vec{\theta}_K)/2$ for the gradient descent, shown with bolder lines, large violations can be found much faster than choosing random initial parameters.}
\end{figure}

That said, this observation helps us choose good initial points for performing the gradient descent. In Fig.~\ref{fig:K5-initial-point}, we contrast random initial points with initial points of the form $(\vec{\theta}^{(j)}_{\Delta} + \vec{\theta}_K)/2$. We find that the chosen initial points already start with large violations, and converge much faster to the local maximum than randomly chosen parameters. This aids in finding states and parameters $\vec{\theta}$ that obtain large scores $\mathbf{P}^{(j)}(\vec{\theta})$, similar to those prepared in recent experimental implementations of the precession protocol \cite{tsirelson-experiment}.


\end{document}